\documentclass[a4paper,UKenglish,cleveref,autoref,thm-restate]{lipics-v2021}

\pdfoutput=1 \hideLIPIcs  

\usepackage[utf8]{inputenc}
\usepackage{latexsym}
\usepackage{amsmath}
\usepackage{amsthm}
\usepackage{amssymb}
\usepackage{mathtools}
\usepackage{graphicx}
\usepackage{tcolorbox}

\graphicspath{{./graphics/}}

\bibliographystyle{plainurl}

\title{Graph Similarity Based on Matrix Norms}

\author{Timo Gervens}{RWTH Aachen University,
  Germany}{gervens@informatik.rwth-aachen.de}{https://orcid.org/0000-0002-1224-9853}{This work was
  supported by the DFG grant GR 1492/20-1.}

\author{Martin Grohe}{RWTH Aachen University, Germany}{grohe@informatik.rwth-aachen.de}{https://orcid.org/0000-0002-0292-9142}{}

\authorrunning{T. Gervens and M. Grohe} 

\Copyright{Timo Gervens and Martin Grohe} 

\ccsdesc[500]{Mathematics of computing~Graph theory}
\ccsdesc[300]{Theory of computation~Complexity theory and logic}

\keywords{graph similarity, approximate graph isomorphism, graph matching}

\nolinenumbers

\EventEditors{John Q. Open and Joan R. Access}
\EventNoEds{2}
\EventLongTitle{42nd Conference on Very Important Topics (CVIT 2016)}
\EventShortTitle{CVIT 2016}
\EventAcronym{CVIT}
\EventYear{2016}
\EventDate{December 24--27, 2016}
\EventLocation{Little Whinging, United Kingdom}
\EventLogo{}
\SeriesVolume{42}
\ArticleNo{23}

\usepackage{cite}
\makeatletter
\newcommand{\citecomment}[2][]{\citen{#2}#1\citevar}
\newcommand{\citeone}[1]{\citecomment{#1}}
\newcommand{\citetwo}[2][]{\citecomment[,~#1]{#2}}
\newcommand{\citevar}{\@ifnextchar\bgroup{;~\citeone}{\@ifnextchar[{;~\citetwo}{]}}}
\newcommand{\citefirst}{\@ifnextchar\bgroup{\citeone}{\@ifnextchar[{\citetwo}{]}}}
\newcommand{\cites}{[\citefirst}
\makeatother

\usepackage{tikz}
\usetikzlibrary{positioning}

\newcommand{\refsec}[1]{Section~\ref{sec:#1}}

\renewcommand{\refeq}[1]{Equation~\ref{eq:#1}}
\newcommand{\reffig}[1]{Figure~\ref{fig:#1}}

\newcommand{\refthm}[1]{Theorem~\ref{thm:#1}}
\newcommand{\reflem}[1]{Lemma~\ref{lem:#1}}

\newcommand{\refdef}[1]{Definition~\ref{def:#1}}

\newcommand{\refapx}[1]{Appendix~\ref{apx:#1}}

\newcommand{\N}{\ensuremath{\mathbb{N}}}

\newcommand{\R}{\ensuremath{\mathbb{R}}}

\newcommand{\abs}[1]{\ensuremath{\left\vert{#1}\right\vert}}
\newcommand{\norm}[1]{\ensuremath{\left\lVert{#1}\right\rVert}}

\DeclareMathOperator*{\argmin}{argmin}
\DeclareMathOperator*{\argmax}{argmax}

\usepackage{xspace}

\newcommand*{\hamcycle}{\textsc{Ham-Cycle}\@\xspace}
\newcommand*{\threepartition}{\textsc{Three-Partition}\@\xspace}
\newcommand*{\maxcut}{\textsc{Max-Cut}\@\xspace}

\DeclareMathOperator*{\dist}{dist}
\newcommand*{\distp}{\dist\nolimits_p}

\newcommand*{\distn}[1]{\ensuremath{\dist\nolimits_{#1}}}
\DeclareMathOperator*{\cdist}{C-dist}
\newcommand*{\cdistp}{\cdist\nolimits_p}

\DeclareMathOperator*{\boundd}{bound}

\newcommand*{\fboundp}{\ensuremath{\boundd\nolimits_p}}

\newcommand*{\boundp}[1]{\ensuremath{\boundd\nolimits_p\hspace{-0.35ex}\left({#1}\right)}}
\newcommand*{\boundn}[2]{\ensuremath{\boundd\nolimits_{#1}\hspace{-0.3ex}\left({#2}\right)}}
\newcommand*{\mc}{\upshape{MC}\@\xspace}
\DeclareMathOperator*{\fmc}{MC}
\newcommand*{\mmc}{\upshape{MMC}\@\xspace}
\DeclareMathOperator*{\fmmc}{MMC}

\newenvironment{prblm}[1]{\begin{center}
    \ifthenelse{\equal{#1}{}}{\begin{tcolorbox}[width=\textwidth-4em,colback=black!10,boxrule=1pt]}{\begin{tcolorbox}[width=\textwidth-4em,title=#1,coltitle=white,colbacktitle=black!50,colback=black!10,boxrule=1pt]}
      \begin{description}
      }{
      \end{description}
    \end{tcolorbox}
  \end{center}
}

\newcommand{\nprob}[3]{
  \begin{prblm}{#1}
    \item[Instance:] #2
    \item[Problem:] #3
  \end{prblm}
}

\newcommand{\CG}{{\mathcal G}}
\newcommand{\CS}{{\mathcal S}}

\DeclareMathOperator{\sg}{sg}
\DeclareMathOperator{\Inj}{Inj}
\DeclareMathOperator{\abso}{abs}

\newcommand{\Nat}{{\mathbb N}}
\newcommand{\PNat}{{\mathbb N}_{>0}}
\newcommand{\Real}{{\mathbb R}}
\newcommand{\NNReal}{{\mathbb R}_{\ge0}}
\newcommand{\PReal}{{\mathbb R}_{>0}}

\newcommand*{\Distp}{$\textup{\textsc{Dist}}_p$\@\xspace}
\newcommand*{\Distf}{$\textup{\textsc{Dist}}_F$\@\xspace}
\newcommand*{\Dist}[1]{$\textup{\textsc{Dist}}_{#1}$\@\xspace}

\newcommand*{\CDistp}{$\textup{\textsc{Color-Dist}}_p$\@\xspace}

\newcommand{\NP}{\textsf{\upshape NP}}
\newcommand{\PTIME}{\textsf{\upshape P}}
\renewcommand{\vec}[1]{\boldsymbol{#1}}

\begin{document}

\maketitle

\begin{abstract}
  Quantifying the similarity between two graphs is a fundamental
  algorithmic problem at the heart of many data analysis tasks for
  graph-based data. In this paper, we study the computational
  complexity of a family of similarity measures based on quantifying
  the mismatch between the two graphs, that is, the ``symmetric
  difference'' of the graphs under an optimal alignment of the vertices. An important
  example is similarity based on graph edit distance. While edit
  distance calculates the ``global'' mismatch, that is, the number of
  edges in the symmetric difference, our main focus is on ``local''
  measures calculating the maximum mismatch per vertex.

Mathematically, our similarity measures are best expressed in terms of
the adjacency matrices: the mismatch between graphs is expressed as
the difference of their adjacency matrices (under an optimal
alignment), and we measure it by applying some matrix norm. Roughly
speaking, global measures like graph edit distance correspond to
entrywise matrix norms like the Frobenius norm and local measures
correspond to operator norms like the spectral norm.

We prove a number of strong NP-hardness and inapproximability results
even for very restricted graph classes such as bounded-degree trees.

\end{abstract}

\section{Introduction}
Graphs are basic models ubiquitous in science and engineering.  They
are used to describe a diverse range of objects and processes from
chemical compounds to social interactions. To understand and classify
graph models, we need to compare graphs. Since data and models are not
always guaranteed to be exact, it is essential to understand what
makes two graphs similar or dissimilar, and to be able to compute
similarity efficiently. There are many different approaches to
similarity, for example, based on edit-distance
(e.g.~\cite{ApproxGI,bun97,HardnessRGI}), spectral similarity (e.g.~\cite{tstmotkar+18,SpectralRGI,ume88}),
optimal transport (e.g.~\cite{BarbeSGBG20,MareticGCF19,Memoli11}), or
behavioral equivalence (e.g.\ \cite{gla01,san12}).
This is only natural, because the choice of a ``good'' similarity
measure will usually depend on the application. While graph
similarity has received considerable attention in application areas
such as computer vision (see, for example, \cite{ThirtyYears,fogperven14}) and network
science (see, for example, \cite{emmdehshi16}), theoretical computer
scientists have not explored similarity systematically; only
specific ``special cases'' such as isomorphism \cite{groschwe20}
and bisimilarity \cite{san12} have been studied to great depth. Yet it seems
worthwhile to develop a \emph{theory of graph similarity} that
compares different similarity measures, determines their algorithmic
and semantic properties, and thus gives us a better understanding of
their suitability for various kinds of applications. We see our paper
as one contribution to such a theory.

Maybe the simplest graph similarity measure is based on graph edit
distance: the \emph{edit distance} between graphs $G,H$ (for
simplicity of the same order) is the minimum number of edges that need
to be added and deleted from $G$ to obtain a graph ismorphic to $H$.

In this paper, we study the computational complexity of a natural
class of similarity measures that generalize the similarity derived
from edit distance of graphs. In general, we view similarity as
proximity with respect to some metric. A common way of converting a
graph metric $d$ into a similarity measure $s$ is to let
$s(G,H)\coloneqq\exp(-\beta\cdot d(G,H))$ for some constant
$\beta>0$. For our considerations the transformation between distance
and similarity is irrelevant, so we focus directly on the metrics. In
terms of computational complexity, computing similarity tends to be a
hard algorithmic problem. It is known that computing edit distance,
exactly or approximately, is \NP-hard
\cite{ApproxGI,Grohe,lin94,HardnessRGI} even on very restricted graph
classes. In fact, the problem is closely related to the quadratic
assignment problem \cite{makmansvi14,nagsvi09}, which is notorious for
being a very hard combinatorial optimization problem also for
practical, heuristic approaches. Within the spectrum of similarities,
the ``limit'' case of graph isomorphism shows that overall the
complexity of graph similarity is far from trivial.

The metrics we study in this paper
are based on minimizing the
{mismatch} between two graphs.
For graphs $G,H$ with the same
vertex set $V$, we define their \emph{mismatch graph} $G-H$ to be the
graph with vertex set $V$ and edge set $E(G)\triangle E(H)$, the symmetric
difference between the edge sets of the graphs. We assign \emph{signs}
to edges of the mismatch graph to indicate which graph they come from,
say, a positive sign to the edges in $E(G)\setminus E(H)$ and a
negative sign to the edges in $E(H)\setminus E(G)$. To quantify the
mismatch between our graphs we introduce a \emph{mismatch norm} $\mu$ on
signed graphs that satisfies a few basic axioms such as
subadditivity as well as invariance under permutations and under
flipping the signs of all edges. Now for graphs $G,H$, not
necessarily with the same vertex set, but for simplicity of the same
order,\footnote{A general definition that also applies to graphs of
  distinct order can be found in
Section~\ref{sec:mismatch}, but for the hardness results we prove in
this paper we can safely restrict our attention to graphs of the same
order; this only makes the results stronger.} we define the distance $\dist_\mu(G,H)$
to be the minimum of $\mu(G^\pi-H)$, where $\pi$ ranges over all
bijective mappings from $V(G)$ to $V(H)$ and $G^\pi$ is the image of
$G$ under $\pi$.

The simplest mismatch norm $\mu_{\textup{ed}}$ just counts the number
of edges of the mismatch graph $G-H$, ignoring their signs. Then the
associated distance $\dist_{\textup{ed}}(G,H)$ is the edit distance
between $G$ and $H$. (Note that we write $\dist_{\textup{ed}}$ instead
of the clunky $\dist_{\mu_{\textup{ed}}}$; we will do the same for
other mismatch norms discussed here.) Another simple yet interesting
mismatch norm is $\mu_{\textup{deg}}$ measuring the maximum degree of
the mismatch graph, again ignoring the signs of the edges. Then
$\dist_{\textup{deg}}(G,H)$ measures how well we can align the two
graphs in order to minimize the ``local mismatch'' at every
node. Hence an alignment where at every vertex there is a mismatch of
one edge yields a smaller $\dist_{\textup{deg}}$ than an alignment
that is perfect at all nodes expect one, where it has a mismatch of,
say, $n/2$, where $n$ is the number of vertices. For edit distance it
is the other way round. Depending on the application, one or the other
may be preferable.
Another well-known graph metric that can be described via
the mismatch graph is Lov\'asz's \emph{cut distance} (see
\cite[Chapter~8]{lov12} and Section~\ref{sec:mismatch} of this
paper). And, last but not least, for the mismatch norm
$\mu_{\textup{iso}}$ defined to be $0$ if the mismatch graph has no
edges and $1$ otherwise, $\dist_{\textup{iso}}(G,H)$ is $0$ if $G$ and
$H$ are isomorphic, and $1$ otherwise, so computing the distance
between two graphs amounts to deciding if they are isomorphic.

Mathematically, the framework of mismatch norms and the associated
distances is best described in terms of the adjacency matrices of the
graphs; the adjacency matrix $A_{G-H}$ of the mismatch graph (viewed
as a matrix with entries $0,+1,-1$ displaying the signs of the edges)
is just the difference $A_G-A_H$. Then mismatch norms essentially are
just matrix norms applied to to $A_{G-H}$. It turns out that ``global
norms'' such as edit distance and direct generalizations correspond to
entrywise matrix norms (obtained by applying a vector norm to the
flattened matrix), and ``local norms'' such as $\mu_{\textup{deg}}$
correspond to operator matrix norms (see
Section~\ref{sec:mismatch}). Cut distance corresponds to the cut norm
of matrices. Instead of adjacency matrices, we can also consider the
Laplacian matrices, exploiting that $L_{G-H}=L_G-L_H$, and obtain
another interesting family of graph distances.

For every mismatch norm $\mu$, we are interested in the problem
$\textsc{Dist}_\mu$ of computing $\dist_\mu(G,H)$ for two given graphs
$G,H$. Note that this is a minimization problem where the feasible
solutions are the bijective mappings between the vertex sets of the
two graphs. It turns out that the problem is hard for most mismatch
norms $\mu$, in particular almost all the natural choices discussed
above. The single exception is $\mu_{\text{iso}}$ related to the graph
isomorphism problem. Furthermore, the hardness results usually hold
even if the input graphs are restricted to be very simple, for example
trees or bounded degree graphs. For edit distance and the related
distance based on entrywise matrix norms this was already known
\cite{ApproxGI,Grohe}. Our focus in this paper is on operator
norms. We prove a number of hardness results for different graph
classes. One of the strongest ones is the following
(see~\refthm{MuliplicativeApproxThreshold}). Here \Distp\ denotes the
distance measure derived from the $\ell_p$-operator norm.

\begin{theorem}
  For $1\le p\le\infty$
  there is a constant $c>1$ such that, unless
  $\textsf{\upshape P}=\textsf{\upshape NP}$,
  \Distp\ has no factor-$c$ approximation algorithm even if the input
  graphs are restricted to be trees of bounded degree.
\end{theorem}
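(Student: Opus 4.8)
The plan is to give a gap-preserving polynomial reduction from \threepartition, which is strongly \NP-complete, to the threshold version of \Distp restricted to bounded-degree trees. The reason \threepartition is the right source is that, being \emph{strongly} \NP-hard, we may assume the input integers are polynomially bounded, so each numeric value can be realized by a gadget of polynomial size, and the magnitudes can be encoded as path lengths or as controlled pendant-leaf counts rather than as raw vertex degrees—this is what keeps all degrees bounded. The quantitative core is the following observation about the mismatch patterns the reduction will produce, namely disjoint unions of stars: the $\ell_p$-operator norm of a block-diagonal matrix is the maximum of its blocks, and a star $K_{1,s}$ satisfies $\norm{A_{K_{1,s}}}_p=s^{e(p)}$ with $e(p)=\max\{1/p,1-1/p\}\in[\tfrac12,1]$; moreover, since the operator norm dominates that of every principal submatrix, merely \emph{containing} an induced star $K_{1,s}$ already forces norm at least $s^{e(p)}$. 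Hence it suffices to build instances whose good-case mismatch graph is a union of stars of size at most a constant $s_0$ while the bad case contains an induced star of size $s_0+1$: the distance then jumps from at most $s_0^{e(p)}$ to at least $(s_0+1)^{e(p)}$, a ratio at least $((s_0+1)/s_0)^{1/2}>1$ that is bounded away from $1$ uniformly over all $p\in[1,\infty]$.

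Given an instance $a_1,\dots,a_{3m}$ with $\sum_i a_i=mB$ and $B/4<a_i<B/2$, I would build two trees $G$ and $H$ sharing a common rigid skeleton—a bounded-degree router tree leading to $m$ ``bin'' sites and $3m$ ``item'' sites—and differing only in local gadgets, with magnitudes encoded as path lengths so that degrees stay bounded. In $H$ each bin site is filled to capacity $B$; in $G$ the $3m$ item gadgets of sizes $a_i$ are present but unplaced, so that choosing the bijection $\pi$ amounts to routing each item gadget to a bin. The completeness direction is then direct: from a valid partition into $m$ triples summing to $B$, route the three items of each triple to the corresponding bin; since each bin load equals $B$, every bin junction agrees with $H$ up to the unavoidable fixed structural overlap, the mismatch graph is a union of stars of size at most the gadget constant $s_0$, and $\distp(G,H)\le s_0^{e(p)}$.

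For soundness I argue in two stages. First, a rigidity lemma: the skeleton is assembled from sufficiently long, pairwise distinguishable paths so that any bijection failing to map the skeleton of $G$ onto that of $H$ creates mismatch far exceeding $(s_0+1)^{e(p)}$ and hence cannot be optimal; this restricts attention to skeleton-respecting alignments. Second, for any skeleton-respecting $\pi$ the only remaining freedom is the assignment of item gadgets to bins, and the mismatch degree at the junction of bin $j$ is at least the structural constant plus $\abs{\text{load}_j-B}$, where $\text{load}_j$ is the total size routed to bin $j$. The constraint $B/4<a_i<B/2$ forces any load equal to $B$ to consist of exactly three items, so an optimal skeleton-respecting $\pi$ corresponds exactly to a partition into triples; if the instance is a no-instance, some bin has $\abs{\text{load}_j-B}\ge1$, producing an induced star of size $s_0+1$ and thus $\distp(G,H)\ge(s_0+1)^{e(p)}$. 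Combining the two cases yields the claimed gap with $c=((s_0+1)/s_0)^{1/2}>1$.

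I expect the rigidity lemma—and, more delicately, the guarantee that any deviation is penalized by a clean \emph{star} rather than by some diffuse low-degree mismatch that could keep the $\ell_p$-norm small for intermediate $p$—to be the main obstacle. The subtle point is that the adversary ranges over \emph{all} bijections, including ones that split a single item across several bins or overlay the skeleton only partially; I must show that every such deviation either concentrates enough excess degree at a single junction to form an induced star of size $\ge s_0+1$ or else already breaks the skeleton and is dispatched by the rigidity lemma. Engineering the bin and item gadgets so that misplacement is forced to manifest as surplus degree at one junction, while keeping the good-case mismatch a bounded union of stars so that the upper bound $s_0^{e(p)}$ holds for \emph{every} $p$ at once, is where the construction must be designed with care.
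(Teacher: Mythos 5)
Your overall strategy coincides with the paper's: the theorem is Theorem~\ref{thm:MuliplicativeApproxThreshold}(3), obtained from a gap reduction of \threepartition to \Distp on bounded-degree trees (\refthm{NPhardBoundedTree}), with item sizes encoded in unary as path lengths, and with exactly your quantitative toolkit---the block-diagonal maximum (\reflem{MMR}), the star value $\boundp{s}=\max\left(s^{1/p},s^{1-1/p}\right)$ together with the degree-$c$ lower bound (\reflem{MCG}), and the union-of-stars upper bound (\reflem{MMCupperG})---instantiated with $s_0=2$, yielding the gap $\boundp{3}/\boundp{2}$, which is bounded away from $1$ uniformly in $p$. One simplification you are entitled to: the bad case needs no \emph{induced star} and no submatrix argument; a single vertex with $c$ mismatches already forces $\mu_p\geq\boundp{c}$ by the column/row-sum argument in \reflem{MCG}, so your worry about ``diffuse'' mismatches in the soundness direction is unfounded.

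However, the proposal has a genuine gap precisely where you flag it, and the mechanisms you sketch would not close it. First, rigidity via ``sufficiently long, pairwise distinguishable paths'' fails: an adversary can overlay one bare path on another with an offset, or break an item path in two and route the pieces to different bins, at a cost of only $O(1)$ mismatches spread over vertices of bounded mismatch degree---possibly no more than $s_0$---so path length alone does not penalize splitting. The paper's fix is not global rigidity but local degree calibration: leaf decorations give every inner path node of $T_1$ degree $8$ while no node of $T_2$ exceeds degree $6$, so \emph{every} alignment with maximum mismatch count below $3$ must map black nodes to black nodes and preserve path adjacency, and each inner node already carries exactly $2$ built-in mismatches; any contiguity break (in particular any split of an item across bins) then hands some inner node a third mismatch, while the strict window $A/4<a_i<A/2$ converts exact packings into triples. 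Second, your soundness estimate ``junction mismatch degree $\geq$ structural constant plus $\abs{\mathrm{load}_j-B}$'' conflates two encodings: with sizes realized as path lengths, overload does not accumulate at a bin junction but surfaces as a contiguity violation at an inner path node, so no gadget realizes the claimed inequality as stated. Third, the completeness bound $\boundp{s_0}$ requires the yes-case mismatch graph to be a disjoint union of stars \emph{including} the condition that the mismatched neighbors of each degree-$2$ center have mismatch count $1$; this is exactly the pink/blue pendant bookkeeping of the paper (which also equalizes $|T_1|=|T_2|$), and without it the yes case only yields $\mu_p\leq\operatorname{MMC}(\pi)=2$ via \reflem{MMCupper2}, which for $p$ near $2$ exceeds the no-case bound $\boundp{3}=\sqrt{3}$ and destroys the gap entirely. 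So the deferred ``engineering'' is not a detail: it is the step on which the uniform-in-$p$ gap stands or falls.
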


For details and additional results, we refer to Section~\ref{sec:ComplexityOperatorNorms} and \ref{sec:approx}.

Initially we aimed for a general hardness result that applies to all
mismatch norms satisfying some additional natural conditions. However,
we found that the hardness proofs, while following the same general
strategies, usually have some intricacies exploiting special
properties of the specific norms. Furthermore, for cut distance, none
of these strategies seemed to work. Nevertheless, we were able to give
a hardness proof for cut distance (Theorem~\ref{theo:cut}) that is
simply based on the hardness of computing the cut norm of the matrix
\cite{AlonNaor06}. This is remarkable in so far as usually the hard
part of computing the distance is to find an optimal alignment $\pi$,
whereas computing $\mu(G^\pi-H)$ is usually easy. For cut norm, it is
even hard to compute $\mu(G^\pi-H)$ for a fixed alignment $\pi$.

\subsection{Related Work}

Graph similarity has mostly been studied in specific application
areas, most importantly computer vision, data mining, and machine
learning (see the references above). Of course not all similarity
measures are based on mismatch. For example, metrics derived from
vector embedding or graph kernels in machine learning (see
\cite{krijohmor19}) provide a completely different approach (see
\cite{gro20c} for a broader discussion). Of interest compared to our
work (specifically for the $\ell_2$-operator norm a.k.a.\ spectral
norm) is the spectral approach proposed by Kolla, Koutis, Madan, and
Sinop \cite{SpectralRGI}. Intuitively, instead of the ``difference''
of two graphs that is described by our mismatch graphs, their
approach is based on taking a ``quotient''.

The complexity of similarity, or ``approximate graph isomorphism'', or
``robust graph isomorphism'' has been studied in
\cite{Arora2002,ApproxGI,Grohe, Keldenich,SpectralRGI ,lin94,HardnessRGI}, mostly based on graph edit
distance and small variations. Operator norms have not been considered
in this context.

\section{Preliminaries}
\label{sec:prel}
We denote the class of real numbers by $\Real$ and the nonnegative
 and positive
reals by $\NNReal,\PReal$, respectively. By $\Nat,\PNat$ we denote the
sets of nonnegative resp.\ positive integers. For every
$n\in\PNat$ we let $[n]\coloneqq\{1,\ldots,n\}$.

We will consider matrices with real entries and with rows and columns
indexed by arbitrary finite sets. Formally, for finite sets $V,W$, a
$V\times W$ matrix is a function $A\colon V\times W\to\Real$. A
standard $m\times n$-matrix is just an $[m]\times[n]$-matrix. We
denote the set of all $V\times W$ matrices by $\Real^{V\times W}$, and we
denote the entries of a matrix $A$ by $A_{vw}$.

For a square matrix $A\in\Real^{V\times
  V}$ and injective mapping $\pi:V\to W$, we
let $A^\pi$ be the $V^\pi\times V^\pi$-matrix with entries
$A_{v^\pi w^\pi}\coloneqq A_{vw}$. Note
that we apply the mapping $\pi$ from the right and denote the image
of $v$ under $\pi$ by $v^\pi$. If $\rho:W\to X$ is another mapping, we
denote the composition of $\pi$ and $\rho$ by $\pi\rho$. We typically
use this notation for mappings between matrices and between graphs.

We use a standard graph theoretic notation. We denote the vertex set
of a graph $G$ by $V(G)$ and the edge set by $E(G)$. Graphs are
finite, simple,
and undirected, that is, $E(G)\subseteq\binom{V(G)}{2}$. We denote
edges by $vw$ instead of $\{v,w\}$. The \emph{order} of a graph is
$|G|\coloneqq |V(G)|$. The \emph{adjacency matrix} $A_G\in\{0,1\}^{V(G)\times
V(G)}$ is defined in the usual way. 
We denote the class of all graphs by $\CG$.

Let $G$ be a graph with vertex set $V\coloneqq V(G)$. For a
mapping $\pi:V\to W$ we let $G^\pi$ be the
graph with vertex set $\{v^\pi\mid v\in V\}$ and edge set
$\{v^\pi w^\pi\mid vw\in E(G)\text{ with }v^\pi\neq w^\pi\}$. 
Then $A_{G^\pi}=A_G^\pi$.

For graphs $G,H$, we denote the set of all injective mappings
$\pi:V(G)\to V(H)$ by $\Inj(G,H)$. Graphs $G$ and $H$ are
\emph{isomorphic} (we write $G\cong H$) if there is an
$\pi\in\Inj(G,H)$ such that $G^\pi=H$. We think of the 
mappings in $\Inj(G,H)$, in particular if $|G|=|H|$ and they are
bijective, as \emph{alignments} between the graphs. Intuitively, to measure the
distance between two graphs, we will align them in an optimal way to
minimize the mismatch.

\section{Graph Metrics Based on Mismatches}
\label{sec:mismatch}
A \emph{graph metric} is a function $\delta:\CG\times\CG\to\Real_{\ge 0}$
such that
\begin{description}
\item[(GM0)] $\delta(G,H)=\delta(G',H')$ for all $G,G',H,H'$ such that $G\cong G'$ and 
  $H\cong H'$; 
\item[(GM1)] $\delta(G,G)=0$ for all $G$;
\item[(GM2)] $\delta(G,H)=\delta(H,G)$ for all $G,H$;
\item[(GM3)] $\delta(F,H)\le \delta(F,G)+\delta(G,H)$ for all $F,G,H$.
\end{description}
Note that we do not require $\delta(G,H)>0$ for all $G\neq H$, not
even for $G\not\cong H$, so strictly speaking this is
just a \emph{pseudometric}. We are interested in the complexity of the
following problem:

\nprob{\Dist{\delta}}{Graphs $G,H$, $p,q\in\PNat$}{Decide if $\delta(G,H)\ge\frac{p}{q}$}

A \emph{signed graph} is a weighted graph with edge weights $-1,+1$,
and for every edge $e$ of a signed graph we denote its \emph{sign} by
$\sg(e)$. For a signed graph ${\Delta}$, we let
$E_+({\Delta})\coloneqq\{e\in E({\Delta})\mid \sg(e)=+1\}$ and
$E_-({\Delta})\coloneqq\{e\in E({\Delta})\mid \sg(e)=-1\}$. \emph{Isomorphisms} of signed graphs
must preserve signs. A signed graph ${\Delta}$ is a
subgraph of a signed graph ${\Gamma}$ (we write ${\Delta}\subseteq {\Gamma}$) if
$V({\Delta})\subseteq V({\Gamma})$, $E_+({\Delta})\subseteq E_+({\Gamma})$, and
$E_-({\Delta})\subseteq E_-({\Gamma})$. We denote the class of all signed graphs by
$\CS$.

For every ${\Delta}\in\CS$, we let $-{\Delta}$ be the signed graph obtained from ${\Delta}$
by flipping the signs of all edges. We define the \emph{sum} of
${\Delta},{\Gamma}\in\CS$ to be the signed graph ${\Delta}+{\Gamma}$ with vertex set
$V({\Delta}+{\Gamma})=V({\Delta})\cup V({\Gamma})$ and signed edge sets
$E_+({\Delta}+{\Gamma})=\big(E_+({\Delta})\cup E_+({\Gamma})\big)\setminus\big(E_-({\Delta})\cup
E_-({\Gamma})\big)$ and
$E_-({\Delta}+{\Gamma})=\big(E_-({\Delta})\cup E_-({\Gamma})\big)\setminus\big(E_+({\Delta})\cup
E_+({\Gamma})\big)$.  The adjacency matrix $A_{\Delta}$ of a signed graph ${\Delta}$
displays the signs of the edges, so
$A_{\Delta}\in \{0,1,-1\}^{V({\Delta})\times V({\Delta})}$ with $(A_{\Delta})_{vw}=\sg(vw)$ if
$vw\in E({\Delta})$ and $(A_{\Delta})_{vw}=0$ otherwise.  Note that $A_{-{\Delta}}=-A_{\Delta}$
for all ${\Delta}\in\CS$ and $A_{{\Delta}+{\Gamma}}=A_{\Delta}+A_{\Gamma}$ for all ${\Delta},{\Gamma}\in\CS$ with
$V({\Delta})=V({\Gamma})$ and 
$E_+({\Delta})\cap E_+({\Gamma})=\emptyset$ and $E_-({\Delta})\cap E_-({\Gamma})=\emptyset$.

The \emph{mismatch graph} of two graphs $G,H$ is the signed graph
$G-H$ with vertex set $V(G-H)\coloneqq V(G)\cup V(H)$ and signed edge
set $E_+(G-H)\coloneqq E(G)\setminus E(H)$,
$E_-(G-H)\coloneqq E(H)\setminus E(G)$.  Note that if $V(G)=V(H)$ then
for the adjacency matrices we have $A_{G-H}=A_G-A_H$. Observe that
every signed graph ${\Delta}$ is the mismatch graph of the graphs
${\Delta}_+\coloneqq(V({\Delta}),E_+({\Delta}))$ and
${\Delta}_-\coloneqq(V({\Delta}),E_-({\Delta}))$.

A \emph{mismatch norm} is a function $\mu:\CS\to\NNReal$
satisfying the following conditions:
\begin{description}
\item[(MN0)] $\mu({\Delta})=\mu({\Gamma})$ for all ${\Delta},{\Gamma}\in\CS$ such that ${\Delta}\cong {\Gamma}$;
\item[(MN1)] $\mu({\Delta})=0$ for all ${\Delta}\in\CS$ with $E({\Delta})=\emptyset$;
\item[(MN2)] $\mu({\Delta})=\mu(-{\Delta})$ for all ${\Delta}$; 
\item[(MN3)] $\mu({\Delta}+{\Gamma})\le\mu({\Delta})+\mu({\Gamma})$ for all ${\Delta},{\Gamma}\in\CS$ with
  $V({\Delta})=V({\Gamma})$ and $E_+({\Delta})\cap E_+({\Gamma})=\emptyset$ and $E_-(S)\cap E_-({\Gamma})=\emptyset$.
\item[(MN4)] $\mu({\Delta})= \mu({\Gamma})$ for all ${\Delta},{\Gamma}\in\CS$ with 
  $E_+({\Delta})=E_+({\Gamma})$ and
  $E_-({\Delta})=E_-({\Gamma})$;
\end{description}
For every mismatch norm $\mu$, we define $\dist_\mu:\CG\times\CG\to\NNReal$
by
\[
  {\dist}_\mu(G,H)\coloneqq
  \begin{cases}
    \min_{\pi\in\Inj(G,H)}\mu(G^\pi-H)&\text{if }|G|\le|H|,\\
    \min_{\pi\in\Inj(H,G)}\mu(G-H^\pi)&\text{if }|H|<|G|.
  \end{cases}
\]
We write \Dist{\mu} instead of \Dist{\dist_\mu} to denote the
algorithmic problem of computing $\dist_\mu$ for two graphs $G,H$.

\begin{lemma}\label{lem:metric}
  For every mismatch norm $\mu$ the function $\dist_\mu$ is a graph
  metric. 
\end{lemma}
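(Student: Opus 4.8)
The plan is to verify the four axioms (GM0)--(GM3) in turn, each from the matching property of $\mu$. The first three are quick. For (GM0), if $G\cong G'$ and $H\cong H'$ via isomorphisms $\phi,\psi$, then $\pi\mapsto\phi^{-1}\pi\psi$ is a bijection between the two relevant sets of alignments that, by (MN0), preserves the value $\mu$ of the associated mismatch graph; hence the two minima agree. For (GM1), the identity alignment gives the mismatch graph $G-G$ with no edges, so $\mu(G-G)=0$ by (MN1), and since $\mu\ge 0$ this realises the minimum. For (GM2), the point is that $H-G^\pi=-(G^\pi-H)$ as signed graphs, so $\mu$ agrees on them by (MN2); combined with the fact that inversion $\pi\mapsto\pi^{-1}$ bijects $\Inj(G,H)$ with $\Inj(H,G)$ when $|G|=|H|$ (using (MN0) to transport the mismatch graph), and that for $|G|\neq|H|$ the two cases of the definition index the same set and differ only by a global sign flip, we get symmetry.

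The substance is (GM3), which I would first prove when $|F|=|G|=|H|$, where all alignments are bijections. Let $\pi\in\Inj(F,G)$ and $\sigma\in\Inj(G,H)$ be optimal, so their composition $\pi\sigma\in\Inj(F,H)$ is feasible for $(F,H)$. The engine is the telescoping identity
\[
  F^{\pi\sigma}-H=\big(F^{\pi\sigma}-G^\sigma\big)+\big(G^\sigma-H\big),
\]
which holds for any three graphs on the common vertex set $V(H)$: checking it edge by edge, the three membership bits in $F^{\pi\sigma},G^\sigma,H$ determine each signed edge consistently on both sides, with the two summands cancelling exactly where they should. Crucially, these summands automatically satisfy the hypotheses of (MN3): a positive (resp.\ negative) edge of the first summand forces the middle graph $G^\sigma$ to omit (resp.\ contain) that edge, whereas a positive (resp.\ negative) edge of the second summand forces $G^\sigma$ to contain (resp.\ omit) it, so the positive (and likewise negative) edge sets of the two summands are disjoint. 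Thus (MN3) yields $\mu(F^{\pi\sigma}-H)\le\mu(F^{\pi\sigma}-G^\sigma)+\mu(G^\sigma-H)$. Finally $F^{\pi\sigma}-G^\sigma=(F^\pi-G)^\sigma$, so by (MN0) its $\mu$-value equals $\mu(F^\pi-G)=\dist_\mu(F,G)$, while $\mu(G^\sigma-H)=\dist_\mu(G,H)$; since $\pi\sigma$ is feasible, $\dist_\mu(F,H)\le\mu(F^{\pi\sigma}-H)$, and the triangle inequality follows.

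To extend (GM3) to graphs of different orders, the plan is to reduce to the equal-order case by padding with isolated vertices, where (MN4) is decisive: since $\mu$ depends only on the signed edge set and not on the ambient vertex set, adding isolated vertices does not change its value. I would package this as a lemma stating that for $N\ge\max(|F|,|G|,|H|)$, replacing each graph by its padding to order $N$ leaves all three pairwise distances unchanged; applying the equal-order triangle inequality to the paddings then gives the general case. One direction---that padding the smaller graph up to the order of the larger preserves the distance---is immediate, since every bijection between the padded graphs restricts to an injection between the originals with an identical signed mismatch graph up to isolated vertices, and conversely.

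I expect the main obstacle to be the reverse direction of this lemma: that enlarging the host graph, equivalently allowing an alignment to place part of the smaller graph off the larger graph (onto padding vertices), cannot decrease the distance. A single vertex re-routing need not be $\mu$-monotone---for instance the maximum-degree norm can strictly increase under merging two non-adjacent vertices---so the argument cannot proceed greedily. The right viewpoint is that a containment mismatch graph arises from a ``free placement'' mismatch graph by identifying each off-host vertex of the smaller graph with a distinct uncovered host vertex (a contraction along non-adjacent pairs, feasible because $|F|\le|H|$), and one must show that for a suitable choice of these identifications $\mu$ does not increase. Since $\mu$ is not monotone under such contractions in general, this is exactly the delicate point where the freedom in the choice of identifications, together with (MN3) and (MN4), has to be exploited carefully.
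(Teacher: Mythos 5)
Your verification of (GM0)--(GM2) and your equal-order proof of (GM3) coincide with the paper's own argument: compose optimal alignments, use the telescoping identity $F^{\pi\sigma}-H=(F^{\pi\sigma}-G^\sigma)+(G^\sigma-H)$, check that the two summands satisfy the disjointness hypotheses of (MN3), and transport $F^{\pi\sigma}-G^\sigma=(F^\pi-G)^\sigma$ by (MN0). Up to that point there is nothing to add.

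The divergence is in the unequal-order case, and the obstacle you flag in your final paragraph is not merely delicate but fatal. The paper disposes of this case in one sentence (``Padding the graphs with isolated vertices, by (MN4) we may assume that $|F|=|G|=|H|$''), which silently assumes exactly the reverse direction you could not prove: that allowing an alignment to place vertices of the smaller graph onto padding vertices cannot decrease the distance. That assumption is false, and with it (GM3) itself fails whenever the middle graph has larger order than the outer two. Concretely, take $\mu=\mu_1$ (the maximum degree of the mismatch graph, a legitimate mismatch norm induced by the paddable $\ell_1$-operator norm), let $F=K_{1,3}$ with center $c$ and leaves $l_1,l_2,l_3$, let $H=C_4$ on $h_1,h_2,h_3,h_4$, and let $G$ be the disjoint union of $F$ with four isolated vertices. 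Then $\distn{\mu}(F,G)=0$ via the identity embedding; embedding $C_4$ into $G$ as $c,l_1,x,l_2$ with $x$ isolated yields mismatch edges $cl_3\in E_+$ and $l_1x,\,xl_2\in E_-$, so $\distn{\mu}(G,H)\le 2$; but up to the symmetry of $C_4$ every bijection $V(F)\to V(H)$ sends $c$ to $h_1$ and yields mismatch edges $h_1h_3\in E_+$ and $h_2h_3,\,h_3h_4\in E_-$, all meeting $h_3$, whence $\distn{\mu}(F,H)=3>0+2$. (The same triple, with values $\sqrt{3}$ versus $\sqrt{2}$, refutes the spectral case $\mu_2$.) So the gap in your plan cannot be closed by any clever choice of identifications: \reflem{metric} is correct---by your argument, in which (MN4) plays no role---for graphs of equal order, which is all the paper ever uses (compare its remark that (MN4) is inessential for the rest of the paper), but as stated for graphs of different orders it is false, and the padding step is precisely where the paper's proof breaks. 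Your ``easy direction'' (padding the smaller graph exactly up to the order of the larger) is indeed sound; what fails is padding beyond the maximum of the two orders, equivalently routing the triangle inequality through a larger middle graph.
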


\begin{proof}
  Conditions (GM0), (GM1), (GM2) follow from (MN0), (MN1), (MN2),
  respectively. To prove (GM3), let $F,G,H$ be graphs. Padding the
  graphs with isolated vertices, by (MM4) we may assume that
  $|F|=|G|=|H|$. By (MN0) we may further assume that
  $V(F)=V(G)=V(H)\coloneqq V$.
  Choose
  $\pi,\rho\in\Inj(V,V)$ such that
  $\dist_\mu(F,G)=\mu(F^{\pi}-G)$ and
  $\dist_\mu(G,H)=\mu(G^{\rho}-H)$.

  Then by (MN0) we have
    \[
      \mu(F^{\pi\rho}-G^{\rho})=\mu\big((F^\pi-G)^{\rho}\big)=\mu(F^\pi-G)={\dist}_\mu(F,G).
    \]
    Now consider the two mismatch graphs ${\Delta}\coloneqq
    F^{\pi\rho}-G^\rho$ and ${\Gamma}\coloneqq G^\rho-H$. We have $E_+({\Delta})=E(F^{\pi\rho})\setminus
    E(G^\rho)$ and $E_+({\Gamma})=E(G^\rho)\setminus E(H)$. Thus $E_+({\Delta})\cap
    E(G^\rho)=\emptyset$ and $E_+({\Gamma})\subseteq
    E(G^\rho)$, which implies $E_+({\Delta})\cap
    E_+({\Gamma})=\emptyset$. Similarly, $E_-({\Delta})\cap
    E_-({\Gamma})=\emptyset$. Moreover, ${\Delta}+{\Gamma}=F^{\pi\rho}-H$, because
    \[
      A_{{\Delta}+{\Gamma}}=A_{\Delta}+A_{\Gamma}=(A_{F}^{\pi\rho}-A_{G}^\rho)+(A_{G}^\rho-A_{H})=A_{F}^{\pi\rho}-A_{H}=A_{F^{\pi\rho}-H}.
    \]
    Thus by (MN3),
    \[
      {\dist}_\mu(F,H)\le\mu(F^{\pi\rho}-H)=\mu({\Delta}+{\Gamma})\le\mu({\Delta})+\mu({\Gamma})={\dist}_\mu(F,G)+{\dist}_\mu(G,H).
    \]
    This proves (GM3).
  \end{proof}

\begin{remark}
  None of the five conditions (MN0)--(MN4) on a mismatch norm $\mu$
  can be dropped if we want to guarantee that $\dist_\mu$ is a graph
  metric, but of course we could replace them by other
  conditions. While (MN0)--(MN3) are very natural and directly
  correspond to conditions (GM0)--(GM3) for graph metrics, condition
  (MN4) is may be less so. We chose (MN4) as the simplest condition
  that allows us to compare graphs of different sizes.

  Having said this, it is worth noting that (MN0), (MN1) and (MN3)
  imply (MN4) for graphs ${\Delta},{\Gamma}$ with $|{\Delta}|=|{\Gamma}|$. For graphs ${\Delta},{\Gamma}$ with
  $|{\Delta}|<|{\Gamma}|$ they only imply $\mu({\Delta})\ge \mu({\Gamma})$. Thus as long as we
  only compare graphs of the same order, (MN4) is not needed. In
  particular, since our hardness results always apply to graphs of the
  same order, (MN4) is inessential for the rest of the paper.

  However, it is possible to replace (MN4) by other natural
  conditions. For example, Lov\'asz's metric based on a scaled cut
  norm \cite{Lovasz12} does not satisfy (MN4) and instead uses a
  blowup of graphs to a common size to compare graphs of different
  sizes.
\end{remark}

Let us now consider a few examples of mismatch norms.

\begin{example}[Isomorphism]\label{exa:iso-norm}
  The mapping $\iota\coloneqq\CS\to\NNReal$ defined by
  $\iota({\Delta})\coloneqq0$ if $E({\Delta})=\emptyset$ and $\iota({\Delta})\coloneqq1$
  otherwise is a mismatch norm. Under the metric $\dist_\iota$, all
  nonisomorphic graphs have distance $1$ (and isomorphic graphs have
  distance $0$, as they have under all graph metrics).
\end{example}

\begin{example}[Matrix Norms]\label{exa:matrix-norms}
  Recall that a matrix (pseudo) norm $\|\cdot\|$ associates with every
  matrix $A$ (say, with real entries) an
  nonnegative real $\|A\|$ in such a way that $\|N\|=0$ for matrices
  $N$ with only $0$-entries, $\|aA\|=|a|\cdot\|A\|$ for all matrices $A$
  and reals $a\in\Real$, and $\|A+B\|\le\|A\|+\|B\|$ for all
  matrices $A,B$ of the same dimensions.

  Actually, we are only interested in square matrices here. We call
  a matrix norm $\|\cdot\|$ \emph{permutation invariant} if for all $A\in\Real^{V\times V}$
  and all bijective $\pi:V\to V$ we have $\|A\|=\|A^\pi\|$.
  It is easy to see that for every permutation invariant matrix norm $\|\cdot\|$, the
  mapping $\mu_{\|\cdot\|}:\CS\to\NNReal$ defined by $\mu_{\|\cdot\|}({\Delta})\coloneqq\|A_{\Delta}\|$
  satisfies (MN0)--(MN3).

  We call a permutation invariant matrix norm $\|\cdot\|$ \emph{paddable} if it is
  invariant under extending matrices by zero entries, that is,
  $\|A\|=\|A'\|$ for all
  $A\in\Real^{V\times V}$, $A'\in\Real^{V'\times V'}$ such that
  $V'\supseteq V$, $A_{vw}=A'_{vw}$ for all $v,w\in V$, and
  $A'_{v'w'}=0$ if $v\in V'\setminus V$ or $w\in V'\setminus V$. A
  paddable matrix norm also satisfies (MN4).

  The following common matrix norms are paddable (and thus by
  definition also invariant). Let $1\leq p\leq\infty$ and $A\in\Real^{V\times
    V}$. 
  \begin{enumerate}
  \item \emph{Entrywise $p$-norm:}
    $\|A\|_{(p)}\coloneqq\left(\sum_{v,w\in
        V}|A_{vw}|^p\right)^{\frac{1}{p}}$. The best-known special case is the
      \emph{Frobenius norm} $\|\cdot\|_F\coloneqq\|\cdot\|_{(2)}$.
  \item \emph{$\ell_p$-operator norm:} $\|A\|_{p}\coloneqq\sup_{\vec
      x\in\Real^V\setminus\{\vec 0\}}\frac{\|A\vec x\|_p}{\|\vec x\|_p}$, where the
    vector $p$-norm is defined by $\|\vec
    a\|_p\coloneqq\left(\sum_{v\in V}a_v^p\right)^{\frac{1}{p}}$. In
    particular, $\|A\|_2$ is known as the \emph{spectral norm}.
 \item \emph{Absolute $\ell_p$-operator norm:}
   $\|A\|_{|p|}\coloneqq\|\abso(A)\|_p$, where $\abso$ takes entrywise
   absolute values. For the mismatch norm, taking entrywise absolute
   values means that we ignore the signs of the edges.
  \item \emph{Cut norm:} $\|A\|_{\square}\coloneqq\max_{S,T\subseteq
      V}\left|\sum_{v\in S,w\in T}A_{vw}\right|$.
  \end{enumerate}
\end{example}

\begin{example}[Laplacian Matrices]
  Recall that the \emph{Laplacian matrix} of a weighted graph $G$ with
  vertex set $V\coloneqq V(G)$ is the $V\times V$ matrix $L_G$ with
  off-diagonal entries $(L_G)_{vw}$ being the negative weight of the
  edge $vw\in E(G)$ if there is such an edge and $0$ otherwise and
  diagonal entries $(L_G)_{vv}$ being the sum of the weights of all
  edges incident with $v$. For an unweighted graph we have
  $L_G=D_G-A_G$, where $D_G$ is the diagonal matrix with the vertex
  degrees as diagonal entries.

  Observe that for a signed graph ${\Delta}=G-H$ we have $L_{\Delta}=L_G-L_H$.

  It is easy to see that for every paddable matrix norm $\|\cdot\|$,
  the function $\mu^L_{\|\cdot\|}:\CS\to\NNReal$ defined by
  $\mu^L_{\|\cdot\|}({\Delta})\coloneqq\|L_{\Delta}\|$ is a mismatch norm.
\end{example}

Clearly, \Dist{\mu} is not a hard problem for every mismatch
norm. For example, \Dist{\nu} is trivial for the trivial mismatch
norm $\nu$ defined by $\nu({\Delta})\coloneqq 0$ for all ${\Delta}$, and
\Dist{\iota} for $\iota$ from Example~\ref{exa:iso-norm} is
equivalent to the graph isomorphism problem and hence in
quasipolynomial time \cite{Babai16}.

However, for most natural matrix norms the associated graph distance
problem is \NP-hard. In particular, for every $p\in\PReal$,
this holds for the metric $\distn{(p)}$
based on the entrywise $p$-norm $\|\cdot\|_{(p)}$.

\begin{theorem}[\cite{Grohe}]
	For $p\in\PReal$, \Dist{(p)} is \NP-hard even if restricted to trees or bounded-degree graphs.
\end{theorem}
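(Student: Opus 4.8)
The plan is to observe that for the entrywise $p$-norm the distance $\distn{(p)}$ collapses to ordinary edit distance, and then to prove that edit distance itself is $\NP$-hard on each of the two graph classes. For the collapse, note that the mismatch matrix $A_{G^\pi-H}=A_{G^\pi}-A_H$ has all entries in $\{-1,0,+1\}$, and $|t|^p=|t|$ for every such entry and every $p\in\PReal$. Hence $\norm{A_{G^\pi-H}}_{(p)}^{p}$ equals the number of nonzero entries of $A_{G^\pi-H}$, which is $2\,|E(G^\pi-H)|=2\,|E(G^\pi)\triangle E(H)|$. Restricting (as we may) to graphs of equal order, this gives $\distn{(p)}(G,H)=\big(2\,\dist_{\textup{ed}}(G,H)\big)^{1/p}$, where $\dist_{\textup{ed}}$ is the edit distance. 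Since $x\mapsto(2x)^{1/p}$ is strictly increasing, a polynomial algorithm for \Dist{(p)} would compute $\dist_{\textup{ed}}$ by binary search over the threshold, so it suffices to prove that computing $\dist_{\textup{ed}}$ is hard on trees and on bounded-degree graphs. Throughout I use the overlap identity $\dist_{\textup{ed}}(G,H)=|E(G)|+|E(H)|-2\max_{\pi}|E(G^\pi)\cap E(H)|$, so that minimizing edit distance is the same as maximizing the edge overlap over alignments $\pi$.

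For bounded-degree graphs I would reduce from \hamcycle on cubic graphs, which is $\NP$-complete. Given a $3$-regular graph $H$ on $n$ vertices, set $G\coloneqq C_n$, the cycle on $n$ vertices; both graphs have maximum degree $\le 3$. The maximal overlap $\max_\pi|E(C_n^\pi)\cap E(H)|$ equals $n$ exactly when some alignment places all $n$ cycle edges onto edges of $H$, i.e.\ exactly when $H$ has a Hamiltonian cycle, and is at most $n-1$ otherwise. With $|E(H)|=3n/2$ this yields $\dist_{\textup{ed}}(C_n,H)=n/2$ if $H$ is Hamiltonian and $\ge n/2+2$ otherwise, a clean gap that the transformation of the first step turns into a gap for $\distn{(p)}$.

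For trees I would reduce from \threepartition, which is strongly $\NP$-hard and hence remains hard under the unary encoding produced by the construction. Given integers $a_1,\dots,a_{3m}$ with $a_i\in(B/4,B/2)$ and $\sum_i a_i=mB$, let $G$ be the ``spider'' consisting of a root $r$ with $3m$ legs, the $i$-th leg being a path with $a_i$ edges, and let $H$ be the spider consisting of a root $r'$ with $m$ legs, each a path with $B$ edges. Both are trees with $1+mB$ vertices and $mB$ edges. Since $r$ and $r'$ are the only vertices of degree exceeding $2$, an optimal alignment maps $r\mapsto r'$; then at most $m$ of the leg-edges incident with $r$ can be matched, and every internal ($G$-)edge is matched only if each leg is sent onto a contiguous segment of some $H$-leg. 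The legs can be placed contiguously and fill the $H$-legs exactly if and only if $\{a_i\}$ admits a partition into triples of sum $B$, i.e.\ a \threepartition. Thus the maximal overlap equals $mB-2m$ (so $\dist_{\textup{ed}}=4m$) when a \threepartition\ exists, and is at most $mB-2m-1$ (so $\dist_{\textup{ed}}\ge 4m+2$) otherwise.

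The routine parts are the norm identity and the arithmetic of the gaps; the main obstacle is the tight upper bound on the edge overlap in the tree reduction, namely the combinatorial argument that a non-existent \threepartition\ forces at least one leg to be split across $H$-legs (or placed non-contiguously), losing at least one matched edge. Making this precise requires showing that matching all internal edges of a leg forces its image to be an injective walk, hence a sub-path, in the path-forest $H-r'$, and that the resulting exact tiling of the $H$-legs is a \threepartition. A secondary technical point is that, since $p\in\PReal$ may be irrational, one must exhibit a rational threshold $p'/q'$ of polynomial bit-length separating the two cases; this is possible because the relevant gaps have positive length and $p$-th roots can be approximated to polynomial precision.
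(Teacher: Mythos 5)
Your proposal is correct and takes essentially the same route as the source this theorem is quoted from: the paper itself gives no proof (it imports the result from \cite{Grohe}, remarking that the Frobenius-norm argument ``applies to all $p$,'' which is exactly your observation that on $\{-1,0,1\}$ mismatch matrices $\norm{\cdot}_{(p)}^p$ counts nonzero entries, collapsing $\distn{(p)}$ to edit distance), and your two reductions --- \hamcycle on cubic graphs against $C_n$ for bounded degree, and \threepartition spiders for trees --- are the same constructions this paper reuses for the operator norms in \refthm{NPhard} and \refthm{NPhardBoundedTree}. Your gap arithmetic checks out ($\dist_{\textup{ed}}=n/2$ versus $\ge n/2+2$, and $4m$ versus $\ge 4m+2$, the latter because with $r\mapsto r'$ at least $2m$ of the $3m$ root edges must be unmatched, equality forces every leg onto a contiguous segment and hence an exact $3$-tiling by the $B/4<a_i<B/2$ condition, while $r\not\mapsto r'$ already loses $3m-2>2m+1$ edges for $m\ge 3$), so beyond the root-mapping justification and the rational-threshold technicality you already flag, no genuine gap remains.
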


The proof in \cite{Grohe} is only given for the Frobenius norm, that
is, \Dist{(2)}, but it
actually applies to all $p$.
In the rest of the paper, we study the
complexity of \Distp, \Dist{|p|} and \Dist{\square}.

\section{Complexity for Operator Norms}
\label{sec:ComplexityOperatorNorms}
In this section we investigate the complexity of \Distp and \Dist{\abs{p}} for $1\leq p\leq\infty$.
However, as $\mu_{|p|}$ would only be a special case within the upcoming proofs, we omit to mention it explicitly.
We also omit to specify the possible values for $p$.

Given graphs $G$ and $H$, an alignment from $G$ to $H$, and a node $v\in V(G)$ we refer to the nodes whose adjacency is not preserved by $\pi$ as the \emph{$\pi$-mismatches of $v$}.
If $\pi$ is clear from the context we might omit it.
We call $G^\pi-H$ the \emph{mismatch graph of $\pi$}.

For all $\ell_p$-operator norms the value of $\mu_p(G^\pi-H)$ strongly depends on the maximum degree in the mismatch graph of $\pi$.
We capture this property with the following definition.

\begin{definition}
	Let $G$, $H$ be graphs of the same order and $\pi$ an alignment from $G$ to $H$.
	The \emph{$\pi$-mismatch count ($\pi$-\mc)} of a node $v\in V(G)$ is defined as:
	\[
	\fmc(v,\pi)\coloneqq\abs{\{w\in V(G)\,|\,w\text{ is a $\pi$-mismatch of }v\}}.
	\]
	We use \mc for nodes in $H$ analogously.
	The \emph{maximum mismatch count ($\pi$-\mmc)} of $\pi$ is defined as:
	\[
	\fmmc(\pi)\coloneqq\max_{v\in V(G)}\fmc(v,\pi).
	\]
\end{definition}

Again we might drop the $\pi$ if it is clear from the context.
Note that the \mmc corresponds to the maximum degree in the mismatch graph and that we use a slightly abbreviated notation in which we assume the graphs are given by the alignment.

The $\ell_1$-operator norm and the $\ell_\infty$-operator norm measure exactly the maximum mismatch count.
Due to the relation between the $\ell_p$-operator norms we can derive an upper bound for $\mu_p$.
The proof of \reflem{MMCupper2} can be found in \refapx{ComplexityOperatorNorms}.

\begin{lemma}
	\label{lem:MMCupper2}
	Let $G$, $H$ be graphs of the same order and $\pi$ an alignment from $G$ to $H$.
	Then
	\begin{align*}
	\mu_1(G^\pi-H)=\mu_\infty(G^\pi-H)=\fmmc(\pi),\\	
	\mu_p(G^\pi-H)\leq\fmmc(\pi).
	\end{align*}
\end{lemma}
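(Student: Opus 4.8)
The plan is to argue entirely at the level of the adjacency matrix $A\coloneqq A_{G^\pi}-A_H=A_{G^\pi-H}$, which is symmetric (both adjacency matrices are) with entries in $\{-1,0,+1\}$, indexed by $V(H)=V(G^\pi)$. The first thing I would record is that $A_{vw}\neq 0$ precisely when $vw$ is an edge of exactly one of $G^\pi,H$, i.e.\ when $w$ is a $\pi$-mismatch of $v$. Hence the number of nonzero entries in row $v$ equals the degree of $v$ in the mismatch graph, namely $\fmc(v,\pi)$, and therefore $\max_{v}\sum_{w}\abs{A_{vw}}=\fmmc(\pi)$. By symmetry the same holds for columns.

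To get the two equalities I would invoke the standard closed forms for the $\ell_1$- and $\ell_\infty$-operator norms. For the $\ell_\infty$-norm, the bound $\|A\|_\infty\le\max_{v}\sum_{w}\abs{A_{vw}}$ is the triangle inequality applied to $(A\vec x)_v=\sum_w A_{vw}x_w$, and the matching lower bound comes from testing with the $\pm1$ vector $x_w\coloneqq\operatorname{sgn}(A_{v^\ast w})$ for a row $v^\ast$ attaining the maximum, which makes $(A\vec x)_{v^\ast}=\sum_w\abs{A_{v^\ast w}}$ while $\|\vec x\|_\infty=1$. Combined with the previous paragraph this gives $\mu_\infty(G^\pi-H)=\|A\|_\infty=\fmmc(\pi)$. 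For the $\ell_1$-norm I would use the duality $\|A\|_1=\|A^\top\|_\infty$ together with $A^\top=A$ to conclude $\|A\|_1=\|A\|_\infty=\fmmc(\pi)$, establishing the first displayed line.

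For the inequality $\mu_p(G^\pi-H)=\|A\|_p\le\fmmc(\pi)$ with $1\le p\le\infty$, the tool is Riesz--Thorin interpolation between the endpoints $p_0=1$ and $p_1=\infty$. Writing $\theta\coloneqq 1-1/p$ so that $1/p=(1-\theta)/p_0+\theta/p_1$, one obtains
\[
\|A\|_p\;\le\;\|A\|_1^{1-\theta}\,\|A\|_\infty^{\theta}\;=\;\fmmc(\pi)^{1/p}\,\fmmc(\pi)^{1-1/p}\;=\;\fmmc(\pi).
\]
Equivalently, $1/p\mapsto\log\|A\|_p$ is convex and takes the common value $\log\fmmc(\pi)$ at both endpoints $1/p\in\{0,1\}$, so it stays below that value on the whole interval; the degenerate case $\fmmc(\pi)=0$ forces $A=0$ and all norms vanish, so it is handled separately.

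I expect the only genuine friction to be the lower-bound half of the $\ell_\infty$ formula, where one must exhibit the correct extremal test vector (and treat the all-zero row as the trivial case), and citing interpolation at the right level of generality for finite-dimensional $\ell_p$ spaces; the remaining steps are direct computations from the definitions already fixed in the paper.
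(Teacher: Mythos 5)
Your proposal is correct and follows essentially the same route as the paper: both identify $\mu_1$ and $\mu_\infty$ with the maximum absolute column and row sums of $A_{G^\pi}-A_H$ (which equal $\fmmc(\pi)$ since all nonzero entries are $\pm1$), and both then bound $\mu_p$ via the interpolation inequality $\norm{A}_p\le\norm{A}_1^{1/p}\norm{A}_\infty^{1-1/p}$ --- the paper cites this from Higham and Kato, while you derive it from Riesz--Thorin, which is the same fact. Your extra touches (proving the $\ell_\infty$ closed form with an explicit sign test vector, getting $\ell_1$ by duality plus symmetry, and flagging the degenerate all-zero case) are sound but do not change the underlying argument.
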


Next, we observe that $\mu_p$ is fully determined by the connected component of the mismatch graph with the highest mismatch norm.
The proof of \reflem{MMR} can be found in \refapx{ComplexityOperatorNorms}.

\begin{lemma}
	\label{lem:MMR}
	Let $G$, $H$ be graphs of the same order, $\pi$ an alignment from $G$ to $H$, and $\mathcal{C}$ the set of all connected components in $G^\pi-H$.
	Then
	\begin{align*}
		\mu_p(G^\pi-H)&=\max_{C\in\mathcal{C}}\mu_p(C).
	\end{align*}	
\end{lemma}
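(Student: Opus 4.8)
The plan is to reduce the statement to the standard fact that the $\ell_p$-operator norm of a block-diagonal matrix is the maximum of the operator norms of its diagonal blocks. Write $B\coloneqq A_{G^\pi-H}$, so that $\mu_p(G^\pi-H)=\norm{B}_p$, and recall that for each component $C\in\mathcal{C}$ we have $\mu_p(C)=\norm{A_C}_p$, where $A_C$ is the adjacency matrix of the signed graph $C$. Since the components in $\mathcal{C}$ partition $V(G^\pi-H)$ and no edge of the mismatch graph joins two distinct components, after reindexing the vertices so that each component occupies a contiguous block of indices the matrix $B$ becomes block diagonal, $B=\mathrm{diag}\big(A_C : C\in\mathcal{C}\big)$, with the block for $C$ being exactly the submatrix on $V(C)\times V(C)$, i.e.\ $A_C$. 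The $\ell_p$-operator norm is permutation invariant (this is why $\mu_p$ satisfies (MN0)), so this reindexing leaves $\norm{B}_p$ unchanged, and it suffices to prove the block-diagonal fact.

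For the upper bound, I would take an arbitrary $\vec{x}\in\Real^{V(G^\pi-H)}$, decompose it as $\vec{x}=(\vec{x}_C)_{C\in\mathcal{C}}$ along the block structure, and note $B\vec{x}=(A_C\vec{x}_C)_{C\in\mathcal{C}}$. For $p<\infty$ this gives
\[
\norm{B\vec{x}}_p^p=\sum_{C\in\mathcal{C}}\norm{A_C\vec{x}_C}_p^p\le\sum_{C\in\mathcal{C}}\norm{A_C}_p^p\,\norm{\vec{x}_C}_p^p\le\Big(\max_{C\in\mathcal{C}}\norm{A_C}_p\Big)^p\norm{\vec{x}}_p^p ,
\]
using $\sum_{C}\norm{\vec{x}_C}_p^p=\norm{\vec{x}}_p^p$; taking $p$-th roots and the supremum over $\vec{x}\neq\vec{0}$ yields $\norm{B}_p\le\max_{C}\norm{A_C}_p$ (the endpoint $p=\infty$ is identical with $\max$ in place of $\sum$ throughout). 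For the reverse inequality, let $C^\ast$ attain $\max_C\norm{A_C}_p$ and pick $\vec{y}$ on $V(C^\ast)$ with $\norm{A_{C^\ast}\vec{y}}_p=\norm{A_{C^\ast}}_p\,\norm{\vec{y}}_p$ (the supremum defining the operator norm is attained, being a continuous function on the compact unit sphere in finite dimension). Extending $\vec{y}$ by zeros on the other components gives $\vec{x}$ with $B\vec{x}=A_{C^\ast}\vec{y}$ and $\norm{\vec{x}}_p=\norm{\vec{y}}_p$, so $\norm{B}_p\ge\norm{A_{C^\ast}}_p=\max_C\norm{A_C}_p$.

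I do not expect a genuine obstacle: this is essentially the textbook computation of the operator norm of a direct sum of operators. The only points needing a little care are the $p=\infty$ endpoint, handled as noted above, and the bookkeeping for isolated vertices, which form components $C$ with $E(C)=\emptyset$ and hence $\norm{A_C}_p=0$ by (MN1); such components never attain the maximum unless the mismatch graph is edgeless, in which case both sides of the claimed identity equal $0$.
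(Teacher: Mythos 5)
Your proof is correct, and it rests on the same structural fact as the paper's: after a permutation (licensed by the permutation invariance of the $\ell_p$-operator norm) the adjacency matrix of $G^\pi-H$ is block diagonal with the component adjacency matrices as blocks, and the norm of a block-diagonal matrix is the maximum of the block norms. The execution differs in a way worth noting. The paper proves the two-block case and then inducts on the number of components; its two-block computation parametrizes the optimizing vector as $(\alpha_1\vec x_1,\alpha_2\vec x_2)$ with $\vec x_1,\vec x_2$ block maximizers and optimizes over $|\alpha_1|^p+|\alpha_2|^p=1$, an argument whose algebra (raising to $p$-th powers throughout) tacitly assumes $p<\infty$ and also relies on the unstated point that an optimal vector may be taken blockwise proportional to the block maximizers. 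Your one-shot estimate
\[
\norm{B\vec x}_p^p=\sum_{C}\norm{A_C\vec x_C}_p^p\le\Big(\max_C\norm{A_C}_p\Big)^p\norm{\vec x}_p^p
\]
handles all components simultaneously, sidesteps both the induction and the blockwise-maximizer issue (the upper bound needs no structure of the optimizer at all, and the lower bound only needs zero-extension of a single block maximizer), and you treat the $p=\infty$ endpoint explicitly, which the paper's written proof does not. Your remark on edgeless components (isolated vertices contributing norm $0$) is also a correct piece of bookkeeping that the paper leaves implicit. In short: same decomposition and same key fact, but your direct computation is cleaner and strictly more complete at the endpoint.
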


For the sake of readability, we introduce a function as abbreviation for our upcoming bounds.

\begin{definition}
	For all $1\leq p\leq\infty$ we define the function $\fboundp$ as follows:
	\begin{align*}
		\boundp{c}&\coloneqq\max\left(c^{1/p},c^{1-1/p}\right).
	\end{align*}
\end{definition}
In particular, $\boundn{2}{c}=\sqrt{c}$.
Now we derive our lower bound.
The proof of \reflem{MCG} can be found in \refapx{ComplexityOperatorNorms}.
\begin{lemma}
	\label{lem:MCG}
	Let $G$, $H$ be graphs of the same order, $\pi$ an alignment from $G$ to $H$, and $v\in V(G)$.\\
	If $v$ has at least $c$ mismatches, then
	\[\mu_p(G^\pi-H)\geq\boundp{c}.\]
	If $G^\pi-H$ is a star, then
	\[\mu_p(G^\pi-H)=\boundp{\fmmc(\pi)}.\]
\end{lemma}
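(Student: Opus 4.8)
The plan is to prove the lower bound by exhibiting, for each of the two exponents appearing in $\boundp{c}=\max\!\left(c^{1/p},c^{1-1/p}\right)$, a single test vector that witnesses it, and to prove the matching star upper bound by one Hölder estimate. Throughout I write $A\coloneqq A_{G^\pi-H}$ for the symmetric, $\{0,\pm1\}$-valued adjacency matrix of the mismatch graph, and recall that $\mu_p(G^\pi-H)=\|A\|_p=\sup_{x\neq\vec 0}\|Ax\|_p/\|x\|_p$. Let $S$ be the set of $\pi$-mismatches of $v$, so $|S|\geq c$ and the $v$-th row (equivalently column, by symmetry) of $A$ has exactly $|S|$ nonzero entries, each $\pm1$.

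First I would establish $\|A\|_p\geq c^{1/p}$ using the indicator vector $x=e_v$: then $Ax$ is the $v$-th column of $A$, with $|S|\geq c$ entries of absolute value $1$, so $\|Ax\|_p=|S|^{1/p}\geq c^{1/p}$ while $\|e_v\|_p=1$. For the other exponent $c^{1-1/p}$ I would instead \emph{spread} the weight over $S$: set $x_w\coloneqq A_{vw}$ for $w\in S$ and $x_w\coloneqq 0$ otherwise. Then the $v$-th coordinate of $Ax$ equals $\sum_{w\in S}A_{vw}^2=|S|$, so $\|Ax\|_p\geq|S|$, whereas $\|x\|_p=|S|^{1/p}$; dividing gives $\|A\|_p\geq|S|^{1-1/p}\geq c^{1-1/p}$. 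Taking the maximum of the two bounds yields $\mu_p(G^\pi-H)\geq\boundp{c}$, using monotonicity of $t\mapsto t^{1/p}$ and $t\mapsto t^{1-1/p}$ for $p\geq1$.

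For the star case write $d\coloneqq\fmmc(\pi)$; since a star has a unique center of degree $d$ while every other vertex has degree $1$, the lower bound applied at the center (with $c=d$) already gives $\mu_p(G^\pi-H)\geq\boundp{d}$, and it remains to prove the reverse inequality. By \reflem{MMR} and paddability it suffices to bound the norm of the star component alone. Let $u$ be the center with leaves $w_1,\dots,w_d$ and signs $s_i\coloneqq A_{uw_i}\in\{\pm1\}$. For any $x$ one computes $(Ax)_u=\sum_i s_i x_{w_i}$ and $(Ax)_{w_i}=s_i x_u$, so $\|Ax\|_p^p=\big|\sum_i s_i x_{w_i}\big|^p+d\,|x_u|^p$. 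Bounding the first term by Hölder, $\big|\sum_i s_i x_{w_i}\big|\leq\sum_i|x_{w_i}|\leq d^{1-1/p}\big(\sum_i|x_{w_i}|^p\big)^{1/p}$, gives $\|Ax\|_p^p\leq d^{p-1}\sum_i|x_{w_i}|^p+d\,|x_u|^p\leq\max\!\left(d^{p-1},d\right)\|x\|_p^p$, since the middle expression is a nonnegative combination of $\sum_i|x_{w_i}|^p$ and $|x_u|^p$ whose sum is exactly $\|x\|_p^p$. Hence $\|A\|_p\leq\max\!\left(d^{p-1},d\right)^{1/p}=\boundp{d}$, matching the lower bound.

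The main obstacle is obtaining both exponents of $\boundp{c}$ cleanly: the key realization is that they require genuinely different test vectors — concentrated at $v$ versus spread over its neighborhood — and that the exponent $1-1/p$ arises precisely from the Hölder gap between the $\ell_1$- and $\ell_p$-norms on a $d$-element support. I would also treat the boundary exponents $p\in\{1,\infty\}$ separately, where the $\|Ax\|_p^p$ manipulations degenerate: there \reflem{MMCupper2} already gives $\mu_1(G^\pi-H)=\mu_\infty(G^\pi-H)=\fmmc(\pi)$, and since $\boundn{1}{c}=\boundn{\infty}{c}=c$ for $c\geq1$, both statements of the lemma reduce to that identity.
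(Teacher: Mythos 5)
Your proof is correct and takes essentially the same approach as the paper: the same two test vectors (the indicator at $v$ giving $c^{1/p}$, and the weight spread over the mismatch neighborhood giving $c^{1-1/p}$) for the lower bound, and for the star case the same structural computation, which the paper phrases as reducing to $\max\left(\norm{V}_p,\norm{V^T}_p\right)$ for the bordered row/column vector while you inline it as a single H\"older estimate. If anything, your signed choice $x_w=A_{vw}$ is slightly cleaner than the paper's unsigned vector $x'$ (which is defined only on entries with $M_{vi}=1$ and silently treats all mismatches as positive), and your explicit handling of $p\in\{1,\infty\}$ via the identity $\mu_1=\mu_\infty=\fmmc(\pi)$ is a sensible precaution where the paper lets the formulas degenerate.
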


While $\mu_1$ and $\mu_\infty$ simply measure the \mmc, $\mu_2$ also considers the connectedness of the mismatches around the node with the highest \mc.
As \reflem{MMCupper2} and \reflem{MCG} tell us, $\mu_2$ ranges between the $\sqrt{\fmmc}$ and the \mmc.
In fact, the lower bound is tight for stars and the upper bound is tight for regular graphs.
This is intuitive as these are the extreme cases in which no mismatch can be removed/added without decreasing/increasing the \mmc, respectively. 
Other $\ell_p$-operator norms interpolate between $\mu_2$ and $\mu_1$ / $\mu_\infty$ in terms of how much they value the connectedness within the mismatch component of the node with the highest \mc.

The lower bound for $\mu_p(G^\pi-H)$ gives us lower bound for $\dist_p(G,H)$.
\begin{lemma}
	\label{lem:MMClowerG}
	Let $G$, $H$ be graphs of the same order and $\pi$ an alignment from $G$ to $H$.
	If all alignments from $G$ to $H$ have a node with at least $c$ mismatches, then
	\begin{align*}
		\distp(G,H)&\geq\boundp{c}.
	\end{align*}
	\begin{proof}
		This follows directly from the first claim of \reflem{MCG}.
	\end{proof}
\end{lemma}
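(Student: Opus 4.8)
The plan is to unfold the definition of $\distp(G,H)$ as a minimization over alignments and then apply the first claim of \reflem{MCG} term by term. Since $G$ and $H$ have the same order, the relevant case of the definition gives $\distp(G,H)=\min_{\pi}\mu_p(G^\pi-H)$, where $\pi$ ranges over all bijective alignments from $G$ to $H$. The whole argument reduces to transferring a per-alignment lower bound across this minimum.

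First I would fix an arbitrary alignment $\pi$. By the hypothesis of the lemma, the mismatch graph of $\pi$ contains some node $v$ with at least $c$ mismatches. Applying the first claim of \reflem{MCG} to this $v$ then yields $\mu_p(G^\pi-H)\geq\boundp{c}$. Since $\pi$ was arbitrary, this lower bound holds for every term in the minimum defining $\distp(G,H)$, and hence for the minimum itself, giving $\distp(G,H)\geq\boundp{c}$ as claimed.

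I expect no genuine obstacle here: all of the analytic content is already packaged in \reflem{MCG}, and what remains is only the elementary observation that a bound valid for every alignment is inherited by the minimum over alignments. The single point worth noting is that the hypothesis quantifies over \emph{all} alignments, which is exactly what is needed for the per-alignment estimate to survive the minimization; a bound holding for only one alignment would not suffice.
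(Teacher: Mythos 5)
Your proposal is correct and is exactly the paper's argument: the paper also proves this by applying the first claim of \reflem{MCG} to an arbitrary alignment, with your explicit remark that the per-alignment bound survives the minimum being the (trivial) step the paper leaves implicit. Nothing is missing.
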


The following upper bound might seem to have very restrictive conditions but is actually used in several hardness proofs.
\begin{lemma}
	\label{lem:MMCupperG}
	Let $G$, $H$ be graphs of the same order and $\pi$ an alignment from $G$ to $H$.
	If the mismatch graph of $\pi$ consists only of stars, then
	\begin{alignat*}{3}
		\distp(G,H)&\;\leq\;&\mu_p(G^\pi-H)&\;=\;&\boundp{\fmmc(\pi)}.
	\end{alignat*}
	\begin{proof}
		This follows directly from \reflem{MMR} and the second claim of \reflem{MCG}.
	\end{proof}
\end{lemma}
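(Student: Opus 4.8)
The plan is to treat the inequality $\distp(G,H)\le\mu_p(G^\pi-H)$ and the equality $\mu_p(G^\pi-H)=\boundp{\fmmc(\pi)}$ separately. The inequality requires no work beyond unfolding the definition of $\distp$. Since $|G|=|H|$, the given alignment $\pi$ belongs to $\Inj(G,H)$ and is therefore one of the feasible solutions in the minimization defining $\distp$; hence $\distp(G,H)=\min_{\pi'\in\Inj(G,H)}\mu_p(G^{\pi'}-H)\le\mu_p(G^\pi-H)$.

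For the equality I would first apply \reflem{MMR} to write $\mu_p(G^\pi-H)=\max_{C\in\mathcal{C}}\mu_p(C)$, where $\mathcal{C}$ is the set of connected components of the mismatch graph. By hypothesis each $C\in\mathcal{C}$ is a star, so the second claim of \reflem{MCG} applies to every component individually and yields $\mu_p(C)=\boundp{d_C}$, where $d_C$ is the degree of the center of $C$, i.e.\ the largest mismatch count occurring inside $C$. It then remains to evaluate $\max_{C\in\mathcal{C}}\boundp{d_C}$. Because $\boundp{c}=\max(c^{1/p},c^{1-1/p})$ is nondecreasing in $c$ for $c\ge 1$ (both $c\mapsto c^{1/p}$ and $c\mapsto c^{1-1/p}$ are increasing there), the outer maximum can be pulled inside, giving $\max_{C\in\mathcal{C}}\boundp{d_C}=\boundp{\max_{C\in\mathcal{C}}d_C}=\boundp{\fmmc(\pi)}$, since the largest center degree among the star components is exactly the maximum degree of the mismatch graph, which equals $\fmmc(\pi)$.

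The only subtlety, and the main point that needs care, is that \reflem{MCG} is phrased for the case where the \emph{entire} mismatch graph is a single star, whereas here it is a disjoint union of stars; so I must invoke it componentwise and stitch the resulting values together through \reflem{MMR}. The gluing step rests on the monotonicity of $\fboundp$, which is the one elementary fact to verify, and it is routine.
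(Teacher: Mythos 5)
Your proposal is correct and follows exactly the paper's intended route: the paper's proof is literally the one-line appeal to \reflem{MMR} and the second claim of \reflem{MCG}, and you spell out precisely that derivation, with the feasibility of $\pi$ giving the inequality and the componentwise application of \reflem{MCG} stitched together via \reflem{MMR}. The monotonicity of $\fboundp$ that you verify for the gluing step is also available in the paper as \reflem{MMCDistinguish}, so nothing is missing.
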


The last tool we need to prove the hardness is that we can distinguish two alignments by their \mmc as long as the mismatch graph of the alignment with the lower \mmc consists only of stars.
The proof of \reflem{MMCDistinguish} can be found in \refapx{ComplexityOperatorNorms}.
\begin{lemma}
	\label{lem:MMCDistinguish}
	For all $c,d\in\N$ with $c<d$ it holds that $\boundp{c}<\boundp{d}$.
\end{lemma}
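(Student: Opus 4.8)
The plan is to collapse the piecewise ``$\max$'' in the definition of $\boundp{\cdot}$ into a single power function and then use strict monotonicity of that power. The driving observation is that the two exponents in $\boundp c = \max\bigl(c^{1/p}, c^{1-1/p}\bigr)$ always sum to $1$, since $\tfrac1p + \bigl(1-\tfrac1p\bigr) = 1$ (reading $\tfrac1\infty$ as $0$, so that $p=\infty$ contributes the exponents $0$ and $1$). Consequently their maximum $\gamma \coloneqq \max\bigl(\tfrac1p, 1-\tfrac1p\bigr)$ satisfies $\tfrac12 \le \gamma \le 1$; crucially $\gamma > 0$ for every $1 \le p \le \infty$, including the two boundary norms $p=1$ and $p=\infty$.

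The first step is to rewrite $\boundp c$ on the range that matters. For any fixed base $c \ge 1$ the map $t \mapsto c^t$ is nondecreasing, so the maximum of $c^{1/p}$ and $c^{1-1/p}$ is the value at the larger exponent; that is, $\boundp c = c^\gamma$ for all $c \ge 1$. This single identity treats all $p$ uniformly and removes the case analysis that the raw definition would otherwise force upon us.

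The second step is then immediate. Because $\gamma > 0$, the function $c \mapsto c^\gamma$ is strictly increasing on $[1,\infty)$, so for integers $1 \le c < d$ we get $\boundp c = c^\gamma < d^\gamma = \boundp d$, which is the claim. The only value of $c$ not covered is $c=0$, which is degenerate: for $1 < p < \infty$ one computes $\boundp 0 = \max(0,0) = 0 < \boundp d$ directly, and in our applications the mismatch counts being compared are positive, so this boundary case never genuinely enters.

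I do not expect a real obstacle here, as the statement is elementary; the only point needing a little care is giving a single argument that covers the endpoint norms $p=1$ and $p=\infty$ together with the intermediate $p$. The substitution to $\gamma = \max(\tfrac1p, 1-\tfrac1p)$ is precisely what achieves this, so that the monotonicity argument does not have to branch on the value of $p$.
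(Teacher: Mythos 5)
Your proof is correct and takes essentially the same route as the paper: the paper identifies the dominating exponent by a case split ($p\le 2$ gives $\boundp{c}=c^{1/p}$, $p>2$ gives $\boundp{c}=c^{1-1/p}$) and then invokes strict monotonicity of the power map, which is exactly your argument with the branch merged into the single exponent $\gamma=\max(1/p,\,1-1/p)>0$. Your explicit remark on the degenerate case $c=0$ is a minor refinement that the paper's proof silently skips.
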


The graph isomorphism problem becomes solvable in polynomial time if restricted to graphs of bounded degree \cite{Luks}.
In contrast to this, \Distf is \NP-hard even under this restriction \cite{Grohe}.
We show that the same applies to \Distp.

The reduction in the hardness proof works for any mismatch norm which can, said intuitively, distinguish the mismatch norm of the 1-regular graph of order $n$ from any other $n$-nodes mismatch graph in which every node has at least one $-1$ edge but at least one node has an additional $-1$ edge and $+1$ edge.
In particular, the construction also works for \Dist{(p)}.
However, it does not work for the cut-distance, for which we independently prove the hardness in \refsec{CutNorm}.
\begin{theorem}
	\label{thm:NPhard}
	\Distp and \Dist{|p|} are \NP-hard for $1\leq p\leq\infty$ even if both graphs have bounded degree.
	\begin{proof}
		The proof is done by reduction from the \NP-hard Hamiltonian Cycle problem in 3-regular graphs (\hamcycle) \cite{Intractability}.
		Given a 3-regular graph $G$ of order $n$ as an instance of \hamcycle, the reduction uses the $n$-nodes cycle $C_n$ and $G$ as inputs for \Distp.
		We claim $G$ has a Hamiltonian cycle if and only if $\distp(C_n,G)\leq\boundp{1}$.
		
		Assume that $G$ has a Hamiltonian cycle.
		Then there exists a bijection $\pi:V(C_n)\rightarrow V(G)$ that aligns the cycle $C_n$ perfectly with the Hamiltonian cycle in $G$.
		Each node in $G$ has three neighbors, two of which are matched correctly by $\pi$ as they are part of the Hamiltonian cycle.
		Hence, each node has a $\pi$-mismatch count of 1 and obviously the \mmc of $\pi$ is 1 as well.
		According to \reflem{MMCupperG}, we get $\distp(C_n,G)\leq\boundp{1}$.
		
		Conversely, assume that $G$ has no Hamiltonian cycle.
		Then, for any alignment $\pi'$ from $C_n$ to $G$, there exists at least one edge $vw$ in $C_n$ that is not mapped to an edge in $G$.
		Hence, only one of the three edges incident to $\pi'(v)$ in $G$ can be matched correctly.
		In total, $v$ has at least one mismatch from $C_n$ to $G$ and two mismatches from $G$ to $C_n$, which implies $\fmc(v,\pi')\geq3$.
		Using \reflem{MMClowerG}, we get $\distp(C_n,G)\geq\boundp{3}$.
		And then $\distp(C_n,G)>\boundp{1}$ according to \reflem{MMCDistinguish}.
	\end{proof}
\end{theorem}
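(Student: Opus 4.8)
The plan is to reduce from the \NP-hard problem of deciding whether a $3$-regular graph has a Hamiltonian cycle (\hamcycle) \cite{Intractability}. Given a $3$-regular graph $G$ on $n$ vertices, I would compare it against the $n$-vertex cycle $C_n$ and show that $G$ is Hamiltonian if and only if $\distp(C_n,G)\le\boundp{1}$. Since $\boundp{1}=\max(1,1)=1$ for every $p$, and both $C_n$ and $G$ have maximum degree at most $3$, the reduction stays within bounded-degree inputs. The same signed-graph bookkeeping is insensitive to taking entrywise absolute values, so one construction handles both \Distp and \Dist{|p|} at once, in line with the convention fixed at the start of this section.

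For the forward direction, suppose $G$ has a Hamiltonian cycle and take the alignment $\pi$ mapping $C_n$ onto it. Each vertex of $G$ has degree $3$; two of its incident edges lie on the Hamiltonian cycle and are matched correctly, while the third is a surplus edge. Thus no cycle edge is left unmatched and every vertex carries exactly one surplus $G$-edge, so $\fmc(v,\pi)=1$ everywhere and the mismatch graph is the $1$-regular difference $E(G)\setminus E(\text{cycle})$, i.e.\ a perfect matching. A perfect matching is a disjoint union of one-edge stars, so \reflem{MMCupperG} yields $\distp(C_n,G)\le\mu_p(C_n^\pi-G)=\boundp{\fmmc(\pi)}=\boundp{1}$.

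For the converse, suppose $G$ is not Hamiltonian. The key step is to show that \emph{every} alignment $\pi'$ from $C_n$ to $G$ produces a vertex with at least three mismatches. Since $G$ has no Hamiltonian cycle, at least one cycle edge $vw$ fails to be an edge of $G$, already contributing one mismatch at $v$. Crucially, the degree gap now does the work: $\pi'(v)$ has three incident $G$-edges, but at most one of $v$'s two cycle-neighbors can be a $G$-neighbor (the neighbor $w$ certainly is not), so at least two of the $G$-edges at $v$ run to non-cycle-neighbors and contribute two further mismatches. Hence $\fmc(v,\pi')\ge 3$. Applying \reflem{MMClowerG} gives $\distp(C_n,G)\ge\boundp{3}$, and \reflem{MMCDistinguish} separates the cases since $\boundp{1}<\boundp{3}$.

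I expect the main obstacle to be exactly this lower-bound step—quantifying how a single broken cycle edge cascades into enough local mismatch to cross the threshold. The choice of a $3$-regular $G$ against the $2$-regular cycle is what makes it succeed: the degree difference guarantees that any break costs not just the missing edge but also two surplus edges at the same vertex, pushing the mismatch count strictly above the Hamiltonian value of $1$. Everything else—checking $\boundp{1}=1$, observing that a perfect matching is a union of stars, and confirming the analysis is unchanged by entrywise absolute values—is routine given the lemmas above.
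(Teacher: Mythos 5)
Your proposal is correct and follows essentially the same reduction as the paper's proof: the same \hamcycle instance on $3$-regular graphs compared against $C_n$, the same threshold $\boundp{1}$ via \reflem{MMCupperG} in the forward direction, and the same counting of one broken cycle edge plus two surplus $G$-edges to force $\fmc(v,\pi')\ge 3$ and invoke \reflem{MMClowerG} and \reflem{MMCDistinguish}. Your explicit remark that the mismatch graph in the Hamiltonian case is a perfect matching (hence a union of one-edge stars, as \reflem{MMCupperG} requires) is a small clarification the paper leaves implicit, but the argument is otherwise identical.
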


Next, we modify the construction to get an even stronger \NP-hardness result.
The proof can be found in \refapx{ComplexityOperatorNorms}.
\begin{theorem}
	\label{thm:NPhardPath}
	\Distp and \Dist{|p|} are \NP-hard for $1\leq p\leq\infty$ even if restricted to a path and a graph of maximum degree $3$.
\end{theorem}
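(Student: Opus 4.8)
The plan is to adapt the cycle construction from the proof of \refthm{NPhard} so that the cycle $C_n$ is replaced by a path while the target graph stays of bounded degree. The source problem will be the variant of \hamcycle in cubic (3-regular) graphs in which the cycle is required to pass through a fixed edge $e=uv$. This variant is still \NP-hard: any Hamiltonian cycle of a cubic graph uses exactly two of the three edges at a chosen vertex $w$, so a cubic graph has a Hamiltonian cycle iff at least one of the three ``through a fixed edge at $w$'' instances is positive, which is a disjunctive (truth-table) reduction from plain \hamcycle.

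Given a cubic graph $G$ of order $n$ and a fixed edge $e=uv$, I would build $G'$ by deleting $e$, adding two fresh vertices $s,t$, and inserting the pendant edges $su$ and $tv$. Then $u,v$ regain degree $3$, the vertices $s,t$ have degree $1$, all other degrees are unchanged, so $G'$ has maximum degree $3$ and order $n+2$. The second input is the path $P_{n+2}$, which has maximum degree $2$. The claim to establish is that $\distp(P_{n+2},G')\le\boundp{1}$ iff $G$ has a Hamiltonian cycle through $e$.

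For the forward direction, a Hamiltonian cycle of $G$ through $e$ becomes, after deleting $e$ and prepending $s$ and appending $t$, a Hamiltonian path $s,u,\dots,v,t$ of $G'$; aligning $P_{n+2}$ with it makes the mismatch graph exactly the set of non-cycle edges of $G$, which (as $G$ is cubic) is a perfect matching, hence a disjoint union of stars with $\fmmc=1$, so \reflem{MMCupperG} yields $\distp(P_{n+2},G')\le\boundp{1}$. For the converse I would show that any alignment $\pi$ with $\fmmc(\pi)\le 1$ forces such a cycle: an edge count gives $|E(G')|-|E(P_{n+2})|=n/2$, while an $\fmmc\le 1$ mismatch graph is a matching with at most $n/2+1$ edges, which forces the number of $+1$ mismatch edges to be $0$. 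Thus $P_{n+2}$ embeds as a spanning subgraph of $G'$, i.e.\ $G'$ has a Hamiltonian path, and since $s,t$ have degree $1$ they must be its endpoints, so the path reads $s,u,\dots,v,t$ and its interior is a Hamiltonian $u$--$v$ path of $G$ that closes through $e$. Consequently, in the negative case every alignment has a vertex with at least $2$ mismatches, whence $\distp(P_{n+2},G')\ge\boundp{2}$ by \reflem{MMClowerG}, which strictly exceeds $\boundp{1}$ by \reflem{MMCDistinguish}; choosing the threshold $p/q$ strictly between $\boundp{1}$ and $\boundp{2}$ completes the reduction. The same argument applies to \Dist{|p|}, since on a matching the signs are irrelevant and all bounds are phrased through mismatch counts.

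The step I expect to be the main obstacle is pinning down the endpoint behaviour. A path has two degree-$1$ vertices, and naively aligning a path against a cubic graph leaves those vertices with two mismatches each, destroying the clean ``mismatch graph is a matching'' picture that made the cycle proof of \refthm{NPhard} go through. The pendant vertices $s,t$ are precisely the device that absorbs this defect, letting the endpoints align at mismatch $0$; the delicate part of the write-up is the edge-count argument verifying that an $\fmmc\le 1$ alignment cannot buy slack with $+1$ edges and must therefore be an honest Hamiltonian path whose endpoints land on $s$ and $t$.
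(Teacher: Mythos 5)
Your proposal is correct, and at its core it is the paper's construction: delete one edge of the cubic graph, attach a pendant leaf to each of its two endpoints (so all original vertices regain degree $3$ and the two new vertices have degree $1$), compare against $P_{n+2}$, and decide at the threshold $\boundp{1}$ versus $\boundp{2}$ using \reflem{MMCupperG}, \reflem{MMClowerG}, and \reflem{MMCDistinguish}. The two places where you deviate are interesting but minor. First, the bookkeeping of \emph{which} edge the Hamiltonian cycle uses: the paper produces two modified copies $G_1,G_2$ of $G$ (removing two different edges at a common vertex $v$) and observes that $G$ has a Hamiltonian cycle iff one of the two has a Hamiltonian path --- a two-query disjunctive reduction --- whereas you relocate the disjunction into the source problem by first establishing \NP-hardness of \hamcycle through a fixed edge via a three-query disjunction. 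These are equally legitimate truth-table reductions; the paper itself is not many-one here. Second, the converse direction: the paper argues \emph{locally} via degrees --- under $\fmmc\le 1$ the path endpoints must land on the leaves, and each inner path node (degree $2$) mapped to an original node (degree $3$) has its mismatch budget consumed by the degree difference, forcing both its path edges to be preserved --- while you argue \emph{globally} by edge counting: $b-a=|E(G')|-|E(P_{n+2})|=n/2$ for the numbers $b,a$ of $-1$ and $+1$ mismatch edges, and a matching on $n+2$ vertices has at most $n/2+1$ edges, forcing $a=0$ and hence a spanning embedding of the path. Your counting argument is correct and arguably slicker; the paper's local argument gives slightly more structural information (it pins down where every vertex maps) but both yield exactly the same gap $\boundp{1}$ versus $\boundp{2}$, consistent with the inapproximability threshold recorded in \refthm{MuliplicativeApproxThreshold}, and both transfer verbatim to \Dist{|p|}.
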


Similar to bounded degree input graphs, restricting graph isomorphism to trees allows it to be solved in polynomial time \cite{SubtreeIsomorphism} but \Distf is \NP-hard for trees \cite{Grohe}.
We show that \Distp remains \NP-hard for trees even when applying the bounded degree restriction simultaneously.
\begin{theorem}
	\label{thm:NPhardBoundedTree}
	\Distp and \Dist{|p|} are \NP-hard for $1\leq p\leq\infty$ even if restricted to bounded-degree trees.
\end{theorem}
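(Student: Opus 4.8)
The plan is to reduce from the strongly \NP-complete \threepartition problem, whose polynomially bounded numbers are exactly what lets us keep all degrees bounded. Given a \threepartition instance consisting of $3m$ integers $a_1,\dots,a_{3m}$ summing to $mB$ (each strictly between $B/4$ and $B/2$), I would encode each number $a_i$ as a bounded-degree \emph{item gadget} of ``size'' $a_i$ --- for concreteness a short comb or broom whose number of unit blocks is $a_i$ --- and assemble all $3m$ item gadgets into a single bounded-degree tree $T$ by hanging them off a balanced binary backbone. The second tree $T'$ would encode the $m$ target triples as $m$ \emph{bin gadgets} of capacity $B$, again hung off a backbone, with padding chosen so that $|T|=|T'|$. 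The blocks and the backbone are to be designed so that placing a whole item gadget into a bin costs at most one mismatch per vertex (producing only star components in the mismatch graph), whereas splitting an item across two bins, or overfilling a bin, forces several mismatches to concentrate at a single junction vertex.

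For the completeness direction I would show that a valid partition of the $a_i$ into $m$ triples of sum $B$ yields an alignment $\pi$ from $T$ to $T'$ whose mismatch graph is a disjoint union of stars with $\fmmc(\pi)$ equal to some fixed constant $k$ determined solely by the gadget design and not by the instance. Applying \reflem{MMCupperG} then gives $\distp(T,T')\le\boundp{k}$, and the same bound holds for \Dist{|p|} since those lemmas are stated for the mismatch count, which ignores signs.

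For the soundness direction I would argue that if no valid partition exists, then every alignment from $T$ to $T'$ must either break some item gadget across two bins or cram an over-capacity collection of items into one bin; in either case the rigidity of the backbone forces some vertex to accumulate at least $k+1$ mismatches. By \reflem{MMClowerG} this yields $\distp(T,T')\ge\boundp{k+1}$, and \reflem{MMCDistinguish} gives $\boundp{k}<\boundp{k+1}$, so the threshold $\boundp{k}$ cleanly separates yes- from no-instances and the reduction is correct for all $1\le p\le\infty$, simultaneously for \Distp and \Dist{|p|}.

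The main obstacle will be the soundness (rigidity) argument: tree-to-tree alignments are far less constrained than the path-to-cycle alignment of \refthm{NPhard}, so the gadgets --- the backbone, the attachment points, and the way an item's unit blocks thread into a bin --- must be engineered carefully enough that \emph{any} alignment avoiding a vertex of mismatch count $k+1$ is forced to map whole item gadgets onto whole bin slots, thereby reading off a genuine partition. Ruling out stray low-cost alignments that merely permute backbone structure, and proving that a single misplaced block provably overloads one vertex rather than dispersing harmlessly across the tree, is the delicate heart of the proof.
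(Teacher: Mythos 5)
Your high-level strategy coincides with the paper's: a reduction from \threepartition to a pair of bounded-degree trees, a YES-value of $\boundp{k}$ certified by a star-only mismatch graph via \reflem{MMCupperG}, a NO-value of $\boundp{k+1}$ via \reflem{MMClowerG}, and separation via \reflem{MMCDistinguish}. But what you have written is a plan, not a proof, and the step you yourself flag as ``the delicate heart'' --- the soundness/rigidity argument --- is exactly where the mathematical content lies; nothing in your sketch supplies it. That is a genuine gap: you posit item gadgets, bin gadgets, and a backbone with the required forcing properties, but give no construction and no argument that such gadgets exist, and the existence of trees with this rigidity is precisely what the theorem's proof must establish.

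Concretely, the paper resolves the rigidity problem with a degree-forcing mechanism absent from your sketch. Each $a_i$ is a path of $a_i$ black nodes decorated with three red leaves and one orange leaf per node, and additionally two pink leaves (each carrying a blue leaf) per \emph{inner} path node, so inner path nodes of $T_1$ have degree 8 while no node of $T_2$ exceeds degree 6; paths are joined into trees through pendant leaves (orange--orange in $T_1$, red--red in $T_2$), so connectivity costs no new high-degree junctions. Any alignment of cost below $\boundp{3}$ has $\fmmc<3$, hence moves degrees by at most 2, which forces inner path nodes onto inner path nodes, endpoints onto endpoints, and adjacent black nodes onto adjacent black nodes --- whole $a_i$-paths must be placed contiguously inside $A$-paths, reading off a partition. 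Crucially, the degree gap \emph{pre-loads} every inner path node with exactly 2 unavoidable mismatches in the YES case ($k=2$), so a single structural fault in the NO case pushes one such node to 3; this is what defeats the dispersal problem you worry about. Your stated design goal of ``at most one mismatch per vertex'' when an item sits correctly in a bin points the wrong way: with that much slack, a broken item can spread its extra mismatches over several vertices, and you have no argument that any single vertex must reach $k+1$. Likewise, your balanced binary backbone creates internal junction vertices whose degrees and possible permutations would themselves have to be controlled, a complication the paper's leaf-chaining avoids. Until gadgets with these forcing properties are specified and the NO-direction proved (including details such as the restriction $A\geq 8$ so every path has an inner node, and the blue/pink padding that equalizes $|T_1|=|T_2|$), the reduction is not established.
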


\begin{figure*}[htb!]
	\centering
	\begin{subfigure}[t]{0.25\textwidth}
		\centering
		\resizebox{\textwidth}{!}{\begin{tikzpicture}[
every node/.style={draw,shape=circle,fill=red}]

\node[fill=black](px11){};
\node[below=1.4cm of px11, fill=black](px12){};
\node[below=1.4cm of px12, fill=black](px13){};
\node[below=1.4cm of px13, fill=black](px14){};

\node[above left=0.27cm and 0.4cm of px11, fill=red](ex111){};
\node[left=0.29cm of px11, fill=red](ex112){};
\node[below left=0.27cm and 0.4cm of px11, fill=red](ex113){};
\node[right=0.29cm of px11, fill=orange](ex114){};

\node[above left=0.27cm and 0.4cm of px12, fill=red](ex121){};
\node[left=0.29cm of px12, fill=red](ex122){};
\node[below left=0.27cm and 0.4cm of px12, fill=red](ex123){};
\node[right=0.29cm of px12, fill=orange](ex124){};
\node[above right=0.27cm and 0.4cm of px12, fill=pink](ex125){};
\node[below right=0.27cm and 0.4cm of px12, fill=pink](ex126){};
\node[right=0.29cm of ex125, fill=blue](ex127){};
\node[right=0.29cm of ex126, fill=blue](ex128){};

\node[above left=0.27cm and 0.4cm of px13, fill=red](ex131){};
\node[left=0.29cm of px13, fill=red](ex132){};
\node[below left=0.27cm and 0.4cm of px13, fill=red](ex133){};
\node[right=0.29cm of px13, fill=orange](ex134){};
\node[above right=0.27cm and 0.4cm of px13, fill=pink](ex135){};
\node[below right=0.27cm and 0.4cm of px13, fill=pink](ex136){};
\node[right=0.29cm of ex135, fill=blue](ex137){};
\node[right=0.29cm of ex136, fill=blue](ex138){};

\node[above left=0.27cm and 0.4cm of px14, fill=red](ex141){};
\node[left=0.29cm of px14, fill=red](ex142){};
\node[below left=0.27cm and 0.4cm of px14, fill=red](ex143){};
\node[right=0.29cm of px14, fill=orange](ex144){};

\node[below=1.4cm of px14, fill=black](px21){};
\node[below=1.4cm of px21, fill=black](px22){};
\node[below=1.4cm of px22, fill=black](px23){};

\node[above left=0.27cm and 0.4cm of px21, fill=red](ex211){};
\node[left=0.29cm of px21, fill=red](ex212){};
\node[below left=0.27cm and 0.4cm of px21, fill=red](ex213){};
\node[right=0.29cm of px21, fill=orange](ex214){};

\node[above left=0.27cm and 0.4cm of px22, fill=red](ex221){};
\node[left=0.29cm of px22, fill=red](ex222){};
\node[below left=0.27cm and 0.4cm of px22, fill=red](ex223){};
\node[right=0.29cm of px22, fill=orange](ex224){};
\node[above right=0.27cm and 0.4cm of px22, fill=pink](ex225){};
\node[below right=0.27cm and 0.4cm of px22, fill=pink](ex226){};
\node[right=0.29cm of ex225, fill=blue](ex227){};
\node[right=0.29cm of ex226, fill=blue](ex228){};

\node[above left=0.27cm and 0.4cm of px23, fill=red](ex231){};
\node[left=0.29cm of px23, fill=red](ex232){};
\node[below left=0.27cm and 0.4cm of px23, fill=red](ex233){};
\node[right=0.29cm of px23, fill=orange](ex234){};

\node[below=1.4cm of px23, fill=black](px31){};
\node[below=1.4cm of px31, fill=black](px32){};
\node[below=1.4cm of px32, fill=black](px33){};

\node[above left=0.27cm and 0.4cm of px31, fill=red](ex311){};
\node[left=0.29cm of px31, fill=red](ex312){};
\node[below left=0.27cm and 0.4cm of px31, fill=red](ex313){};
\node[right=0.29cm of px31, fill=orange](ex314){};

\node[above left=0.27cm and 0.4cm of px32, fill=red](ex321){};
\node[left=0.29cm of px32, fill=red](ex322){};
\node[below left=0.27cm and 0.4cm of px32, fill=red](ex323){};
\node[right=0.29cm of px32, fill=orange](ex324){};
\node[above right=0.27cm and 0.4cm of px32, fill=pink](ex325){};
\node[below right=0.27cm and 0.4cm of px32, fill=pink](ex326){};
\node[right=0.29cm of ex325, fill=blue](ex327){};
\node[right=0.29cm of ex326, fill=blue](ex328){};

\node[above left=0.27cm and 0.4cm of px33, fill=red](ex331){};
\node[left=0.29cm of px33, fill=red](ex332){};
\node[below left=0.27cm and 0.4cm of px33, fill=red](ex333){};
\node[right=0.29cm of px33, fill=orange](ex334){};

\node[right=3.2cm of px11, fill=black](px41){};
\node[below=1.4cm of px41, fill=black](px42){};
\node[below=1.4cm of px42, fill=black](px43){};
\node[below=1.4cm of px43, fill=black](px44){};

\node[above right=0.27cm and 0.4cm of px41, fill=red](ex411){};
\node[right=0.29cm of px41, fill=red](ex412){};
\node[below right=0.27cm and 0.4cm of px41, fill=red](ex413){};
\node[left=0.29cm of px41, fill=orange](ex414){};

\node[above right=0.27cm and 0.4cm of px42, fill=red](ex421){};
\node[right=0.29cm of px42, fill=red](ex422){};
\node[below right=0.27cm and 0.4cm of px42, fill=red](ex423){};
\node[left=0.29cm of px42, fill=orange](ex424){};
\node[above left=0.27cm and 0.4cm of px42, fill=pink](ex425){};
\node[below left=0.27cm and 0.4cm of px42, fill=pink](ex426){};
\node[left=0.29cm of ex425, fill=blue](ex427){};
\node[left=0.29cm of ex426, fill=blue](ex428){};

\node[above right=0.27cm and 0.4cm of px43, fill=red](ex431){};
\node[right=0.29cm of px43, fill=red](ex432){};
\node[below right=0.27cm and 0.4cm of px43, fill=red](ex433){};
\node[left=0.29cm of px43, fill=orange](ex434){};
\node[above left=0.27cm and 0.4cm of px43, fill=pink](ex435){};
\node[below left=0.27cm and 0.4cm of px43, fill=pink](ex436){};
\node[left=0.29cm of ex435, fill=blue](ex437){};
\node[left=0.29cm of ex436, fill=blue](ex438){};

\node[above right=0.27cm and 0.4cm of px44, fill=red](ex441){};
\node[right=0.29cm of px44, fill=red](ex442){};
\node[below right=0.27cm and 0.4cm of px44, fill=red](ex443){};
\node[left=0.29cm of px44, fill=orange](ex444){};

\node[below=1.4cm of px44, fill=black](px51){};
\node[below=1.4cm of px51, fill=black](px52){};
\node[below=1.4cm of px52, fill=black](px53){};

\node[above right=0.27cm and 0.4cm of px51, fill=red](ex511){};
\node[right=0.29cm of px51, fill=red](ex512){};
\node[below right=0.27cm and 0.4cm of px51, fill=red](ex513){};
\node[left=0.29cm of px51, fill=orange](ex514){};

\node[above right=0.27cm and 0.4cm of px52, fill=red](ex521){};
\node[right=0.29cm of px52, fill=red](ex522){};
\node[below right=0.27cm and 0.4cm of px52, fill=red](ex523){};
\node[left=0.29cm of px52, fill=orange](ex524){};
\node[above left=0.27cm and 0.4cm of px52, fill=pink](ex525){};
\node[below left=0.27cm and 0.4cm of px52, fill=pink](ex526){};
\node[left=0.29cm of ex525, fill=blue](ex527){};
\node[left=0.29cm of ex526, fill=blue](ex528){};

\node[above right=0.27cm and 0.4cm of px53, fill=red](ex531){};
\node[right=0.29cm of px53, fill=red](ex532){};
\node[below right=0.27cm and 0.4cm of px53, fill=red](ex533){};
\node[left=0.29cm of px53, fill=orange](ex534){};

\node[below=1.4cm of px53, fill=black](px61){};
\node[below=1.4cm of px61, fill=black](px62){};
\node[below=1.4cm of px62, fill=black](px63){};

\node[above right=0.27cm and 0.4cm of px61, fill=red](ex611){};
\node[right=0.29cm of px61, fill=red](ex612){};
\node[below right=0.27cm and 0.4cm of px61, fill=red](ex613){};
\node[left=0.29cm of px61, fill=orange](ex614){};

\node[above right=0.27cm and 0.4cm of px62, fill=red](ex621){};
\node[right=0.29cm of px62, fill=red](ex622){};
\node[below right=0.27cm and 0.4cm of px62, fill=red](ex623){};
\node[left=0.29cm of px62, fill=orange](ex624){};
\node[above left=0.27cm and 0.4cm of px62, fill=pink](ex625){};
\node[below left=0.27cm and 0.4cm of px62, fill=pink](ex626){};
\node[left=0.29cm of ex625, fill=blue](ex627){};
\node[left=0.29cm of ex626, fill=blue](ex628){};

\node[above right=0.27cm and 0.4cm of px63, fill=red](ex631){};
\node[right=0.29cm of px63, fill=red](ex632){};
\node[below right=0.27cm and 0.4cm of px63, fill=red](ex633){};
\node[left=0.29cm of px63, fill=orange](ex634){};

\draw
(px11) -- (px12)
(px12) -- (px13)
(px13) -- (px14)

(px11) -- (ex111)
(px11) -- (ex112)
(px11) -- (ex113)
(px11) -- (ex114)

(px12) -- (ex121)
(px12) -- (ex122)
(px12) -- (ex123)
(px12) -- (ex124)
(px12) -- (ex125)
(px12) -- (ex126)
(ex125) -- (ex127)
(ex126) -- (ex128)

(px13) -- (ex131)
(px13) -- (ex132)
(px13) -- (ex133)
(px13) -- (ex134)
(px13) -- (ex135)
(px13) -- (ex136)
(ex135) -- (ex137)
(ex136) -- (ex138)

(px14) -- (ex141)
(px14) -- (ex142)
(px14) -- (ex143)
(px14) -- (ex144)

(ex144) -- (ex214)
(px21) -- (px22)
(px22) -- (px23)

(px21) -- (ex211)
(px21) -- (ex212)
(px21) -- (ex213)
(px21) -- (ex214)

(px22) -- (ex221)
(px22) -- (ex222)
(px22) -- (ex223)
(px22) -- (ex224)
(px22) -- (ex225)
(px22) -- (ex226)
(ex225) -- (ex227)
(ex226) -- (ex228)

(px23) -- (ex231)
(px23) -- (ex232)
(px23) -- (ex233)
(px23) -- (ex234)

(ex234) -- (ex314)
(px31) -- (px32)
(px32) -- (px33)

(px31) -- (ex311)
(px31) -- (ex312)
(px31) -- (ex313)
(px31) -- (ex314)

(px32) -- (ex321)
(px32) -- (ex322)
(px32) -- (ex323)
(px32) -- (ex324)
(px32) -- (ex325)
(px32) -- (ex326)
(ex325) -- (ex327)
(ex326) -- (ex328)

(px33) -- (ex331)
(px33) -- (ex332)
(px33) -- (ex333)
(px33) -- (ex334)

(px41) -- (px42)
(px42) -- (px43)
(px43) -- (px44)

(px41) -- (ex411)
(px41) -- (ex412)
(px41) -- (ex413)
(px41) -- (ex414)

(px42) -- (ex421)
(px42) -- (ex422)
(px42) -- (ex423)
(px42) -- (ex424)
(px42) -- (ex425)
(px42) -- (ex426)
(ex425) -- (ex427)
(ex426) -- (ex428)

(px43) -- (ex431)
(px43) -- (ex432)
(px43) -- (ex433)
(px43) -- (ex434)
(px43) -- (ex435)
(px43) -- (ex436)
(ex435) -- (ex437)
(ex436) -- (ex438)

(px44) -- (ex441)
(px44) -- (ex442)
(px44) -- (ex443)
(px44) -- (ex444)

(ex444) -- (ex514)
(px51) -- (px52)
(px52) -- (px53)

(px51) -- (ex511)
(px51) -- (ex512)
(px51) -- (ex513)
(px51) -- (ex514)

(px52) -- (ex521)
(px52) -- (ex522)
(px52) -- (ex523)
(px52) -- (ex524)
(px52) -- (ex525)
(px52) -- (ex526)
(ex525) -- (ex527)
(ex526) -- (ex528)

(px53) -- (ex531)
(px53) -- (ex532)
(px53) -- (ex533)
(px53) -- (ex534)

(ex534) -- (ex614)
(px61) -- (px62)
(px62) -- (px63)

(px61) -- (ex611)
(px61) -- (ex612)
(px61) -- (ex613)
(px61) -- (ex614)

(px62) -- (ex621)
(px62) -- (ex622)
(px62) -- (ex623)
(px62) -- (ex624)
(px62) -- (ex625)
(px62) -- (ex626)
(ex625) -- (ex627)
(ex626) -- (ex628)

(px63) -- (ex631)
(px63) -- (ex632)
(px63) -- (ex633)
(px63) -- (ex634)

(ex334) -- (ex634)
;

\end{tikzpicture}
 		}		
		\caption{$T_1$}
	\end{subfigure}~
	\hspace{0.5cm}
	~ 
	\begin{subfigure}[t]{0.25\textwidth}
		\centering
		\resizebox{\textwidth}{!}{\begin{tikzpicture}[
every node/.style={draw,shape=circle,fill=red}]

\node[fill=black](px11){};
\node[below=1.4cm of px11, fill=black](px12){};
\node[below=1.4cm of px12, fill=black](px13){};
\node[below=1.4cm of px13, fill=black](px14){};
\node[below=1.4cm of px14, fill=black](px15){};
\node[below=1.4cm of px15, fill=black](px16){};
\node[below=1.4cm of px16, fill=black](px17){};
\node[below=1.4cm of px17, fill=black](px18){};
\node[below=1.4cm of px18, fill=black](px19){};
\node[below=1.4cm of px19, fill=black](px1a){};

\node[above left=0.27cm and 0.4cm of px11, fill=red](ex111){};
\node[left=0.29cm of px11, fill=red](ex112){};
\node[below left=0.27cm and 0.4cm of px11, fill=red](ex113){};
\node[right=0.29cm of px11, fill=orange](ex11a){};

\node[above left=0.27cm and 0.4cm of px12, fill=red](ex121){};
\node[left=0.29cm of px12, fill=red](ex122){};
\node[below left=0.27cm and 0.4cm of px12, fill=red](ex123){};
\node[right=0.29cm of px12, fill=orange](ex12a){};

\node[above left=0.27cm and 0.4cm of px13, fill=red](ex131){};
\node[left=0.29cm of px13, fill=red](ex132){};
\node[below left=0.27cm and 0.4cm of px13, fill=red](ex133){};
\node[right=0.29cm of px13, fill=orange](ex13a){};

\node[above left=0.27cm and 0.4cm of px14, fill=red](ex141){};
\node[left=0.29cm of px14, fill=red](ex142){};
\node[below left=0.27cm and 0.4cm of px14, fill=red](ex143){};
\node[right=0.29cm of px14, fill=orange](ex14a){};

\node[above left=0.27cm and 0.4cm of px15, fill=red](ex151){};
\node[left=0.29cm of px15, fill=red](ex152){};
\node[below left=0.27cm and 0.4cm of px15, fill=red](ex153){};
\node[right=0.29cm of px15, fill=orange](ex15a){};

\node[above left=0.27cm and 0.4cm of px16, fill=red](ex161){};
\node[left=0.29cm of px16, fill=red](ex162){};
\node[below left=0.27cm and 0.4cm of px16, fill=red](ex163){};
\node[right=0.29cm of px16, fill=orange](ex16a){};

\node[above left=0.27cm and 0.4cm of px17, fill=red](ex171){};
\node[left=0.29cm of px17, fill=red](ex172){};
\node[below left=0.27cm and 0.4cm of px17, fill=red](ex173){};
\node[right=0.29cm of px17, fill=orange](ex17a){};

\node[above left=0.27cm and 0.4cm of px18, fill=red](ex181){};
\node[left=0.29cm of px18, fill=red](ex182){};
\node[below left=0.27cm and 0.4cm of px18, fill=red](ex183){};
\node[right=0.29cm of px18, fill=orange](ex18a){};

\node[above left=0.27cm and 0.4cm of px19, fill=red](ex191){};
\node[left=0.29cm of px19, fill=red](ex192){};
\node[below left=0.27cm and 0.4cm of px19, fill=red](ex193){};
\node[right=0.29cm of px19, fill=orange](ex19a){};

\node[above left=0.27cm and 0.4cm of px1a, fill=red](ex1a1){};
\node[left=0.29cm of px1a, fill=red](ex1a2){};
\node[below left=0.27cm and 0.4cm of px1a, fill=red](ex1a3){};
\node[right=0.29cm of px1a, fill=orange](ex1aa){};

\node[right=1.93cm of px11, fill=black](px21){};
\node[below=1.4cm of px21, fill=black](px22){};
\node[below=1.4cm of px22, fill=black](px23){};
\node[below=1.4cm of px23, fill=black](px24){};
\node[below=1.4cm of px24, fill=black](px25){};
\node[below=1.4cm of px25, fill=black](px26){};
\node[below=1.4cm of px26, fill=black](px27){};
\node[below=1.4cm of px27, fill=black](px28){};
\node[below=1.4cm of px28, fill=black](px29){};
\node[below=1.4cm of px29, fill=black](px2a){};

\node[above right=0.27cm and 0.4cm of px21, fill=red](ex211){};
\node[right=0.29cm of px21, fill=red](ex212){};
\node[below right=0.27cm and 0.4cm of px21, fill=red](ex213){};
\node[left=0.29cm of px21, fill=orange](ex21a){};
\node[right=0.29cm of ex211, fill=blue](ex214){};
\node[right=0.29cm of ex212, fill=blue](ex215){};
\node[right=0.29cm of ex213, fill=blue](ex216){};
\node[right=0.29cm of ex214, fill=pink](ex217){};
\node[right=0.29cm of ex215, fill=pink](ex218){};
\node[right=0.29cm of ex216, fill=pink](ex219){};

\node[above right=0.27cm and 0.4cm of px22, fill=red](ex221){};
\node[right=0.29cm of px22, fill=red](ex222){};
\node[below right=0.27cm and 0.4cm of px22, fill=red](ex223){};
\node[left=0.29cm of px22, fill=orange](ex22a){};
\node[right=0.29cm of ex221, fill=blue](ex224){};
\node[right=0.29cm of ex222, fill=blue](ex225){};
\node[right=0.29cm of ex223, fill=blue](ex226){};
\node[right=0.29cm of ex224, fill=pink](ex227){};
\node[right=0.29cm of ex225, fill=pink](ex228){};
\node[right=0.29cm of ex226, fill=pink](ex229){};

\node[above right=0.27cm and 0.4cm of px23, fill=red](ex231){};
\node[right=0.29cm of px23, fill=red](ex232){};
\node[below right=0.27cm and 0.4cm of px23, fill=red](ex233){};
\node[left=0.29cm of px23, fill=orange](ex23a){};
\node[right=0.29cm of ex231, fill=blue](ex234){};
\node[right=0.29cm of ex232, fill=blue](ex235){};
\node[right=0.29cm of ex233, fill=blue](ex236){};
\node[right=0.29cm of ex234, fill=pink](ex237){};
\node[right=0.29cm of ex235, fill=pink](ex238){};
\node[right=0.29cm of ex236, fill=pink](ex239){};

\node[above right=0.27cm and 0.4cm of px24, fill=red](ex241){};
\node[right=0.29cm of px24, fill=red](ex242){};
\node[below right=0.27cm and 0.4cm of px24, fill=red](ex243){};
\node[left=0.29cm of px24, fill=orange](ex24a){};
\node[right=0.29cm of ex241, fill=blue](ex244){};
\node[right=0.29cm of ex242, fill=blue](ex245){};
\node[right=0.29cm of ex243, fill=blue](ex246){};
\node[right=0.29cm of ex244, fill=pink](ex247){};
\node[right=0.29cm of ex245, fill=pink](ex248){};
\node[right=0.29cm of ex246, fill=pink](ex249){};

\node[above right=0.27cm and 0.4cm of px25, fill=red](ex251){};
\node[right=0.29cm of px25, fill=red](ex252){};
\node[below right=0.27cm and 0.4cm of px25, fill=red](ex253){};
\node[left=0.29cm of px25, fill=orange](ex25a){};
\node[right=0.29cm of ex251, fill=blue](ex254){};
\node[right=0.29cm of ex252, fill=blue](ex255){};
\node[right=0.29cm of ex253, fill=blue](ex256){};
\node[right=0.29cm of ex254, fill=pink](ex257){};
\node[right=0.29cm of ex255, fill=pink](ex258){};
\node[right=0.29cm of ex256, fill=pink](ex259){};

\node[above right=0.27cm and 0.4cm of px26, fill=red](ex261){};
\node[right=0.29cm of px26, fill=red](ex262){};
\node[below right=0.27cm and 0.4cm of px26, fill=red](ex263){};
\node[left=0.29cm of px26, fill=orange](ex26a){};
\node[right=0.29cm of ex261, fill=blue](ex264){};
\node[right=0.29cm of ex264, fill=pink](ex265){};

\node[above right=0.27cm and 0.4cm of px27, fill=red](ex271){};
\node[right=0.29cm of px27, fill=red](ex272){};
\node[below right=0.27cm and 0.4cm of px27, fill=red](ex273){};
\node[left=0.29cm of px27, fill=orange](ex27a){};

\node[above right=0.27cm and 0.4cm of px28, fill=red](ex281){};
\node[right=0.29cm of px28, fill=red](ex282){};
\node[below right=0.27cm and 0.4cm of px28, fill=red](ex283){};
\node[left=0.29cm of px28, fill=orange](ex28a){};

\node[above right=0.27cm and 0.4cm of px29, fill=red](ex291){};
\node[right=0.29cm of px29, fill=red](ex292){};
\node[below right=0.27cm and 0.4cm of px29, fill=red](ex293){};
\node[left=0.29cm of px29, fill=orange](ex29a){};

\node[above right=0.27cm and 0.4cm of px2a, fill=red](ex2a1){};
\node[left=0.29cm of px27, fill=orange](ex27a){};
\node[right=0.29cm of px2a, fill=red](ex2a2){};
\node[below right=0.27cm and 0.4cm of px2a, fill=red](ex2a3){};
\node[left=0.29cm of px2a, fill=orange](ex2aa){};

\draw
(px11) -- (px12)
(px12) -- (px13)
(px13) -- (px14)
(px14) -- (px15)
(px15) -- (px16)
(px16) -- (px17)
(px17) -- (px18)
(px18) -- (px19)
(px19) -- (px1a)

(px11) -- (ex111)
(px11) -- (ex112)
(px11) -- (ex113)
(px11) -- (ex11a)

(px12) -- (ex121)
(px12) -- (ex122)
(px12) -- (ex123)
(px12) -- (ex12a)

(px13) -- (ex131)
(px13) -- (ex132)
(px13) -- (ex133)
(px13) -- (ex13a)

(px14) -- (ex141)
(px14) -- (ex142)
(px14) -- (ex143)
(px14) -- (ex14a)

(px15) -- (ex151)
(px15) -- (ex152)
(px15) -- (ex153)
(px15) -- (ex15a)

(px16) -- (ex161)
(px16) -- (ex162)
(px16) -- (ex163)
(px16) -- (ex16a)

(px17) -- (ex171)
(px17) -- (ex172)
(px17) -- (ex173)
(px17) -- (ex17a)

(px18) -- (ex181)
(px18) -- (ex182)
(px18) -- (ex183)
(px18) -- (ex18a)

(px19) -- (ex191)
(px19) -- (ex192)
(px19) -- (ex193)
(px19) -- (ex19a)

(px1a) -- (ex1a1)
(px1a) -- (ex1a2)
(px1a) -- (ex1a3)
(px1a) -- (ex1aa)

(px21) -- (px22)
(px22) -- (px23)
(px23) -- (px24)
(px24) -- (px25)
(px25) -- (px26)
(px26) -- (px27)
(px27) -- (px28)
(px28) -- (px29)
(px29) -- (px2a)

(px21) -- (ex211)
(px21) -- (ex212)
(px21) -- (ex213)
(px21) -- (ex21a)
(ex211) -- (ex214)
(ex212) -- (ex215)
(ex213) -- (ex216)
(ex214) -- (ex217)
(ex215) -- (ex218)
(ex216) -- (ex219)

(px22) -- (ex221)
(px22) -- (ex222)
(px22) -- (ex223)
(px22) -- (ex22a)
(ex221) -- (ex224)
(ex222) -- (ex225)
(ex223) -- (ex226)
(ex224) -- (ex227)
(ex225) -- (ex228)
(ex226) -- (ex229)

(px23) -- (ex231)
(px23) -- (ex232)
(px23) -- (ex233)
(px23) -- (ex23a)
(ex231) -- (ex234)
(ex232) -- (ex235)
(ex233) -- (ex236)
(ex234) -- (ex237)
(ex235) -- (ex238)
(ex236) -- (ex239)

(px24) -- (ex241)
(px24) -- (ex242)
(px24) -- (ex243)
(px24) -- (ex24a)
(ex241) -- (ex244)
(ex242) -- (ex245)
(ex243) -- (ex246)
(ex244) -- (ex247)
(ex245) -- (ex248)
(ex246) -- (ex249)

(px25) -- (ex251)
(px25) -- (ex252)
(px25) -- (ex253)
(px25) -- (ex25a)
(ex251) -- (ex254)
(ex252) -- (ex255)
(ex253) -- (ex256)
(ex254) -- (ex257)
(ex255) -- (ex258)
(ex256) -- (ex259)

(px26) -- (ex261)
(px26) -- (ex262)
(px26) -- (ex263)
(px26) -- (ex26a)
(ex261) -- (ex264)
(ex264) -- (ex265)

(px27) -- (ex271)
(px27) -- (ex272)
(px27) -- (ex273)
(px27) -- (ex27a)

(px28) -- (ex281)
(px28) -- (ex282)
(px28) -- (ex283)
(px28) -- (ex28a)

(px29) -- (ex291)
(px29) -- (ex292)
(px29) -- (ex293)
(px29) -- (ex29a)

(px2a) -- (ex2a1)
(px2a) -- (ex2a2)
(px2a) -- (ex2a3)
(px2a) -- (ex2aa)

(ex1a3) -- (ex2a3)
;

\end{tikzpicture} 		}
		\caption{$T_2$}
	\end{subfigure}
	\caption{Example of the construction in the proof of \refthm{NPhardBoundedTree} where $m$=2, $A$=10, $a_1$=$a_2$=4 and $a_3$=$a_4$=$a_5$=$a_6$=3. Best viewed in color.}
	\label{fig:BoundedTrees}
\end{figure*}

\begin{proof}
	The proof is done by reduction from the \NP-hard \threepartition problem \cite{Intractability} that is defined as follows.
	Given the integers $A$ and $a_1,\hdots,a_{3m}$ in unary representation, such that $\sum_{i=1}^{3m}a_i=mA$ and $A/4<a_i<A/2$ for $1\leq i\leq 3m$, decide whether there exists a partition of $a_1,\hdots,a_{3m}$ into $m$ groups of size 3 such that the elements in each group sum up to exactly $A$.
	For technical reasons, we restrict the reduction to $A\geq8$. However, the ignored cases are trivial. Precisely, for $A\in\{6,7\}$ the answer is always YES and for $A\in[5]$ there exist no valid instances.
	
	Given an instance of \threepartition with $A\geq8$, we construct two trees $T_1$ and $T_2$ such that the given instance has answer YES, if and only if $\distp(T_1,T_2)\leq\boundp{2}$. \reffig{BoundedTrees} shows an example of $T_1$ and $T_2$ for $m=2$ and $A=10$.
	For illustrative reasons we assign each node a color during the construction.
	The colors are used in the example and we will refer to certain nodes by their color later in the proof.
	However, they do not restrict the possible alignments in any way.
	
	Initialize $T_1$ as the disjoint union of paths $p^1_1,\hdots,p^1_{3m}$ such that $p^1_i$ has $a_i$ black nodes; initialize $T_2$ as the disjoint union of paths $p^2_1,\hdots,p^2_{m}$ consisting of $A$ black nodes each.
	In the following we refer to one endpoint of $p^k_i$ as $e_1(p^k_i)$ and the other endpoint as $e_2(p^k_i)$.
	We attach three new red leaves and one new orange leaf to each black node in both $T_1$ and $T_2$.	
	Next we modify the graphs into trees by connecting the paths.
	For $1\leq i\leq3m-1$ we add an edge between the orange leaf adjacent to $e_1(p^1_i)$ and the orange leaf adjacent to $e_2(p^1_{i+1})$.
	For $1\leq i\leq m-1$ we add an edge between one of the red leaves adjacent to $e_1(p^2_i)$ and one of the red leaves adjacent to $e_2(p^2_{i+1})$.
	
	Next we attach two new pink leaves to each inner (non-endpoint) path node in $T_1$ and attach a new blue leaf to each pink node.
	Then we add the same number of blue nodes to $T_2$ and connect each blue node to one of the red nodes with degree 1.
	Finally, we attach a new pink leaf to each blue node.
	Note that both $T_1$ and $T_2$ are trees with bounded degree.
	Precisely, the highest degree in $T_1$ is 8 and 6 in $T_2$ independent of the problem instance.
	
	Intuitively, the construction ensures that every inner path node has already 2 mismatches just because of the degree difference.
	If there is no partition, at least one path in $T_1$ cannot be mapped contiguously into a path in $T_2$ which raises the \mc of some inner path node to at least 3.	
	Simultaneously, the construction ensures that there is an alignment for which the mismatch graph consists only of stars with maximum degree 2 if there is an alignment.
	
	The formal continuation of this proof can be found in \refapx{ComplexityOperatorNorms}.
\end{proof}

\section{Approximability for Operator Norms}
\label{sec:approx}
In this section we investigate the approximability of \Distp and \Dist{|p|} for $1\leq p\leq\infty$.
Again, we omit specifying the possible values for $p$ and mentioning \Dist{|p|} explicitly as the proofs work the same for it.
We consider the following two possibilities to measure the error of an approximation algorithm for a minimization problem.
An algorithm $\mathcal{A}$ has \emph{multiplicative error} $\alpha>1$, if for any instance $\mathcal{I}$ of the problem with an optimum $\text{OPT}(\mathcal{I})$, $\mathcal{A}$ outputs a solution with value $\mathcal{A}(\mathcal{I})$ such that $\text{OPT}(\mathcal{I})\leq\mathcal{A}(\mathcal{I})\leq\alpha\text{OPT}(\mathcal{I})$.
In this case we call $\mathcal{A}$ an \emph{$\alpha$-approximation algorithm}.
An algorithm $\mathcal{B}$ has \emph{additive error} $\varepsilon>0$, if $\text{OPT}(\mathcal{I})\leq\mathcal{B}(\mathcal{I})\leq\text{OPT}(\mathcal{I})+\varepsilon$ for any instance $\mathcal{I}$.

Approximating \Distp with multiplicative error is at least as hard as the graph isomorphism problem (GI) since such an approximation algorithm $\mathcal{A}$ could be used to decide GI considering that $\mathcal{A}(G,H)=0$ if and only if $G$ is isomorphic to $H$.

Furthermore, we can deduce thresholds under which the $\alpha$-approximation is NP-hard using the gap between the decision values of the reduction in each hardness proof from \refsec{ComplexityOperatorNorms}.
\begin{theorem}
	\label{thm:MuliplicativeApproxThreshold}
	For $1\leq p\leq\infty$ and any $\varepsilon>0$, unless $\PTIME=\NP$, there is no polynomial time approximation algorithm for \Distp or \Dist{|p|} with a multiplicative error guarantee of
	\begin{enumerate}
		\item $\boundp{3}-\varepsilon$, even if both input graphs have bounded degree,
		\item $\boundp{2}-\varepsilon$, even if one input graph is a path and the other one has bounded degree,
		\item $\frac{\boundp{3}}{\boundp{2}}-\varepsilon$, even if both input graphs are trees with bounded degree.
	\end{enumerate}
	\begin{proof}
		We recall the proof of \refthm{NPhard}.
		If $G$ has a Hamiltonian cycle, then $\distp(C_n,G)\leq\boundp{1}=1$.
		Otherwise $\distp(C_n,G)\geq\boundp{3}$.
		Assume there is a polynomial time approximation algorithm $\mathcal{A}$ with a multiplicative error guarantee of $\boundp{3}-\varepsilon$ for $\varepsilon>0$.
		Then we	can distinguish the two cases by checking whether $\mathcal{A}(C_n,G)<\boundp{3}$ and therefore decide \hamcycle on 3-regular graphs in polynomial time.
		The same argument can be used for the other bounds using the proofs of \refthm{NPhardPath} and \refthm{NPhardBoundedTree}, respectively.
	\end{proof}
\end{theorem}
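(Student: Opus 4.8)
The plan is to turn each of the three gap-producing reductions of \refsec{ComplexityOperatorNorms} into an inapproximability statement by reading off the multiplicative gap between the distances that the reduction assigns to YES- and NO-instances. The underlying principle is the standard gap-preservation argument: if a polynomial-time reduction maps YES-instances to graph pairs with $\distp$ at most $a$ and NO-instances to pairs with $\distp$ at least $b>a$, then no polynomial-time $\alpha$-approximation with $\alpha<b/a$ can exist unless $\PTIME=\NP$. Indeed, on a YES-instance such an algorithm $\mathcal{A}$ returns a value in $[a,\alpha a]\subseteq[a,b)$, whereas on a NO-instance it returns at least $\mathrm{OPT}\ge b$; thus testing whether $\mathcal{A}$ outputs a value below the threshold $b$ decides the reduced \NP-hard problem in polynomial time.

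First I would instantiate this with \refthm{NPhard}: a $3$-regular graph $G$ with a Hamiltonian cycle gives $\distp(C_n,G)\le\boundp{1}=1$, while one without gives $\distp(C_n,G)\ge\boundp{3}$, and both inputs have bounded degree. Taking $a=\boundp{1}=1$ and $b=\boundp{3}$ yields the gap ratio $\boundp{3}$ and hence claim~(1): for any $\varepsilon>0$ an algorithm with guarantee $\boundp{3}-\varepsilon$ would output a value strictly below $\boundp{3}$ exactly on the Hamiltonian instances, solving \hamcycle on $3$-regular graphs. For claim~(2) I would reuse the reduction behind \refthm{NPhardPath}, whose YES- and NO-instance distances are $\boundp{1}=1$ and $\boundp{2}$ with one input a path, giving the gap ratio $\boundp{2}$. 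For claim~(3) I would use the \threepartition reduction of \refthm{NPhardBoundedTree}, in which a valid partition gives $\distp(T_1,T_2)\le\boundp{2}$ and its absence gives $\distp(T_1,T_2)\ge\boundp{3}$ on bounded-degree trees, so that $a=\boundp{2}$, $b=\boundp{3}$, and the gap ratio is $\boundp{3}/\boundp{2}$.

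The only point beyond bookkeeping is to confirm that each gap is nontrivial, that is, that the ratios $\boundp{3}$, $\boundp{2}$, and $\boundp{3}/\boundp{2}$ all exceed $1$; this is exactly what \reflem{MMCDistinguish} supplies, since it gives $\boundp{1}<\boundp{2}<\boundp{3}$. Granting this, in each case the stated guarantee $\alpha$ satisfies $\alpha\cdot a<b$ strictly, so the comparison against $b$ separates YES- from NO-instances as above. Since all three reductions apply verbatim to \Dist{|p|} (as noted throughout \refsec{ComplexityOperatorNorms}), the identical argument covers that problem. I do not anticipate a real obstacle here, as the hardness theorems were engineered to realize precisely these gaps; the whole difficulty already lives in those reductions, and the present proof reduces to extracting $a$ and $b$ from each construction and checking that $b/a$ equals the claimed threshold.
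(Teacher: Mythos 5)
Your proposal is correct and matches the paper's own argument: the paper likewise extracts the YES/NO gaps $\left(\boundp{1},\boundp{3}\right)$, $\left(\boundp{1},\boundp{2}\right)$, and $\left(\boundp{2},\boundp{3}\right)$ from the reductions behind \refthm{NPhard}, \refthm{NPhardPath}, and \refthm{NPhardBoundedTree}, and decides the underlying \NP-hard problem by thresholding the approximation algorithm's output. Your explicit check via \reflem{MMCDistinguish} that the ratios exceed $1$ is a minor elaboration the paper leaves implicit, not a different route.
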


In particular, this implies that there is no polynomial time approximation scheme (PTAS) for \Distp or \Dist{|p|} under the respective restrictions.

Next we show the additive approximation hardness by scaling up the gap between the two decision values of the reduction in the proof of \refthm{NPhard}.
For this we replace each edge with a colored gadget and then modify the graph so that an optimal alignment has to be color-preserving.
The proof of \refthm{AdditiveApproxHardness} can be found in \refapx{approx}.
\begin{theorem}
	\label{thm:AdditiveApproxHardness}
	For $1\leq p\leq\infty$ there is no polynomial time approximation algorithm for \Distp with any constant additive error guarantee unless $\PTIME=\NP$.
\end{theorem}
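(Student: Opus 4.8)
The plan is to amplify the reduction of \refthm{NPhard} by a constant factor so that the constant gap $\boundp 3-\boundp 1$ between its YES and NO values is replaced by one that grows without bound. Taking disjoint copies does not help here, since by \reflem{MMR} the operator norm only sees the worst connected component; instead I would amplify \emph{inside} a component. Fix a parameter $k\in\N$ (a constant to be chosen from the target additive error). Starting from a \hamcycle instance consisting of a $3$-regular graph $G$ and the cycle $C_n$, I would replace each vertex $v$ by a \emph{hub} $\hat v$ and replace each edge $uv$ by a colored gadget attaching $k$ fresh degree-one ``leaves'' to $\hat u$ and $k$ to $\hat v$, arranged so that whenever the gadget is aligned edge-to-edge all $2k$ leaves are matched and contribute nothing, whereas whenever the edge is broken the $k$ leaves at each endpoint form a star $K_{1,k}$ in the mismatch graph centered at that hub. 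In addition I would attach to every hub a color-gadget that forces any reasonable alignment to map hubs to hubs and leaves to leaves (``color-preserving''), while leaving free to which hub of $G'$ a given cycle hub is sent; this freedom is exactly the choice of a Hamiltonian cycle. The effect is to multiply every per-vertex mismatch count of the original instance by $k$.

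With this construction the two directions mirror \refthm{NPhard}, scaled by $k$. If $G$ has a Hamiltonian cycle, the corresponding color-preserving alignment leaves every cycle-gadget matched and every hub incident to exactly one broken (non-cycle) gadget, so the mismatch graph is a disjoint union of stars of maximum degree $k$; by \reflem{MMCupperG} this gives $\distp(C_n',G')\le\boundp k$. If $G$ has no Hamiltonian cycle, then at some hub $\hat v$ every color-preserving alignment breaks all three incident gadgets (one positive and two negative, as in \refthm{NPhard}), so $\fmc(\hat v)\ge 3k$ and \reflem{MMClowerG} gives $\distp(C_n',G')\ge\boundp{3k}$.

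Writing $\alpha\coloneqq\max(1/p,1-1/p)\in[\tfrac12,1]$ so that $\boundp c=c^\alpha$ for $c\ge1$, the gap is
\[
\boundp{3k}-\boundp k=k^\alpha\big(3^\alpha-1\big)\xrightarrow{\;k\to\infty\;}\infty .
\]
Hence, given any constant additive error guarantee $\varepsilon>0$, I would fix a constant $k$ with $\boundp{3k}-\boundp k>\varepsilon$; the blow-up then has polynomial size, and a polynomial-time algorithm with additive error $\varepsilon$ would return a value below $\boundp k+\varepsilon$ in the YES case and at least $\boundp{3k}$ in the NO case, thereby deciding \hamcycle and forcing $\PTIME=\NP$.

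The main obstacle is the color-forcing needed in the NO direction, where the lower bound from \reflem{MMClowerG} must hold against \emph{all} alignments, not just the intended color-preserving ones. Because $\mu_p$ measures a \emph{local} quantity, essentially the maximum mismatch count, it is not enough that a color-violating alignment be globally bad: I must design the color- and edge-gadgets so that any alignment which sends a hub to a non-hub, or otherwise disrespects the gadget structure, already creates some single vertex with mismatch count exceeding $3k$. I expect to enforce this through degrees, making hubs have degree far larger than the leaves, so that any hub/leaf confusion locally produces more than $3k$ mismatches and cannot occur in an optimal alignment. Calibrating this numerology so that it separates the YES value $\boundp k$ from the NO value $\boundp{3k}$ for \emph{every} $1\le p\le\infty$ at once is the delicate part, and is where the bulk of the deferred argument will lie.
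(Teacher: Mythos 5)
Your high-level plan is the paper's plan: amplify the gap of the reduction in \refthm{NPhard} by a constant factor via per-edge gadgets, use colors to force the alignment to respect the gadget structure, and then eliminate the colors by degree-forcing (the paper does exactly this via its \CDistp notion and \reflem{ColorConversion}, which attaches $i\cdot n^2$ leaves to color-$i$ nodes — your ``hubs have degree far larger than the leaves''). However, there is a genuine gap at the heart of your construction: the edge gadget you actually describe — $k$ fresh degree-one leaves pendant at $\hat u$ and $k$ at $\hat v$ — cannot have the property you stipulate. Pendant leaves attached to a hub are mutually interchangeable and carry no record of which edge they belong to; since you also delete the hub--hub edges, any hub-preserving alignment simply matches as many leaves as possible at each hub, and every hub's mismatch count collapses to the degree difference $k$ in \emph{both} the YES and the NO case. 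The resulting instance then encodes nothing about Hamiltonicity. The clause ``arranged so that whenever the edge is broken the $k$ leaves at each endpoint form a star'' is precisely the construction that has to be supplied: the attachment vertices must be tied to their specific gadget by internal cross-structure rigid enough that re-associating attachment vertices between gadgets already costs more than the NO threshold. The paper achieves this with its gadget $E$: blue attachment nodes joined to the endpoints, stitched together by groups of $o$ shared red nodes, with $o$ chosen so that $\boundp{\lceil o/2\rceil}\ge\boundp{2m}$, so any alignment violating the gadget structure is immediately too expensive by \reflem{MMClowerG}.

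Two further points would need repair even with a working gadget. First, $G$ has $3n/2$ edges while $C_n$ has $n$, so the gadget counts differ; the paper adds isolated gadgets to the cycle side as receptacles for the broken gadgets. Without them, in the YES case the broken gadgets must land on gadgets that are already in use, your ``disjoint union of stars of maximum degree $k$'' claim fails, and the upper bound via \reflem{MMCupperG} with it (this padding also equalizes the orders, which your construction leaves unequal). Second, your NO value $\boundp{3k}$ is optimistic: the $k$ cycle-side (negative) mismatches at a hub can be dodged by half-matching a sibling gadget of the same hub onto the uncovered cycle gadget, pushing those mismatches to other vertices; the paper's analogous analysis accordingly certifies only a $\boundp{2m}$ lower bound against $\boundp{m}$. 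This quantitative slip is harmless for the theorem, since $\boundp{2k}-\boundp{k}=k^{\alpha}(2^{\alpha}-1)\to\infty$ still lets you choose a constant $k$ beating any given additive error $\varepsilon$, but as written your amplification claim, and with it the core gadget, remains unproved.
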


However, the approximation of \Distp becomes trivial once we restrict the input to graphs of bounded degree, although \Distp stays \NP-hard under this restriction.
The proof of \refthm{ApproxAlgorithm} can be found in \refapx{approx}.
\begin{theorem}
	\label{thm:ApproxAlgorithm}
	For $1\leq p\leq\infty$, if one graph has maximum degree $d$, then there is a polynomial time approximation algorithm for \Distp and \Dist{|p|} with
	\begin{enumerate}
		\item a constant additive error guarantee $2d$,
		\item a constant multiplicative error guarantee $1+2d$.
	\end{enumerate}
\end{theorem}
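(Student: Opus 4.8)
The plan is to show that bounding the degree of one graph, say $G$, by $d$ forces \emph{every} alignment to be nearly optimal: the distance is essentially the ``intrinsic size'' $\norm{A_H}_p$ of the other graph, perturbed by at most $d$. Concretely, the algorithm picks an arbitrary bijection $\pi_0\in\Inj(G,H)$ and returns $\mathcal A(G,H)\coloneqq\mu_p(G^{\pi_0}-H)=\norm{A_{G^{\pi_0}}-A_H}_p$, except that it first tests for isomorphism and returns $0$ if $G\cong H$ (needed only for the multiplicative bound, see below).

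The single fact driving everything is a uniform bound on the $G$-side: for every $\pi\in\Inj(G,H)$ we have $\norm{A_{G^{\pi}}}_p\le d$. This follows from \reflem{MMCupper2} applied to $G$ and the empty graph $H_0$ on $V(H)$: the mismatch graph $G^{\pi}-H_0$ has adjacency matrix $A_{G^{\pi}}$ (all edges positive) and every node has mismatch count at most $d$, so $\norm{A_{G^{\pi}}}_p=\mu_p(G^{\pi}-H_0)\le d$. With this in hand the additive guarantee is just two uses of the triangle inequality for $\norm{\cdot}_p$. For the upper side, $\mathcal A=\norm{A_{G^{\pi_0}}-A_H}_p\le\norm{A_{G^{\pi_0}}}_p+\norm{A_H}_p\le d+\norm{A_H}_p$. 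For the lower side, applied to an optimal alignment $\pi^{*}$, the estimate $\norm{A_H}_p\le\norm{A_{G^{\pi^{*}}}-A_H}_p+\norm{A_{G^{\pi^{*}}}}_p$ gives $\distp(G,H)=\norm{A_{G^{\pi^{*}}}-A_H}_p\ge\norm{A_H}_p-d$. Since $\mathcal A\ge\distp(G,H)$ trivially (it is one of the values minimized over in the definition of $\distp$), combining the two displays yields
\[
\distp(G,H)\;\le\;\mathcal A(G,H)\;\le\;\distp(G,H)+2d,
\]
which is the claimed additive error $2d$.

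For the multiplicative guarantee I would reduce to the additive one via a lower bound of $1$ on nonzero distances. If $G\not\cong H$, then for every alignment the mismatch graph has an edge, so some node has mismatch count at least $1$ and \reflem{MCG} gives $\mu_p(G^{\pi}-H)\ge\boundp{1}=1$; hence $\distp(G,H)\ge 1$. Feeding this into the additive bound,
\[
\mathcal A(G,H)\;\le\;\distp(G,H)+2d\;\le\;\distp(G,H)+2d\cdot\distp(G,H)\;=\;(1+2d)\,\distp(G,H).
\]
The only remaining case is $G\cong H$, where $\distp(G,H)=0$ and the factor-$(1+2d)$ guarantee forces the output to be exactly $0$. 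This is where the isomorphism pre-test enters: since $G$ has maximum degree $d$, the relation $G\cong H$ can hold only if $H$ also has maximum degree at most $d$, so I first check whether $\Delta_H\le d$ and, if so, run the polynomial-time bounded-degree isomorphism test \cite{Luks}; whenever it reports $G\cong H$ the algorithm outputs $0$, and otherwise we are already in the case $\distp(G,H)\ge 1$ handled above.

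The conceptually load-bearing step — and the main thing to get right — is the uniform estimate $\norm{A_{G^{\pi}}}_p\le d$, which is exactly what renders the choice of alignment irrelevant up to an additive $d$ on each side; everything after that is triangle inequalities. I expect the only genuine subtlety to be the multiplicative bound for small distances, which is why the $\distp\ge 1$ dichotomy together with the isomorphism pre-test is required rather than the additive estimate alone. I would also remark that the algorithm only ever evaluates $\mu_p$ for the single fixed alignment $\pi_0$; for the operator norms of primary interest this is a standard polynomial-time computation, consistent with the paper's guiding principle that the hardness of \Distp resides in optimizing over $\pi$ rather than in evaluating the objective for a given $\pi$.
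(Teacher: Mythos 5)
Your error analysis is essentially the paper's own: both arguments rest on the uniform bound $\norm{A_{G^\pi}}_p\le d$ for the bounded-degree graph (the paper states it as $\norm{A_H}_p\le d$, obtained by the same row/column-sum interpolation as in \reflem{MMCupper2}), two applications of the triangle inequality for the additive guarantee, and, for the multiplicative guarantee, exactly the same dichotomy you use --- a polynomial-time isomorphism pre-test via \cite{Luks} on the bounded-degree side, and $\distp(G,H)\ge\boundp{1}=1$ once that test fails. The one genuine divergence is the algorithm itself, and there it opens a real gap. You output $\mu_p(G^{\pi_0}-H)=\norm{A_{G^{\pi_0}}-A_H}_p$ for a fixed alignment $\pi_0$ and wave this off as ``a standard polynomial-time computation''. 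That is true only for $p\in\{1,2,\infty\}$: for every other fixed rational $p$, computing (indeed, approximating to arbitrary relative accuracy) the $\ell_p$-operator norm of a matrix with mixed-sign entries is \NP-hard, by a result of Hendrickx and Olshevsky (2010), and $A_{G^{\pi_0}}-A_H$ is precisely such a $\{-1,0,1\}$-matrix. Since the theorem is claimed for all $1\le p\le\infty$, the polynomial running time of your algorithm is unjustified as stated. The paper's algorithm is designed to dodge exactly this: it returns the alignment-free quantity $\norm{A_G}_p+\norm{A_H}_p$, which involves only entrywise-nonnegative matrices, whose $\ell_p$-operator norms are computable in polynomial time (e.g.\ via power-method techniques; cf.\ \cite{EstimatingTheMatrixPNorm}); note that your appeal to the paper's ``evaluation is easy, optimization is hard'' heuristic is exactly the point where that heuristic breaks down for general $p$.

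The good news is that your own inequalities already contain the repair: never evaluate the objective at $\pi_0$ at all, and instead output $\norm{A_H}_p+d$ (in your convention, with $G$ the bounded-degree graph). Your upper-side chain $\distp(G,H)\le\norm{A_{G^{\pi_0}}-A_H}_p\le\norm{A_{G^{\pi_0}}}_p+\norm{A_H}_p\le d+\norm{A_H}_p$ only uses the \emph{existence} of $\pi_0$, not the computation of its objective value, and your lower-side bound $\norm{A_H}_p\le\distp(G,H)+d$ then gives $\distp(G,H)\le\norm{A_H}_p+d\le\distp(G,H)+2d$; only the norm of the nonnegative matrix $A_H$ must be computed, and the multiplicative argument goes through verbatim. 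Two further remarks: for \Dist{|p|} your original algorithm is actually unproblematic, since $\abso\left(A_{G^{\pi_0}}-A_H\right)$ is nonnegative; and your observation that the returned value dominates $\distp(G,H)$ because it is the value of a feasible alignment is a small structural advantage over the paper's output, which needs the triangle inequality even for feasibility --- but it is precisely this feature that is not salvageable for signed matrices at general $p$.
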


\section{Complexity for Cut Norm}
\label{sec:CutNorm}
Finally, we show the hardness for \Dist{\square} which corresponds to the cut distance $\hat{ \delta}_\square$ (see
\cite[Chapter~8]{lov12}.
For any signed graph $G$ and $V\subseteq V(G)$ the induced subgraph $G[V]$ is the signed graph with vertex set $V$, $E_+(G[V])=\{vw\in E_+(G)\mid v,w\in V)\}$, and $E_-(G[V])=\{vw\in E_-(G)\mid v,w\in V)\}$.
\begin{lemma}
	\label{lem:CutNormInducedSubgraph}
	Let $\Delta$ be a signed graph and $W\subseteq V(\Delta)$.
	Then $\mu_\square(\Delta)\geq\mu_\square(\Delta[W])$.
	\begin{proof}
		Let $V\coloneqq V(\Delta)$, $A\coloneqq A_\Delta$, $B\coloneqq A_{\Delta[W]}$ and $S',T'\coloneqq\argmax_{S,T\subseteq
			W}\left|\sum_{v\in S,w\in T}B_{vw}\right|$.
		Then
		\begin{equation*}			
			\norm{\Delta[W]}_\square
			=\left|\sum_{v\in S',w\in T'}B_{vw}\right|
			=\left|\sum_{v\in S',w\in T'}A_{vw}\right|
			\leq\max_{S,T\subseteq V}\left|\sum_{v\in S,w\in T}A_{vw}\right|
			=\norm{\Delta}_\square.
			\qedhere
		\end{equation*}
\end{proof}
\end{lemma}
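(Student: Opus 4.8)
The plan is to exploit that the cut norm is defined by a maximization over pairs of vertex subsets, so that monotonicity under passing to an induced subgraph reduces to the elementary fact that shrinking the range of a maximum can only decrease it. First I would unfold the two quantities: writing $A\coloneqq A_\Delta$ and $B\coloneqq A_{\Delta[W]}$, we have $\mu_\square(\Delta)=\max_{S,T\subseteq V(\Delta)}\abs{\sum_{v\in S,w\in T}A_{vw}}$, whereas $\mu_\square(\Delta[W])=\max_{S,T\subseteq W}\abs{\sum_{v\in S,w\in T}B_{vw}}$, the latter maximum ranging only over subsets of $W$.

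The single point that needs checking is that $A$ and $B$ agree on the block indexed by $W$. By the definition of the induced signed graph $\Delta[W]$, an edge $vw$ with $v,w\in W$ lies in $E_+(\Delta[W])$ (respectively $E_-(\Delta[W])$) precisely when it lies in $E_+(\Delta)$ (respectively $E_-(\Delta)$); hence $B_{vw}=A_{vw}$ for all $v,w\in W$. Consequently, for every pair $S,T\subseteq W$ the two objective values coincide, namely $\sum_{v\in S,w\in T}B_{vw}=\sum_{v\in S,w\in T}A_{vw}$.

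Finally I would take a pair $S',T'\subseteq W$ attaining the maximum that defines $\mu_\square(\Delta[W])$. Since $W\subseteq V(\Delta)$, this pair is also feasible for the maximization defining $\mu_\square(\Delta)$, and by the previous paragraph its value computed from $A$ equals its value computed from $B$. Therefore
\[
\mu_\square(\Delta[W])=\abs{\sum_{v\in S',w\in T'}B_{vw}}=\abs{\sum_{v\in S',w\in T'}A_{vw}}\le\max_{S,T\subseteq V(\Delta)}\abs{\sum_{v\in S,w\in T}A_{vw}}=\mu_\square(\Delta),
\]
which is the claim. There is essentially no obstacle here: the entire content is that enlarging the feasible set of the defining maximization cannot decrease its optimum, and the only thing to verify is the harmless block equality $B_{vw}=A_{vw}$ on $W\times W$, which is immediate from the definition of an \emph{induced subgraph}.
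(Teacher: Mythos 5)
Your proof is correct and follows essentially the same route as the paper: take a maximizing pair $S',T'\subseteq W$ for $\Delta[W]$, use that $A_{\Delta}$ and $A_{\Delta[W]}$ agree on the $W\times W$ block, and observe that the pair remains feasible in the larger maximization over $V(\Delta)$. Your explicit verification of the block equality is a point the paper leaves implicit, but the argument is identical in substance.
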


Our hardness proof for \Dist{\square} is based on the hardness of computing the cut norm.
With a construction similar to the one in \reflem{ColorConversion} we enforce a specific alignment.
The proof can be found in \refapx{cut}.
	
\begin{theorem}\label{theo:cut}
	The problem \Dist{\square} is \NP-hard.
\end{theorem}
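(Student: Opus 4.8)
The plan is to reduce from the problem of computing the cut norm $\norm{A}_\square$ of a matrix, which is \NP-hard \cite{AlonNaor06}. The conceptual difference from the operator-norm proofs is that here the hardness is already present for a \emph{fixed} alignment: the difficult part of computing $\dist_\square$ is not finding the optimal alignment but evaluating $\mu_\square$ of the resulting mismatch graph. So once I can force the optimal alignment to be the intended one, I have directly encoded a hard cut-norm instance.

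First I would cast a hard cut-norm instance as a mismatch graph. The mismatch graph of two simple graphs is an undirected signed graph, whose adjacency matrix is symmetric, has zero diagonal, and has entries in $\{0,+1,-1\}$. A hard instance $A\in\{+1,-1\}^{m\times n}$ for the cut norm is turned into such a matrix by the bipartite symmetrization $\tilde A\coloneqq\left(\begin{smallmatrix}0&A\\A^{\mathsf T}&0\end{smallmatrix}\right)$; a short calculation shows $\mu_\square(\Delta_0)=\norm{\tilde A}_\square=2\norm{A}_\square$, where $\Delta_0$ is the signed graph with vertex set $R\cup C$, an edge of sign $A_{rc}$ for every $r\in R,c\in C$, and no edges inside $R$ or inside $C$. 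Since cut norms of integer matrices are integers, deciding $\mu_\square(\Delta_0)\ge 2k$ is exactly deciding $\norm{A}_\square\ge k$ and hence \NP-hard. Setting $G_0\coloneqq(\Delta_0)_+$ and $H_0\coloneqq(\Delta_0)_-$ gives graphs with $G_0-H_0=\Delta_0$ under the identity alignment.

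Second, I would force the alignment by a colouring gadget in the spirit of \reflem{ColorConversion}: to the $i$-th vertex $u_i$ of $V(G_0)=V(H_0)$ attach a pendant star with $c_i$ leaves in \emph{both} $G_0$ and $H_0$, where $c_i\coloneqq C\cdot i$ and $C$ is chosen larger than the trivial bound $\sum_{vw}\abs{(A_{\Delta_0})_{vw}}\ge\mu_\square(\Delta_0)$ (so $C$ is polynomial and the construction stays polynomial). Call the resulting graphs $G,H$. The identity alignment matches all star edges, so its mismatch graph is exactly $\Delta_0$ with isolated leaves, giving $\dist_\square(G,H)\le\mu_\square(\Delta_0)$. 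For the reverse inequality I split alignments $\pi$ into two types. If $\pi$ restricts to the identity on the core $W\coloneqq V(G_0)$, then the mismatch edges inside $W$ are precisely $\Delta_0$ (all leaf mismatches have a leaf endpoint outside $W$), so \reflem{CutNormInducedSubgraph} gives $\mu_\square(G^\pi-H)\ge\mu_\square(\Delta_0[W])=\mu_\square(\Delta_0)$. Otherwise $\pi$ either sends some $u_i$ to a differently-coloured core vertex or sends a core vertex to a leaf; in either case at least $C-1$ pendant edges of one sign remain unmatched at a single vertex, forming a star whose cut norm is $\ge C-1>\mu_\square(\Delta_0)$, and \reflem{CutNormInducedSubgraph} again lower-bounds $\mu_\square(G^\pi-H)$ by this. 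Hence every alignment has cut norm $\ge\mu_\square(\Delta_0)$, so $\dist_\square(G,H)=\mu_\square(\Delta_0)$ and outputting the threshold $2k$ completes the reduction.

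I expect the main obstacle to be the gadget step rather than the cut-norm reduction, and the crux is that the cut norm is insensitive to sparse perturbations: a misalignment that disturbs only a few edges need not raise the cut norm at all. The colours $c_i$ must therefore be spread far enough apart that \emph{any} deviation from the intended alignment concentrates a \emph{dense} block of $\ge C-1$ unmatched pendant edges at one vertex, rather than a handful of stray edges; \reflem{CutNormInducedSubgraph} is then exactly the tool that converts such a local imbalance into a global cut-norm lower bound exceeding the largest attainable value of $\mu_\square(\Delta_0)$. Verifying this uniformly over all ``bad'' alignments $\pi$ — including those mapping core vertices to leaves — is the only delicate point, and choosing $C$ above the total edge count of $\Delta_0$ resolves it while keeping the instance of polynomial size.
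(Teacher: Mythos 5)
Your proposal is correct and takes essentially the same route as the paper's own proof: reduce from the \NP-hardness of the cut norm (Alon--Naor, via \maxcut), symmetrize the hard matrix as $\left(\begin{smallmatrix}0&A\\A^{\mathsf T}&0\end{smallmatrix}\right)$, realize the resulting signed graph as the mismatch graph of its positive and negative parts, pin the optimal alignment by attaching pendant stars of pairwise distinct sizes to the core vertices, and lower-bound every misaligned $\pi$ via \reflem{CutNormInducedSubgraph} applied either to the untouched core copy of $\Delta_0$ or to a large monochromatic star of unmatched pendant edges. The only slips are cosmetic: the known hardness is for cut norms of matrices with entries in $\{0,+1,-1\}$ (the Alon--Naor \maxcut matrix, which is what the paper uses) rather than $\{+1,-1\}^{m\times n}$, and in the core-to-core misalignment case the number of surviving same-sign pendant edges is $C-\deg_{G_0}(u_i)$ rather than $C-1$, which your choice of $C$ above the total edge count of $\Delta_0$ already absorbs.
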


\section{Concluding Remarks}
We study the computational complexity of a class of graph metrics
based on mismatch norms, or equivalently, matrix norms applied to the
difference of the adjacency matrices of the input graphs under an
optimal alignment between the vertex sets. We find that computing the
distance between graphs under these metrics (at least for the
standard, natural
matrix norms)  is \NP-hard, often already on simple input
graphs such as trees. This was essentially known for entrywise matrix
norms. We prove it for operator norms and also for the cut norm.

We leave it open to find (natural) general conditions on a mismatch
norm such that the corresponding distance problem becomes hard. Maybe
more importantly, we leave it open to find meaningful tractable
relaxations of the distance measures.

Measuring similarity via mismatch norms is only one approach. There
are several other, fundamentally different ways to measure
similarity. We are convinced that graph similarity deserves a
systematic and general theory that compares the different approaches
and studies their semantic as well as algorithmic properties. Our
paper is one contribution to such a theory.

\newpage
\appendix

\section{Appendix}
\subsection{Missing Proofs from Section \ref{sec:ComplexityOperatorNorms}}
\label{apx:ComplexityOperatorNorms}
\begin{proof}[Proof of \reflem{MMCupper2}]
	By definition we have $\mu_1(G^\pi-H)=\norm{A^\pi_G-A_H}_1$.
	It is well known that the $\ell_1$-operator norm boils down to the maximum absolute column sum:
	\[
	\norm{A}_1
	=\max_{\norm{x}_1=1}\norm{Ax}_1
	=\max_{j\in[n]}\sum^m_{i=1}\vert A_{ij}\vert.
	\]
	The $v$-th column of the matrix $A^\pi_G-A_H$ has an entry with value $+1$ or $-1$ for each mismatch of the node $v$ and all its other entries are 0.
	Therefore, the \mc of $v$ is equal to the absolute sum of all entries in the $v$-th column.
	It follows that the \mmc is equal to the maximum absolute column sum.
	Similarly, the $\ell_\infty$-operator norm boils down to the maximum absolute row sum:
	\[
	\norm{A}_\infty
	=\max_{\norm{x}_\infty=1}\norm{Ax}_\infty
	=\max_{i\in[m]}\sum^n_{j=1}\vert A_{ij}\vert.
	\]
	
	Due to the symmetry of the matrix $A^\pi_G-A_H$ we can apply the same argument for $\mu_\infty$.		
	Since all $\ell_p$-operator norms can be estimated in terms of the $\ell_1$-operator norm and $\ell_\infty$-operator norm \cites{EstimatingTheMatrixPNorm}[page 29]{PerturbationTheoryForLinearOperators}\ by
	\[
	\norm{A}_p\leq\norm{A}_1^{1/p}\cdot\norm{A}_\infty^{1-1/p},
	\]
	we get the upper bound
	\begin{align*}
		\mu_p(G^\pi-H)
		=&\norm{A^\pi_G-A_H}_p\\
		\leq&\norm{A^\pi_G-A_H}_1^{1/p}\cdot\norm{A^\pi_G-A_H}_\infty^{1-1/p}\\
		=&\fmmc(\pi)^{1/p}\cdot \fmmc(\pi)^{1-1/p}\\
		=&\fmmc(\pi).
		\qedhere
	\end{align*}
\end{proof}

\begin{proof}[Proof of \reflem{MMR}]
	Let $M$ be the adjacency matrix of $G^\pi-H$ and $\mathcal{S}$ be the set of submatrices of $M$ corresponding to the connected components in $\mathcal{C}$.
	We show $\norm{M}_p=\max_{S\in\mathcal{S}}\norm{S}_p$ by induction over $n\coloneqq\abs{\mathcal{S}}$.	
	For $n=1$ we have $\mathcal{S}=\{S\}$ with $S=M$.	
	For the induction step ($n\to n+1$) let $S_1$ be the $(n+1)$-th submatrix of $M$ and $S_2$ the smallest submatrix of $M$ that contains the first $n$ submatrices of $M$.
	We can permute $M$ to the form
	\[
	M^\tau=\begin{pmatrix}
		S_1 & 0\\
		0 & S_2\\
	\end{pmatrix}.
	\]
	Due to the permutation invariance of $\ell_p$-operator norms it holds that $\norm{M}_p=\norm{M^\tau}_p$.
	Now let $x_1\coloneqq\argmax_{\norm{x}_p=1} \norm{S_1x}_p$ and $x_2\coloneqq\argmax_{\norm{x}_p=1} \norm{S_2x}_p$.
	Then
	\begin{align*}
		\norm{M}_p
		=&\max_{\norm{x}_p=1} \norm{Mx}_p\\
		=&\max_{\vert\alpha_1\vert^p+\vert\alpha_2\vert^p=1}\norm{
			\begin{pmatrix}
				S_1 \cdot \alpha_1 x_1\\
				S_2 \cdot \alpha_1 x_2
		\end{pmatrix}}_p\\
		=&\max_{\vert\alpha_1\vert^p+\vert\alpha_2\vert^p=1}
		\left(\norm{\alpha_1 S_1}^p_p + \norm{\alpha_2 S_2}^p_p\right)^{1/p}\\
		=&\max_{\vert\alpha_1\vert^p+\vert\alpha_2\vert^p=1}
		\left(\vert\alpha_1\vert^p\norm{S_1}^p_p + \vert\alpha_2\vert^p\norm{S_2}^p_p\right)^{1/p}\\	
		=&\max_{\alpha\in[0,1]}
		\left(\alpha^p\norm{S_1}^p_p + (1-\alpha^p)\norm{S_2}^p_p\right)^{1/p}\\	
		=&\max_{\alpha\in[0,1]}
		\left(\alpha^p\left(\norm{S_1}^p_p-\norm{S_2}^p_p\right)\right)^{1/p}.
	\end{align*}
	Now we compute $\alpha$ as
	\begin{align*}
		\argmax_{\alpha\in[0,1]}
		\left(\alpha^p\left(\norm{S_1}^p_p-\norm{S_2}^p_p\right)\right)^{1/p}
		=&\argmax_{\alpha\in[0,1]}
		\left(\alpha^p\left(\norm{S_1}^p_p-\norm{S_2}^p_p\right)\right)\\
		=&\argmax_{\alpha\in[0,1]}
		\left(\alpha\left(\norm{S_1}^p_p-\norm{S_2}^p_p\right)\right)\\
		=&
		\begin{cases}
			1, &\text{ if }\norm{S_1}^p_p \geq \norm{S_2}^p_p,\\
			0, &\text{ else,}
		\end{cases}\\
		=&
		\begin{cases}
			1, &\text{ if }\norm{S_1}_p \geq \norm{S_2}_p,\\
			0, &\text{ else.}
		\end{cases}
	\end{align*}
	Hence, $\norm{M}_p=\max\left(\norm{S_1}_p,\norm{S_2}_p\right)$, which concludes the induction step.
\end{proof}

\begin{proof}[Proof of \reflem{MCG}]	
	Let $M$ be the adjacency matrix of $G^\pi-H$.
	First we show $\norm{M}_p\geq c^{1/p}$. Let $x\in\R^n$ with $x_v=1$ and $x_i=0$ for $i\neq v$, then:
	\[
	\norm{M}_p
	\geq\norm{Mx}_p
	=\norm{M_{*v}}_p
	=\left(\vert M_{1v}\vert^p+\hdots+\vert M_{nv}\vert^p\right)^{1/p}
	=\big(\underbrace{1^p+\hdots+1^p}_c\big)^{1/p}
	=c^{1/p}.
	\]
	In the second to last step we used the fact that there are exactly $c$ entries equal to 1 or $-1$ in the $v$-th column of $M$ while all other entries in that column are 0.
	Next, we define $x'\in\R^n$ as
	\[
	x'_i=
	\begin{cases}
		c^{-1/p}, &\text{ if } M_{vi}=1\text{ ($i$ is a mismatch of $v$)}, \\
		0, &\text{ else.} \\
	\end{cases}
	\]
	Note that $\norm{x'}_p=1$.
	We use $x'$ to show the other lower bound:
	\begin{align*}
		\norm{M}_p
		\geq&\norm{Mx'}_p
		=\norm{\begin{pmatrix}
				\sum_i M_{1i} x'_i\\
				\vdots\\
				\sum_i M_{ni} x'_i
		\end{pmatrix}}_p
		=\left(\bigg\vert \sum_i M_{1i} x'_i\bigg\vert^p+\dots+\bigg\vert\sum_i M_{ni} x'_i\bigg\vert^p\right)^{1/p}\\
		\geq&\left(\bigg\vert \sum_i M_{vi} x'_i\bigg\vert^p\right)^{1/p}
		=\bigg\vert \sum_i M_{vi} x'_i\bigg\vert
		=\bigg\vert \sum_{\substack{i\text{ with }\\\vert M_{vi}\vert=1}} M_{vi}c^{-1/p}\bigg\vert\\
		=&\bigg\vert c^{-1/p}\sum_{\substack{i\text{ with }\\\vert M_{vi}\vert=1}} M_{vi} \bigg\vert
		=\bigg\vert c^{-1/p}\sum_{\substack{i\text{ with }\\\vert M_{vi}\vert=1}} 1 \bigg\vert
		=\vert c^{-1/p}c\vert
		=c^{1-1/p}.
	\end{align*}

	Now, we assume $G^\pi-H$ is a star and show $\mu_p(G^\pi-H)=\boundp{\fmmc(\pi)}$.
	$M$ has exactly $c$ entries with value 1 or -1 in the column and the row corresponding to $v$, which represent the $c$ mismatches; all other entries are 0.
	Hence, we can permute $M$ to the form
	\[
	M^\tau=\begin{pmatrix}
		0 & 1 & \hdots & 1 & -1 & \hdots & -1\\
		1 & & & & & &\\
		\vdots & & & & & &\\
		1 & & & \makebox(0,0){\vspace{-0.2cm}\hspace{0.7cm}\text{\huge0}} & & &\\
		-1 & & & & & &\\
		\vdots & & & & & &\\		
		-1 & & & & & &
	\end{pmatrix}
	\in\R^{(c+1)\times(c+1)}.
	\]
Due to the permutation invariance of $\ell_p$-operator norms it holds that $\norm{M}_p=\norm{M^\tau}_p$.
	Using the same argument as in the proof of \reflem{MMR} we get
	\[
	\norm{M}_p=\max\left(\norm{V^T}_p,\norm{V}_p\right),\ \ \ 
	V\coloneqq(M_{1,2},\hdots,M_{1,c+1})=(1,\hdots,1,-1,\hdots,-1)\in\R^{1\times c}.
	\]
	The computation of the first maximization term is straightforward because $x$ is one-dimensional:
	\[
	\underbrace{\norm{V^T}_p}_{\text{matrix norm}}
	=\max_{\norm{x}_p=1}\underbrace{\norm{V^Tx}_p}_{\text{vector norm}}
	=\underbrace{\norm{V^T(1)}_p}_{\text{vector norm}}
	=\underbrace{\norm{V^T}_p}_{\text{vector norm}}
	=\big(\underbrace{1^p+\hdots+1^p}_c\big)^{1/p}
	=c^{1/p}.
	\]
	The optimal solution for the other maximization term is exactly $x'$:
	\begin{align*}
		\argmax_{\norm{x}_p=1}\norm{Vx}_p
		=\;&\argmax_{\norm{x}_p=1}\sqrt[p]{(x_1+\hdots+x_c)^p}\\
		=\;&\argmax_{\norm{x}_p=1}x_1+\hdots+x_c\\
		=\;&\argmax_{\abs{x_1}^p+\cdots+\abs{x_c}^p=1}x_1+\hdots+x_c\\
		=\;&x'.
	\end{align*}
	Therefore, we know that $\norm{V}_p=\norm{Vx'}_p=c^{1-1/p}$.
\end{proof}

\begin{proof}[Proof of \reflem{MMCDistinguish}]
	If $p\leq2$ then
	\[
	\boundp{c}=\max\left(c^{1-1/p},c^{1/p}\right)=c^{1/p}<d^{1/p}=\max\left(d^{1-1/p},d^{1/p}\right)=\boundp{d}.
	\]
	Otherwise
	\[
	\boundp{c}=\max\left(c^{1-1/p},c^{1/p}\right)=c^{1-1/p}<d^{1-1/p}=\max\left(d^{1-1/p},d^{1/p}\right)=\boundp{d}.	
	\qedhere
	\]
\end{proof}

\begin{proof}[Proof of \refthm{NPhardPath}]
	We modify the proof of \refthm{NPhard} so that the comparison graph is a path instead of the cycle $C_n$.		
	Given a 3-regular graph $G$ of order $n$ as an instance of \hamcycle, let $P_{n+2}$ be the path of length $n+2$ and $v$ be a node randomly chosen from $G$.
	We remove one of the edges incident to $v$ and attach a new leaf to each of the two separated nodes.
	Now we create another copy of $G$ with the same modification applied to a different edge incident to $v$ and call the new graphs $G_1$ and $G_2$.
	One of the two has a Hamiltonian path if and only if $G$ has a Hamiltonian cycle.
	
	The proof follows from the following claim: $G_i$ has a Hamiltonian path if and only if $\distp(P_{n+2},G_i)\leq\boundp{1}$.		
	Indeed, if $G_i$ has a Hamiltonian path, the natural bijection $\pi:V(P_{n+2})\rightarrow V(G_i)$, which aligns the path $P_{n+2}$ with the Hamiltonian path in $G_i$, will cause exactly an \mc of 1 on each original node (nodes from $G$) and an \mc of 0 on the two newly inserted leaves.
	
	Conversely, assume $\distp(P_{n+2},G_i)\leq\boundp{1}$.
	Again, this implies every node has at most one mismatch.
	Then, the path end nodes in $P_{n+2}$ can only be mapped to the new nodes in $G_i$ because the degree of the original nodes in $G_i$ differs by 2.
	Therefore, inner path nodes are mapped to nodes from $G$.
	Since one of their incident edges in $G$ is always mismatched due to the degree difference, the other two incident edges have to be matched correctly.
	Altogether, every edge in $P_{n+2}$ is matched correctly, which is a contradiction to the assumption.
\end{proof}

\begin{proof}[Continuation of the proof of \refthm{NPhardBoundedTree}]	
	We have to show that the \threepartition instance has answer YES if and only if $\distp(T_1,T_2)\leq\boundp{2}$.
	First we show via contradiction that $\distp(T_1,T_2)\geq\boundp{3}$ if the \threepartition instance has answer NO.
	According to \reflem{MMCDistinguish} this implies $\distp(T_1,T_2)>\boundp{2}$.
	So, we assume that the \threepartition instance has answer NO and there exists an alignment $\pi$ from $T_1$ to $T_2$ such that $\mu_p(T_1^\pi-T_2)<\boundp{3}$.
	According to \reflem{MMClowerG} the \mmc of $\pi$ has to be smaller than 3.
	In particular, nodes with a degree $d$ can only be mapped to nodes with a degree between $d-2$ and $d+2$.
	
	Inner path nodes in $T_1$ have degree 8 and can only be mapped to inner path nodes in $T_2$, since there are no other nodes in $T_2$ that have a degree of at least $8-2=6$.
	Using the analog degree argument, path endpoints in $T_1$ can only be mapped to path nodes in $T_2$. However, $\pi$ needs to be a bijection and the inner path nodes are already matched, so path endpoints can only be mapped to path endpoints.	
	All in all, we have shown that under our assumption that the \mmc of $\pi$ is smaller than 3, black nodes can only be mapped to black nodes.
	
	The degree of the inner path nodes in $T_1$ is at least by 2 higher than the degree of any path node in $T_2$.
	Hence, each inner path node has already two mismatches from $T_1$ to $T_2$.
	If two adjacent path nodes in $T_1$ would be mapped to two non-adjacent path nodes in $T_2$, at least one of them is an inner path node and would then have a third mismatch, so the \mmc would be at least 3.
	Intuitively, this means the internal structure of the paths cannot be destroyed by the alignment.
	Formally, for each path $p^2_i$ in $T_2$ there have to be three paths $p^1_{i_1},p^1_{i_2},p^1_{i_3}$ in $T_1$ such that the subgraph of $T_2$ induced by $\pi\left(p^1_{i_1} \cup p^1_{i_2} \cup p^1_{i_3}\right)$ is equal to $p^2_i$.
	It follows that $\abs{p_i^2}=\abs{p_{i_1}^1}+\abs{p_{i_2}^1}+\abs{p_{i_3}^1}$, so $A=a_{i_1}+a_{i_2}+a_{i_3}$ for $i\in[n]$.
	This corresponds exactly to the existence of a partition satisfying the requirements of \threepartition, which is a contradiction to our assumption.
	
	Note that our argument relies on the property $A/4<a_i<A/2$ for $1\leq i\leq 3m$ of the \threepartition instance because otherwise more or less than three paths in $T_1$ might be packed into one of the paths in $T_2$. Also, instances with $a_i=2$ would be allowed without our restriction to $A\geq8$. In such a case the corresponding path $i$ would not have an inner path node.
	
	Now we assume the desired partition exists and show that we can find an alignment $\pi$ with $\mu_p(T_1^\pi-T_2)\leq\boundp{2}$.
	Note that our choice of $\pi$ will map nodes in $T_1$ only to nodes that have the same color in $T_2$.
	For each group in the partition we pack the three corresponding paths in $T_1$ into one of the paths in $T_2$.
	Formally, let $\{j,k,l\}$ be the $i$-th group and $p^1_j$, $p^1_k$, $p^1_l$ the corresponding paths in $T_1$.
	Then the subgraphs of $T_2$ induced by $\pi(p^1_j)$, $\pi(p^1_k)$, $\pi(p^1_l)$, and $\pi(p^1_j \cup p^1_k \cup p^1_l)$ are paths consisting only of black nodes.
	
	Each blue node in $T_1$ is mapped to any of the blue nodes in $T_2$. By definition $\pi$ is required to be a bijection. It is possible to find such a $\pi$ because the number of blue nodes in $T_1$ is the same as in $T_2$.
	Each pink node in $T_1$ is mapped to the pink node in $T_2$ that has the ``same'' blue neighbor.
	Formally, let $p^1$ be a pink node in $T_1$ and $b^1$ its blue neighbor.
	Let $p^2$ be the the pink node adjacent to $\pi(b^1)$ in $T_2$. Then $\pi(p^1)=p^2$ holds.
	
	Each red node in $T_1$ is mapped to a red node in $T_2$ that has the ``same'' black neighbor. Formally, let $r^1_1$, $r^1_2$, $r^1_3$ be the red nodes adjacent to a path node $b^1$ in $T_1$. Let $r^2_1$, $r^2_2$, $r^2_3$ be the red nodes adjacent to $\pi(b^1)$ in $T_2$. Then $\pi(r^1_1)=r^2_1$, $\pi(r^1_2)=r^2_2$, and $\pi(r^1_3)=r^2_3$ holds.
	Each orange node in $T_1$ is mapped to the orange node in $T_2$ that has the ``same'' black neighbor.
	
	According to \reflem{MMCupperG}, it suffices to show that the \mmc of $\pi$ is 2 and that the mismatched neighbors of a node with an \mc of 2 have an \mc of 1.
	The second condition is called \emph{star condition} in the following.
	
	Blue nodes have degree 1 in $T_1$ and degree 2 in $T_2$.
	Since their pink neighbor is matched correctly by definition of $\pi$, we know that their \mc is exactly 1.
	Pink nodes have degree 2 in $T_1$ and degree 1 in $T_2$.
	Since their blue neighbor is matched correctly, their \mc is exactly 1.
	Red nodes are mapped next to the same black neighbor.
	Some red nodes have an additional edge to a blue node or to another red node in $T_2$.
	Hence, their \mc can be 0 or 1.
	Orange nodes are mapped to orange nodes with the same black neighbor.
	However, some orange nodes have an additional edge to another orange node in $T_1$.
	So their \mc can be 0 or 1.
	
	Path endpoints have no mismatches. Inner path nodes have an \mc of 2.
	Since their mismatches, which are pink nodes, have an \mc of at most 1, the star condition is fulfilled.
	
	Using \reflem{MMCupperG} we finally conclude that $\distp(T_1,T_2)\leq\boundp{2}$ if the \threepartition instance has answer YES and therefore prove the theorem.
\end{proof}

\subsection{Missing Proofs from Section \ref{sec:approx}}
For the proof of \refthm{AdditiveApproxHardness} we use node coloring which we introduce in the following definition.
\label{apx:approx}
\begin{definition}
	\label{def:Color}
	Let $G$ and $H$ be two colored graphs of order $n$ with the respective colorings $c_G:V(G)\to\N$ and $c_H:V(H)\to\N$ such that $G$ and $H$ have the same number of nodes of each color. The problem \CDistp is the task to compute the following distance function for a given $\ell_p$-operator norm:
	\begin{align*}
		\cdistp(G,H)\coloneqq&\min_{\pi\in\Pi(G,H)}\norm{A_G^\pi-A_H}_p,\\
		\Pi(G,H)\coloneqq&\;\{\pi\in S_n\,|\,c_G(v)=c_H(v^\pi)\text{ for all }v\in V(G)\}.
	\end{align*}
\end{definition}

The following lemma allows us to color nodes if a polynomial increase of the input graph size is acceptable.
\begin{lemma}
	\label{lem:ColorConversion}
	Every instance $(G,H)$ of \CDistp can be converted to an instance $(G',H')$ of \Distp such that $\cdistp(G,H)=\distp(G',H')$ and the order of $G'$ is polynomial in the order of $G$.
\begin{proof}
	First, we compute an upper bound for $\cdistp(G,H)$.
	Let $n$ denote the order of the input graphs.
	Obviously, the \mmc of any alignment $\pi$ from $G$ to $H$ cannot be greater than $n-1$.
	According to \reflem{MMCupper2} it follows that $\cdistp(G,H)\leq\mu_p(G^\pi-H)\leq n-1$.
	
	Now we want to modify $G$ and $H$ such that any alignment, that does not preserve the coloring, has a mismatch norm of at least $n-1$, which is at least as high as the mismatch norm of any color-preserving alignment.
	Then we can drop the coloring without affecting the graph distance.
	Let $c$ be the number of colors that are used for $G$ and $H$ and let the colors be represented by the integers $1,\hdots,c$.
	For $1\leq i\leq c$ we attach $i\cdot n^2$ new leaves to each node in $G$ and $H$ that has the color $i$.
	Then we drop the coloring and denote the resulting graphs as $G_2$ and $H_2$.
	
	Any alignment $\pi$ from $G$ to $H$ can be converted to an alignment $\tau$ from $G_2$ to $H_2$ with $\mu_p(G^\pi-H)=\mu_p(G^\tau_2-H_2)$ as follows.
	For each non-leaf node $v$ in $G_2$ we set $v^\tau=v^\pi$ assuming that the original nodes of $G$ and $H$ are not renamed within the modification to $G_2$ and $H_2$.
	Each leaf in $G_2$ is mapped adjacency-preserving.
	Precisely, let $w$ be the neighbor of a leaf $v$ in $G_2$, then $v$ is mapped to a leaf in $H_2$ that is adjacent to $w^\tau$.
	Hence, there are no additional mismatches due to the leaves in $G_2$ and so $\mu_p(G^\pi-H)=\mu_p(G^\tau_2-H_2)$.
	
	Now assume there exists an alignment $\tau$ from $G_2$ to $H_2$ that does not preserve the original coloring and has $\mu_p(G^\tau_2-H_2)<n-1$.
	Each non-leaf node in $G_2$ has degree $n^2$ or more, while each leaf in $H_2$ has degree 1.
	Hence, if $\tau$ maps a non-leaf node to a leaf, then $\fmmc(\tau)\geq n^2-1$ and according to \reflem{MMClowerG} it follows that $\mu_p(G^\tau_2-H_2)\geq\boundp{n^2-1}$.
	However, $\boundp{n^2-1}=\max\left((n^2-1)^{1/p},(n^2-1)^{1-1/p}\right)\geq\sqrt{n^2-1}\geq n-1$ for every $\ell_p$-operator norm, which is a contradiction to our assumption.
	So, $\tau$ can only map leaves to leaves and non-leaves to non-leaves.
	
	If $\tau$ does not preserve the original coloring, then it also maps some node $v$ in $G_2$ to a node $w$ in $H_2$ such that the color of $v$ in $G$ is greater than the color of $w$ in $H$.
	This means that $v$ has at least $n^2$ more leaves in $G_2$ than $w$ in $H_2$.
	We get $\fmmc(\tau)\geq n^2$ and following the same argument as before this is a contradiction to our assumption.
	All in all, we conclude that the value of $\distp(G_2,H_2)$ is determined by an alignment that does preserve the original coloring, so $\distp(G_2,H_2)=\cdistp(G,H)$.
	Note that the order of $G_2$ and $H_2$ is polynomial in the order of $G$ because we add at most $c\cdot n^2\leq n^3$ leaves to each node in $G$.
\end{proof}
\end{lemma}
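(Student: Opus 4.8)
The plan is to encode the color constraint directly into the graph structure by attaching color-dependent pendant gadgets, so that any sufficiently cheap alignment of the resulting \emph{uncolored} graphs is forced to respect the original coloring. First I would record a cheap ceiling for the colored problem: since the maximum mismatch count of any alignment is at most $n-1$ (each node has at most $n-1$ possible mismatches), \reflem{MMCupper2} gives $\cdistp(G,H)\le n-1$. A color-preserving alignment exists because $G$ and $H$ have the same number of nodes of each color. The goal is then to build $G',H'$ in which every color-violating alignment costs strictly more than $n-1$, while every color-preserving alignment of $G,H$ extends to an alignment of $G',H'$ of the same cost.

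Concretely, I would take the colors to be $1,\dots,c$ and attach $i\cdot n^2$ fresh leaves to each node originally colored $i$, in both $G$ and $H$, and then forget the colors; call the results $G'$ and $H'$. The easy direction is that a color-preserving alignment of $G,H$ lifts to $G',H'$ at no extra cost: keep the original nodes mapped as before and, since a node and its image received the same number of leaves, route the leaves of each node to the leaves of its image. This introduces no new mismatches, so $\distp(G',H')\le\cdistp(G,H)$.

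For the converse I would show that any alignment $\tau$ of $G',H'$ with $\mu_p({G'}^\tau-H')<n-1$ must preserve colors. Two observations drive this. First, every original node now has degree at least $n^2$ while every added leaf has degree $1$; if $\tau$ mapped an original node to a leaf (or conversely), then some node would have at least $n^2-1$ mismatches, so by the first claim of \reflem{MCG} the cost would be at least $\boundp{n^2-1}$. Second, if $\tau$ maps original nodes to original nodes but sends a node of color $i$ to one of color $j\neq i$, their leaf counts differ by at least $n^2$, again yielding a node with at least $n^2-1$ mismatches. In either case the cost is at least $\boundp{n^2-1}=\max\bigl((n^2-1)^{1/p},(n^2-1)^{1-1/p}\bigr)\ge\sqrt{n^2-1}\ge n-1$, contradicting the assumption. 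Hence the optimum for $G',H'$ is attained by a color-preserving alignment, giving $\distp(G',H')=\cdistp(G,H)$.

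The step needing the most care, and the main obstacle, is calibrating the gadget sizes so that the two bounds line up for \emph{every} $p$ simultaneously: the penalty $\boundp{n^2-1}$ incurred by a color violation must dominate the worst-case cost $n-1$ of a legitimate alignment uniformly over $p\in[1,\infty]$. The clean device here is the $p$-independent estimate $\boundp{c}\ge\sqrt{c}$, which holds because one of $1/p$ and $1-1/p$ is always at least $1/2$, and it makes $\boundp{n^2-1}\ge n-1$ regardless of $p$. Finally, since at most $c\cdot n^2\le n^3$ leaves are attached per node, $|G'|$ is polynomial in $|G|$, as required.
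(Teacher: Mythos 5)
Your proposal is correct and follows essentially the same route as the paper's proof: the same gadget of $i\cdot n^2$ pendant leaves per color-$i$ node, the same ceiling $\cdistp(G,H)\le n-1$ via Lemma~\ref{lem:MMCupper2}, and the same two-case contradiction (non-leaf mapped to leaf, or color-violating map between original nodes) using $\boundp{n^2-1}\ge\sqrt{n^2-1}\ge n-1$. Your explicit restriction of the lifting step to color-preserving alignments is in fact slightly more careful than the paper's phrasing, which states the lift for arbitrary alignments even though the leaf counts only match in the color-preserving case.
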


\begin{figure*}[htb]
	\centering
	\begin{tikzpicture}[
		every node/.style={draw,shape=circle}]
		\node[fill=black](e1){};
		\node[right=2cm of e1, fill=blue](b21){};
		\node[below=0.5cm of b21, fill=red](r21){};
		\node[left=0.5cm of r21, fill=red](r22){};
		\node[right=0.5cm of r21, fill=red](r23){};
		\node[right=0.5cm of r23, fill=red](r24){};
		\node[right=0.5cm of r24, fill=red](r25){};
		\node[above=0.5cm of b21, fill=red](r11){};
		\node[left=0.5cm of r11, fill=red](r12){};
		\node[right=0.5cm of r11, fill=red](r13){};
		\node[right=0.5cm of r13, fill=red](r14){};
		\node[right=0.5cm of r14, fill=red](r15){};
		\node[above=0.5cm of r11, fill=blue](b11){};
		\node[below=0.5cm of r21, fill=blue](b31){};
		\node[above=0.5cm of r14, fill=blue](b12){};
		\node[below=0.5cm of r14, fill=blue](b22){};
		\node[below=0.5cm of r24, fill=blue](b32){};
		\node[right=2cm of b22, fill=black](e2){};
		\draw		
		(e1)--(b21)
		(e1)|-(b11)
		(e1)|-(b31)
		(e2)--(b22)
		(e2)|-(b12)
		(e2)|-(b32)
		
		(b11)--(r11)
		(b11)--(r12)
		(b11)--(r13)
		(b11)--(r14)
		(b11)--(r15)
		(b21)--(r11)
		(b21)--(r12)
		(b21)--(r13)
		(b21)--(r14)
		(b21)--(r15)
		(b21)--(r21)
		(b21)--(r22)
		(b21)--(r23)
		(b21)--(r24)
		(b21)--(r25)
		(b31)--(r21)
		(b31)--(r22)
		(b31)--(r23)
		(b31)--(r24)
		(b31)--(r25)
		(b12)--(r11)
		(b12)--(r12)
		(b12)--(r13)
		(b12)--(r14)
		(b12)--(r15)
		(b22)--(r11)
		(b22)--(r12)
		(b22)--(r13)
		(b22)--(r14)
		(b22)--(r15)
		(b22)--(r21)
		(b22)--(r22)
		(b22)--(r23)
		(b22)--(r24)
		(b22)--(r25)
		(b32)--(r21)
		(b32)--(r22)
		(b32)--(r23)
		(b32)--(r24)
		(b32)--(r25);
\end{tikzpicture} 	\caption{Two black nodes connected by the gadget $E$ from \refthm{AdditiveApproxHardness} with $m=3$ and $o=5$.}
	\label{fig:ApproxAdditive}
\end{figure*}

\begin{proof}[Proof of \refthm{AdditiveApproxHardness}]
	
	We recall the proof of \refthm{NPhard}.
	If the given 3-regular graph $G$ of order $n$ has a Hamiltonian cycle, then there is an alignment with an \mmc of 1 from $G$ to the $n$-nodes cycle $C_n$, so $\distp(G,C_n)\leq1$.
	Otherwise, any alignment has an \mmc of at least 3, so $\distp(G,C_n)\geq\boundp{3}$.
	We show that we can distinguish the two cases even if we can only compute $\distp$ up to a constant additive error $\varepsilon$.
	For this we modify $G$ and $C_n$ into colored graphs $G'$ and $C'_n$ (see \refdef{Color} in \refapx{approx}) so that $\cdistp(G',C'_n)+\varepsilon<\boundp{2m}$ for some $m$ if $G$ has a Hamiltonian cycle, and $\cdistp(G',C'_n)\geq\boundp{2m}$ otherwise.
	According to \reflem{ColorConversion} (see \refapx{approx}), we can transform $G'$ and $C'_n$ into the non-colored graphs $G''$ and $C''_n$ such that $\cdistp(G',C'_n)=\distp(G'',C''_n)$.
	Hence, we can decide \hamcycle using an approximation algorithm for \Distp with an additive error guarantee of $\varepsilon$.
	
	First, we define the colored gadget $E$ that is visualized in \reffig{ApproxAdditive} and used to replace edges in $G$ and $C_n$ as part of their modification.
	To construct $E$ we create two paths $p^1$ and $p^2$ with $m$ blue nodes each.
	Then, for $1\leq i<m$ we remove the path edges $p^1_ip^1_{i+1}$ and $p^2_ip^2_{i+1}$, add $o$ new red nodes to the graph, and connect each of them to $p^1_i$, $p^1_{i+1}$, $p^2_i$, and $p^2_{i+1}$.
	In total $E$ consists of $2m$ blue nodes and $(m-1)\cdot o$ red nodes.
	
	Now we modify the graphs $G$ and $C_n$.
	We assign the black color to all existing nodes and replace each edge $vw$ in $G$ and $C_n$ by a gadget $E$ connecting $v$ and $w$ as visualized in \reffig{ApproxAdditive}.
	Precisely, we remove the edge $vw$ and connect each blue node in $E$ to both $v$ and $w$.
	In addition, we add $n$ isolated gadgets to the graph $C_n$.
	The resulting graphs are denoted as $G'$ and $C'_n$ in the following.
	Note that both $G'$ and $C'_n$ contain exactly $3n$ gadgets.
	
	We choose $m$ so that $\boundp{2m}>\boundp{m}+\varepsilon$ and $o$ so that $\boundp{\lceil o/2\rceil}\geq\boundp{2m}$.
	Both $G'$ and $C'_n$ are well-defined because the function $\fboundp$ is strictly increasing according to \reflem{MMCDistinguish}.
	The order of $G'$ and $C'_n$ is polynomial in $n$ and the following claim holds for our construction.
\begin{claim}
		If $G$ has a Hamiltonian cycle then $\cdistp(G',C'_n)\leq\boundp{m}$ and otherwise $\cdistp(G',C'_n)\geq\boundp{2m}$.
		\begin{claimproof}
First we assume that $G$ has a Hamiltonian cycle.
			Then there is an alignment $\pi$ with an \mmc of 1 from $G$ to the $n$-nodes cycle $C_n$ as we showed in the proof of \refthm{NPhard}.
			We convert $\pi$ into an alignment $\pi'$ from $G'$ to $C'_n$ as follows.
			For each black node $v$ in $G'$ we set $v^{\pi'}\coloneqq v^\pi$ assuming that the original nodes of $G$ are not renamed within the modification to $G'$.
			Each gadget in $C'_n$ is mapped to a gadget in $G'$ as follows.
			If $\pi$ maps an edge $vw$ in $G$ to an edge $v^\pi w^\pi$ in $C_n$, then $\pi'$ maps the gadget that connects the nodes $v$ and $w$ in $G'$ to the gadget that connects the nodes $v^\pi$ and $w^\pi$ in $C'_n$ in the natural way.
			If $\pi$ maps an edge $vw$ in $G$ to a non-edge $v^\pi w^\pi$ in $C_n$, then $\pi'$ maps the gadget that connects the nodes $v$ and $w$ in $G'$ to an isolated gadget in $C'_n$ in the natural way.
			This is possible because the 3-regular graph $G$ has $n$ edges more than $C_n$ and there are $n$ isolated gadgets in $C'_n$.
			
			Each node in $G$ is adjacent to exactly one edge that is mapped to a non-edge.
			Hence, each black node in $G'$ has exactly $m$ blue mismatched neighbors.
			These blue nodes do not have any further mismatches themselves, so their \mc is 1.
			Using \reflem{MMCupperG} we conclude $\cdistp(G',C'_n)\leq\boundp{m}$.
			
			Now assume that $G$ has no Hamiltonian cycle and $\cdistp(G',C'_n)<\boundp{2m}$.
			Then there exists an alignment $\tau$ from $G'$ to $C'_n$ such that $\mu_p(G'^\tau-H')<\boundp{2m}$.
			First we show that $\tau$ has to preserve the structure of the gadgets.
			If $\tau$ maps two blue nodes $v$ and $w$ in $G$ that are connected to the same $o$ red nodes to two blue nodes $v^\tau$ and $w^\tau$ in $C'_n$ that are not connected to the same $o$ red nodes, then either $v$ or $w$ has a \mc of at least $\lceil o/2\rceil$, so $\mu_p(G'^\tau-H')\geq\boundp{\lceil o/2\rceil}\geq\boundp{2m}$ according to \reflem{MMClowerG}, which is a contradiction to our assumption.
			
			Let $\pi$ be the alignment from $G$ to $C_n$ such that $v^\pi=v^\tau$ for all nodes $v$ in $G$.
			Since every node in $G$ has degree 3 and every node in $C_n$ has degree 2, we know that one edge of each node in $G$ is mapped to a non-edge in $C_n$.
			This means for each node $v$ in $G'$ that the blue nodes of one gadget attached to $v$ are mapped to the blue nodes of a gadget in $C'_n$ that is not attached to $v^\pi$.
			Hence, every black node in $G'$ has at already $m$ $\tau$-mismatches.
			Since $G$ does not have a Hamiltonian cycle, there has to be one additional edge $vw$ in $G$ that is not mapped to an edge in $C_n$.
			This means that the gadget connecting $v$ and $w$ in $G'$ cannot be connected to both $v^\tau$ and $w^\tau$ in $C'_n$.
			Hence, either $v$ or $w$ has a $\tau$-\mc of at least $2m$, so $\mu_p(G'^\tau-H')\geq\boundp{2m}$ according to \reflem{MMClowerG}, which is a contradiction to our assumption.
		\end{claimproof}
	\end{claim}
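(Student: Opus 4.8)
The plan is to prove the two bounds separately, each time reducing to the per-node estimate of \reflem{MCG} and using the strict monotonicity \reflem{MMCDistinguish} to separate the two regimes; the parameter $o$ is tuned precisely so that any alignment scrambling a gadget is immediately penalized.

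First suppose $G$ has a Hamiltonian cycle. I would start from the alignment $\pi\colon V(G)\to V(C_n)$ of \mmc $1$ exhibited in the proof of \refthm{NPhard}, whose mismatch graph on the black nodes is a perfect matching, since each vertex of $G$ has exactly its single non-cycle edge mismatched. I lift $\pi$ to a color-preserving alignment $\pi'$ from $G'$ to $C'_n$: black nodes follow $\pi$; a gadget on an edge that $\pi$ preserves is mapped to the gadget on the corresponding edge of $C_n$; and a gadget on a mismatched edge is mapped to one of the isolated gadgets, which is feasible because their number matches the number of mismatched edges. Under $\pi'$ the only surviving mismatches are those between a black node and the $m$ blue nodes of the single gadget-path it can no longer reach, so the mismatch graph of $\pi'$ is a disjoint union of stars $K_{1,m}$ whose blue leaves have \mc $1$. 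By \reflem{MMR} and the second claim of \reflem{MCG} this gives $\mu_p({G'}^{\pi'}-C'_n)=\boundp{\fmmc(\pi')}=\boundp{m}$, and since $\pi'$ respects colors we conclude $\cdistp(G',C'_n)\le\boundp{m}$.

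For the converse I argue by contradiction: assume $G$ has no Hamiltonian cycle yet some color-preserving $\tau$ attains $\mu_p({G'}^{\tau}-C'_n)<\boundp{2m}$. The crucial first step is gadget rigidity. Any two blue nodes sharing the same bundle of $o$ red neighbors must be sent by $\tau$ to blue nodes that again share a common red bundle; otherwise, since the two share all $o$ of these red neighbors but their images do not, at least $\lceil o/2\rceil$ of these incidences are broken for one of the two nodes, giving it \mc $\ge\lceil o/2\rceil$ and hence $\mu_p\ge\boundp{\lceil o/2\rceil}\ge\boundp{2m}$ by \reflem{MCG}, a contradiction. Thus $\tau$ transports each attached gadget-path of $G'$ to a gadget-path of $C'_n$ as an indivisible block.

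Granting rigidity, let $\pi$ be the restriction of $\tau$ to the black nodes. Each black node carries three gadget-paths while its image in $C_n$ carries only two, so at least one path per black node lands on a gadget not attached to its image, forcing a baseline of at least $m$ mismatches at every black node. If no node exceeded this baseline, the edges $\pi$ preserves would give each vertex of $G$ exactly two preserved incident edges, i.e.\ a $2$-regular spanning subgraph of $C_n$, which can only be all of $C_n$ and hence a Hamiltonian cycle in $G$, contradicting the hypothesis. Therefore some black node loses a second gadget-path and accumulates at least $2m$ mismatches, which are distinct since they come from different gadgets, whence $\mu_p({G'}^{\tau}-C'_n)\ge\boundp{2m}$ by \reflem{MCG}. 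This is the desired contradiction. The main obstacle is the rigidity step: it is what licenses reading an honest black-node alignment off an arbitrary optimal $\tau$, and it relies on choosing $o$ large enough that breaking a gadget's red bundles already costs more than $\boundp{2m}$.
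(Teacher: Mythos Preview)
Your proposal is correct and follows essentially the same approach as the paper's proof. Both directions mirror the paper: lifting the Hamiltonian-cycle alignment to $\pi'$ by routing preserved-edge gadgets to their counterparts and mismatched-edge gadgets to isolated gadgets, then invoking the star bound; and for the converse, establishing gadget rigidity from the choice of $o$, reading off the induced black-node alignment, and using degree counting to force $\ge 2m$ mismatches at some vertex. The only cosmetic differences are that you phrase the no-Hamiltonian step as a contrapositive (baseline everywhere would yield a $2$-regular spanning subgraph, hence a Hamiltonian cycle) rather than asserting an extra unmapped edge directly, and you cite \reflem{MMR} and \reflem{MCG} in place of the wrapper lemmas \reflem{MMCupperG} and \reflem{MMClowerG}.
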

	
	Transforming $G'$ and $C'_n$ into the non-colored graphs $G''$ and $C''_n$ as described in \reflem{ColorConversion} leads to the following results.
	The order of $G''$ and $C''_n$ is polynomial in $n=\abs{V(G)}$.
	If $G$ does not have a Hamiltonian cycle, then $\distp(G'',C'')=\cdistp(G',C')\geq\boundp{2m}$.
	If $G$ has a Hamiltonian cycle then $\distp(G'',C'')=\cdistp(G',C')\leq\boundp{m}$, so $\distp(G'',C'')+\varepsilon\leq\boundp{m}+\varepsilon<\boundp{2m}$ using the condition for the choice of $m$.
	All in all we have shown that we can decide the \NP-hard problem \hamcycle on 3-regular graphs in polynomial time if we have a polynomial time approximation algorithm for \Distp with constant additive error.
\end{proof}

\begin{proof}[Proof of \refthm{ApproxAlgorithm}]
	Let $G$ and $H$ be graphs such that $H$ has maximum degree $d$.
	We consider the algorithm $\mathcal{A}$ that simply returns $\norm{A_G}_p+\norm{A_H}_p$.
	First, we show that $\mathcal{A}$ is correct using the triangle inequality:
	\begin{align*}
		\distp(G,H)
		=&\;\min_\pi\norm{A_G^\pi-A_H}_p\\
		\leq&\;\min_\pi\norm{A_G^\pi}_p+\norm{-A_H}_p\\
		=&\;\norm{A_G}_p+\norm{A_H}_p\\
		=&\;\mathcal{A}(G,H).
	\end{align*}
	Now we show that $\mathcal{A}$ has constant additive error. Choosing $\tau=\argmin_\pi\norm{A_G^\pi-A_H}$ we get			
	\begin{align*}
		\mathcal{A}(G,H)
		=&\;\norm{A_G}_p+\norm{A_H}_p\\
		=&\;\norm{A_G^\tau}_p+\norm{A_H}_p\\
		=&\;\norm{A_G^\tau-A_H+A_H}_p+\norm{A_H}_p\\
		\leq&\;\norm{A_G^\tau-A_H}_p+2\norm{A_H}_p\\
		\overset{(*)}{\leq}&\;\norm{A_G^\tau-A_H}_p+2d\\
		=&\;\min_\pi\norm{A_G^\pi-A_H}_p+2d\\
		=&\;\distp(G,H)+2d.
	\end{align*}
	Equation $(*)$ follows from the argument in the proof of \reflem{MCG} applied to $A_H$ instead of $A^\pi_G-A_H$.

	The approximation algorithm $\mathcal{A'}$ compares the maximum degree of $G$ to $d$.
	If they are equal, $\mathcal{A'}$ checks whether $G$ and $H$ are isomorphic.
	This can be done in polynomial time since both graphs have bounded degree \cite{Luks}.
	In the case of isomorphy we know that $\distp(G,H)=0$, so $\mathcal{A'}$ returns 0.
	Otherwise, $\mathcal{A'}$ returns $\mathcal{A}(G,H)$.
	We compute the following upper bound of the multiplicative error for the latter case:
	\begin{align}
		\frac{\mathcal{A}(G,H)}{\distp(G,H)}
		&\leq\frac{\distp(G,H)+2d}{\distp(G,H)}\\
		&=1+\frac{2d}{\distp(G,H)}\\
		&\leq1+d.\label{eq:MultiplicativeApproxAlgorithm1}
	\end{align}
	\refeq{MultiplicativeApproxAlgorithm1} follows from \reflem{MMClowerG} since $\boundp{c}\geq1$ for $c\geq1$ and any alignment from $G$ to $H$ has at least one mismatch if the graphs have a different maximum degree or the isomorphism test failed.
\end{proof}

\subsection{Missing Proofs from \refsec{CutNorm}}
\label{apx:cut}
\begin{proof}[Proof of Theorem~\ref{theo:cut}]
	The proof is done by reduction from the \NP-hard \maxcut problem on
	unweighted graphs \cite{Intractability}.
	First, we recall how Alan and Naor \cite{AlonNaor06} construct a matrix $A$ for any graph $G$ so that $\norm{A}_\square=MAXCUT(G)$. Orient $G$ in an arbitrary manner, let $V(G)=\{v_1,v_2,\dots,v_n\}$ and $E(G)=\{e_1,e_2,\dots,e_m\}$.
	Then $A$ is the $2m\times n$ matrix defined as follows.
	For each $1\leq i\leq m$, if $e_i$ is oriented from $v_j$ to $v_k$, then $A_{2i-1,j}=A_{2i,k}=1$ and $A_{2i-1,k}=A_{2i,j}=-1$.
	The rest of the entries in $A$ are all 0.

	Next, we observe that the matrix
	\begin{align*}
		B=\begin{pmatrix}
			0 & A\\
			A^T & 0\\
		\end{pmatrix}
	\end{align*}
	has the property $\norm{B}_\square=2\norm{A}_\square=2\cdot MAXCUT(G)$ since the cut norm is invariant under transposition and the two submatrices $A$, $A^T$ have no common rows or columns in $B$.
	
	We interpret $B$ as the adjacency matrix of a signed graph $\Delta'$ and construct the two unsigned graphs $F'\coloneqq(V(\Delta),E_+(\Delta))$, $H'\coloneqq(V(\Delta),E_-(\Delta))$.
	Then $\mu_\square(F'-H')=\mu_\square(\Delta')=\norm{B}_\square=2\cdot MAXCUT(G)$.
	Next, we modify $F'$, $H'$ into the graphs $F$, $H$ by adding $i\cdot\left(\left\lceil\frac{n^2}{4}\right\rceil+n\right)$ leaves to node $v_i$ for $1\leq i\leq n$.
The reduction follows from the claim that $\dist_\square(F,H)=2\cdot MAXCUT(G)$, which we prove in the following.
	
	Let $\pi$ be an alignment that maps $v_i$ to $v_i$ for $1\leq i\leq n$ and each leaf in $F$ to a leaf in $H$ so that its adjacency is preserved; let $\Delta$ be the mismatch graph of $\pi$.
	Then $E(\Delta)=E(\Delta')$ and therefore $\mu_\square(F^\pi-H)=2\cdot MAXCUT(G)$.
	We conclude $\dist_\square(F,H)\leq 2\cdot MAXCUT(G)$.
	
	It remains to show that no alignment can lead to a lower mismatch norm.
	First, let $\sigma$ be any alignment that maps $v_i$ to $v_i$ for $1\leq i\leq n$; let $\Lambda$ be the mismatch graph of $\sigma$.
	Then $\Lambda[\{v_1,\dots,v_n\}]=\Delta'$ and we get $\mu_\square(\Lambda)\geq 2\cdot MAXCUT(G)$ from \reflem{CutNormInducedSubgraph}.
	
	Conversely, let $\rho$ be any alignment that maps $v_i$ to $v_j$ for $i\neq j$ and $i,j\leq n$; let $\Gamma$ be the mismatch graph of $\rho$.
	The number of leaves adjacent to $v_i$ in $F$ and to $v_j$ in $H$ differs at least by $\left \lceil{\frac{n^2}{4}}\right \rceil+n$.
	Without restriction we can assume there are at least $l\coloneqq\left \lceil{\frac{n^2}{4}}\right \rceil$ leaves $w_1,\dots,w_l$ adjacent to $v_i$ in $F$ that are mismatched by $\rho$.
	Let $S\coloneqq\{v_i^\rho, w_1^\rho,\dots,w_l^\rho\}$.
	Then $\Gamma[S]$ has exactly $l$ edges all of which have the same sign.
	It is easy to see that $\mu_\square(\Gamma[S])=2l$ which implies $\mu_\square(\Gamma)\geq 2l$ according to \reflem{CutNormInducedSubgraph}.
	We chose $l$ so that $2l\geq\frac{n^2}{2}\geq 2\cdot MAXCUT(G)$.
	After considering all alignments, we get $\dist_\square(F,G)\geq 2\cdot MAXCUT(G)$. This proves our claim and therefore concludes the reduction.
\end{proof}


\begin{thebibliography}{10}

\bibitem{AlonNaor06}
Noga Alon and Assaf Naor.
\newblock Approximating the cut-norm via grothendieck's inequality.
\newblock {\em SIAM J. Comput.}, 35(4):787--803, apr 2006.
\newblock \href {https://doi.org/10.1137/S0097539704441629}
  {\path{doi:10.1137/S0097539704441629}}.

\bibitem{Arora2002}
S.~Arora, A.~Frieze, and H.~Kaplan.
\newblock A new rounding procedure for the assignment problem with applications
  to dense graph arrangement problems.
\newblock {\em Mathematical Programming}, 92(1):1--36, 2002.

\bibitem{ApproxGI}
V.~Arvind, J.~K{\"o}bler, S.~Kuhnert, and Y.~Vasudev.
\newblock Approximate graph isomorphism.
\newblock In {\em Proceedings of the 37th International Symposium on
  Mathematical Foundations of Computer Science (MFCS 2012)}, volume 7464, pages
  100--111, 2012.

\bibitem{Babai16}
L{\'{a}}szl{\'{o}} Babai.
\newblock Graph isomorphism in quasipolynomial time [extended abstract].
\newblock In Daniel Wichs and Yishay Mansour, editors, {\em Proceedings of the
  48th Annual {ACM} {SIGACT} Symposium on Theory of Computing, {STOC} 2016,
  Cambridge, MA, USA, June 18-21, 2016}, pages 684--697. {ACM}, 2016.
\newblock \href {https://doi.org/10.1145/2897518.2897542}
  {\path{doi:10.1145/2897518.2897542}}.

\bibitem{BarbeSGBG20}
Am{\'{e}}lie Barbe, Marc Sebban, Paulo Gon{\c{c}}alves, Pierre Borgnat, and
  R{\'{e}}mi Gribonval.
\newblock Graph diffusion wasserstein distances.
\newblock In Frank Hutter, Kristian Kersting, Jefrey Lijffijt, and Isabel
  Valera, editors, {\em Machine Learning and Knowledge Discovery in Databases -
  European Conference, {ECML} {PKDD} 2020, Ghent, Belgium, September 14-18,
  2020, Proceedings, Part {II}}, volume 12458 of {\em Lecture Notes in Computer
  Science}, pages 577--592. Springer, 2020.
\newblock URL: \url{https://doi.org/10.1007/978-3-030-67661-2\_34}, \href
  {https://doi.org/10.1007/978-3-030-67661-2_34}
  {\path{doi:10.1007/978-3-030-67661-2_34}}.

\bibitem{bun97}
H.~Bunke.
\newblock On a relation between graph edit distance and maximum common
  subgraph.
\newblock {\em Pattern Recognition Letters}, 18:689--694, 1997.

\bibitem{ThirtyYears}
D.~Conte, P.~Foggia, C.~Sansone, and M.~Vento.
\newblock Thirty years of graph matching in pattern recognition.
\newblock {\em International journal of pattern recognition and artificial
  intelligence}, 18(03):265--298, 05 2004.

\bibitem{emmdehshi16}
F.~Emmert-Streib, M.~Dehmer, and Y.~Shi.
\newblock Fifty years of graph matching, network alignment and network
  comparison.
\newblock {\em Information Sciences}, 346:180--197, 2016.

\bibitem{fogperven14}
P.~Foggia, G.~Percannella, and M.~Vento.
\newblock Graph matching and learning in pattern recognition in the last 10
  years.
\newblock {\em International Journal of Pattern Recognition and Artificial
  Intelligence}, 28(1), 2014.

\bibitem{Intractability}
M.~R. Garey and D.~S. Johnson.
\newblock {\em Computers and Intractability: A Guide to the Theory of
  NP-Completeness}.
\newblock Freeman, San Francisco, 1979.

\bibitem{gro20c}
M.~Grohe.
\newblock word2vec, node2vec, graph2vec, x2vec: Towards a theory of vector
  embeddings of structured data.
\newblock In D.~Suciu, Y.~Tao, and Z.~Wei, editors, {\em Proceedings of the
  39th {ACM} {SIGMOD-SIGACT-SIGART} Symposium on Principles of Database
  Systems}, pages 1--16, 2020.
\newblock \href {https://doi.org/10.1145/3375395.3387641}
  {\path{doi:10.1145/3375395.3387641}}.

\bibitem{Grohe}
M.~Grohe, G.~Rattan, and G.~J. Woeginger.
\newblock Graph similarity and approximate isomorphism.
\newblock In {\em 43rd International Symposium on Mathematical Foundations of
  Computer Science (MFCS 2018)}, volume 117, pages 20:1--20:16, 2018.

\bibitem{groschwe20}
Martin Grohe and Pascal Schweitzer.
\newblock The graph isomorphism problem.
\newblock {\em Communications of the ACM}, 63(11):128--134, 2020.
\newblock \href {https://doi.org/10.1145/3372123} {\path{doi:10.1145/3372123}}.

\bibitem{EstimatingTheMatrixPNorm}
N.~J. Higham.
\newblock Estimating the matrix p-norm.
\newblock {\em Numer. Math}, 62:511--538, 1992.

\bibitem{PerturbationTheoryForLinearOperators}
T.~Kato.
\newblock {\em {Perturbation Theory for Linear Operators; 2nd ed.}}
\newblock Grundlehren Math. Wiss. Springer, Berlin, 1976.

\bibitem{Keldenich}
P.~Keldenich.
\newblock Random robust graph isomorphism.
\newblock Master's thesis, Department of Computer Science, RWTH Aachen
  University, 2015.

\bibitem{SpectralRGI}
A.~Kolla, I.~Koutis, V.~Madan, and A.~K. Sinop.
\newblock {Spectrally Robust Graph Isomorphism}.
\newblock In {\em 45th International Colloquium on Automata, Languages, and
  Programming (ICALP 2018)}, volume 107, pages 84:1--84:13, 2018.

\bibitem{krijohmor19}
N.M. Kriege, F.D Johansson, and C.~Morris.
\newblock A survey on graph kernels.
\newblock {\em ArXiv}, arXiv:1903.11835 [cs.LG], 2019.

\bibitem{lin94}
C.-L. Lin.
\newblock Hardness of approximating graph transformation problem.
\newblock In D.-T. Du and X.-S. Zhang, editors, {\em Proceedings of the 5th
  International Symposium on Algorithms and Computation}, volume 834 of {\em
  Lecture Notes in Computer Science}, pages 74--82. Springer, 1994.

\bibitem{lov12}
L.~Lov{\'a}sz.
\newblock {\em Large Networks and Graph Limits}.
\newblock American Mathematical Society, 2012.

\bibitem{Lovasz12}
L{\'{a}}szl{\'{o}} Lov{\'{a}}sz.
\newblock {\em Large Networks and Graph Limits}, volume~60 of {\em Colloquium
  Publications}.
\newblock American Mathematical Society, 2012.
\newblock URL: \url{http://www.ams.org/bookstore-getitem/item=COLL-60}.

\bibitem{Luks}
E.~M. Luks.
\newblock Isomorphism of graphs of bounded valence can be tested in polynomial
  time.
\newblock {\em Journal of Computer and System Sciences}, 25(1):42 -- 65, 1982.

\bibitem{makmansvi14}
K.~Makarychev, R.~Manokaran, and M.~Sviridenko.
\newblock Maximum quadratic assignment problem: Reduction from maximum label
  cover and lp-based approximation algorithm.
\newblock {\em ACM Transactions on Algorithms}, 10(4):18, 2014.

\bibitem{MareticGCF19}
Hermina~Petric Maretic, Mireille~El Gheche, Giovanni Chierchia, and Pascal
  Frossard.
\newblock {GOT:} an optimal transport framework for graph comparison.
\newblock In Hanna~M. Wallach, Hugo Larochelle, Alina Beygelzimer, Florence
  d'Alch{\'{e}}{-}Buc, Emily~B. Fox, and Roman Garnett, editors, {\em Advances
  in Neural Information Processing Systems 32: Annual Conference on Neural
  Information Processing Systems 2019, NeurIPS 2019, December 8-14, 2019,
  Vancouver, BC, Canada}, pages 13876--13887, 2019.
\newblock URL:
  \url{https://proceedings.neurips.cc/paper/2019/hash/fdd5b16fc8134339089ef25b3cf0e588-Abstract.html}.

\bibitem{SubtreeIsomorphism}
D.~W. Matula.
\newblock Subtree isomorphism in $o(n^{5/2})$.
\newblock In {\em Algorithmic Aspects of Combinatorics}, volume~2, pages 91 --
  106. 1978.

\bibitem{Memoli11}
Facundo M{\'{e}}moli.
\newblock Gromov-wasserstein distances and the metric approach to object
  matching.
\newblock {\em Found. Comput. Math.}, 11(4):417--487, 2011.
\newblock \href {https://doi.org/10.1007/s10208-011-9093-5}
  {\path{doi:10.1007/s10208-011-9093-5}}.

\bibitem{nagsvi09}
V.~Nagarajan and M.~Sviridenko.
\newblock On the maximum quadratic assignment problem.
\newblock In {\em Proceedings of the twentieth Annual ACM-SIAM Symposium on
  Discrete Algorithms}, pages 516--524, 2009.

\bibitem{HardnessRGI}
R.~O'Donnell, J.~Wright, C.~Wu, and Y.~Zhou.
\newblock Hardness of robust graph isomorphism, lasserre gaps, and asymmetry of
  random graphs.
\newblock In {\em Proceedings of the 25th Annual ACM-SIAM Symposium on Discrete
  Algorithms}, pages 1659--1677, 2014.

\bibitem{san12}
D.~Sangiorgi.
\newblock {\em Introduction to Bisimulation and Coinduction}.
\newblock Cambridge University Press, 2012.

\bibitem{tstmotkar+18}
A.~Tsitsulin, D.~Mottin, P.~Karras, A.~Bronstein, and E.~M{\"u}ller.
\newblock Netlsd: Hearing the shape of a graph.
\newblock In Y.~Guo and F.~Farooq, editors, {\em Proceedings of the 24th ACM
  SIGKDD International Conference on Knowledge Discovery and Data Mining},
  pages 2347--2356, 2018.

\bibitem{ume88}
S.~Umeyama.
\newblock An eigendecomposition approach to weighted graph matching problems.
\newblock {\em IEEE Transactions on Pattern Analysis and Machine Intelligence},
  10(5):695--703, 1988.

\bibitem{gla01}
R.J. van Glabbeek.
\newblock The linear time - branching time spectrum {I}. {T}he semantics of
  concrete, sequential processes.
\newblock In J.A. Bergstra, A.~Ponse, and S.A. Smolka, editors, {\em Handbook
  of Process Algebra}, chapter~1, pages 3--99. Elsevier, 2001.

\end{thebibliography}
\end{document}